\documentclass[11pt,twoside]{article}

\usepackage[margin=1in]{geometry}
\usepackage[T1]{fontenc}
\usepackage{amsmath,amssymb,amsthm,mathtools,bm}
\usepackage{booktabs}
\usepackage{enumitem}
\usepackage[hidelinks]{hyperref}
\hypersetup{
    colorlinks=true,
    citecolor=blue,
    filecolor=black,
    linkcolor=red,
    urlcolor=blue,
    bookmarksopen=true,
    pdfstartview=FitH
}
\usepackage{comment}
\usepackage{natbib}
\usepackage{xcolor}
\usepackage{tikz,etoolbox,float,needspace,framed}
\usepackage[normalem]{ulem}
\usepackage{hyperref}
\usepackage{algorithm,algpseudocode}

\usepackage{setspace} 
\usepackage{fancyhdr}
\allowdisplaybreaks[4]
\setlist{nosep,leftmargin=2em}

\newtheorem{theorem}{Theorem}[section]
\newtheorem{proposition}[theorem]{Proposition}
\newtheorem{lemma}[theorem]{Lemma}
\newtheorem{example}[theorem]{Example}
\newtheorem{corollary}[theorem]{Corollary}
\theoremstyle{definition}
\newtheorem{assumption}[theorem]{Assumption}
\newtheorem{definition}[theorem]{Definition}
\theoremstyle{remark}
\newtheorem{remark}[theorem]{Remark}
\renewcommand{\(}{\left(}
\renewcommand{\)}{\right)}
\newcommand{\R}{\mathbb{R}}
\newcommand{\E}{\mathbb{E}}
\newcommand{\Pp}{\mathbb{P}}
\newcommand{\1}{\mathbf{1}}
\newcommand{\Sn}{\mathfrak{S}}
\newcommand{\Tcal}{\mathcal{T}}

\newcommand{\Var}{\operatorname{Var}}
\newcommand{\norm}[1]{\left\lVert #1\right\rVert}
\newcommand{\abs}[1]{\left\lvert #1\right\rvert}
\newcommand{\argmax}{\operatorname*{arg\,max}}

\newcommand{\dd}{\,\mathrm{d}}
\newcommand{\Un}{\mathbb{U}_n}
\newcommand{\Sym}{\mathsf{S}}
\newcommand{\Mcal}{\mathcal{M}}

\newcommand{\va}{\boldsymbol a}

\newcommand{\vl}{\boldsymbol\ell}
\newcommand{\vs}{\boldsymbol s}
\newcommand{\vi}{\boldsymbol i}
\newcommand{\vv}{\boldsymbol v}
\newcommand{\vu}{\boldsymbol u}

\providecommand{\conv}{\operatorname{conv}}
\providecommand{\Vol}{\operatorname{Vol}}
\providecommand{\cH}{\mathcal H}

\renewcommand{\qed}{\hfill \mbox{\raggedright \rule{0.08in}{0.08in}}} 
 
\renewenvironment{proof}[1][\proofname]{{\noindent\sc#1.}}{\qed\vspace{15pt}} 

\title{\bf\sc\Large Optimal Allocation and Volume under Surface}
\author{Kai Feng\thanks{This paper originated from ideas independently conceived by the first author.}\thanks{School of Finance, Renmin University of China. Email: kfeng@ruc.edu.cn.} \and 
Han Hong\thanks{Department of Economics, Stanford University. Email: doubleh@stanford.edu.} \and 
Jessie Li\thanks{Department of Economics, University of California Santa Cruz. Email: jeqli@ucsc.edu} \and 
Wenshi Wei\thanks{School of Economics and Management, Beijing Jiaotong University. Email: wswei@bjtu.edu.cn}}
\date{\today}

\begin{document}
\maketitle

\begin{abstract}
\noindent {\sc Abstract.}
This paper develops a framework for estimation and inference on the volumes of sets that are projections of critical function sets, focusing particularly on the convex body beneath the optimal receiver operating characteristic (ROC) surface. 
Specifically, we propose a volume calculation method that first uses an Aumann expectation representation and then applies Minkowski mixed volumes. 
Using this framework, we show that the population volume under the ROC surface (VUS) is proportional to the expectation of a symmetric U-statistic kernel. 
We then propose a double/debiased machine learning estimator of the VUS, derive its asymptotic properties, and develop an inference procedure. 
Further applications of this framework include an analysis of the feasible error set across pre-defined groups and a natural generalization of the Gini coefficient for measuring inequality. 

\vspace{15pt}
\noindent {\sc Keywords}: Volume under the Surface, ROC Surface, Polytope Valued Random Variables, Minkowski Mixed Volume, Double/Debiased machine learning U-statistics, Gini Coefficient 

\vspace{15pt}
\noindent {\sc JEL Codes}: C14, C21, C44
\end{abstract}

\section{Introduction}

Individual-level treatments allocation in light of heterogeneous characteristics is central to problems concerning policy welfare and fairness \citep{athey2021policy,liang2026algorithm}. 
Classification problems, in which outcomes are fully observed and normalized to binary values, constitute an importance special case of general optimal allocation problems. 
They are also of independent interest given the growing role of artificial intelligence in decision-making. 
Although statistical methods for binary classification are well established, 
there is still a lack of principled criteria for evaluating multi-class classifiers.

This paper provides an explicit formulation, estimators and inference procedures for the volume under ROC surfaces with $K \geq 2$ classes. 
The (optimal) ROC surface can be represented as the value function of a constrained linear programming problem over the space of critical functions allowing for randomized assignment rules. 
Our key observation is that the convex body beneath this surface is the Aumann expectation of a random polytope. 
Let $P_k$ denote the feature distribution in class $k$, $P$ the corresponding population mixture, and $\mathfrak S_K$ the set of permutations of ${1,\ldots,K}$. For independent draws $X_1,\ldots,X_K\sim P$, we use Minkowski mixed volumes to obtain
\begin{align}\begin{split}\label{VUS main formula}
\mathrm{VUS}=\frac1{K!}
\E\left[\max_{\pi\in\mathfrak S_K}
\prod_{r=1}^K q_{\pi(r)}(X_r)\right], 
\end{split}\end{align}
where $q_k=\frac{dP_k}{dP}$. 



Since the fundamental works by \cite{manski1989anatomy} and \cite{manski1990nonparametric}, 
set identified models has been largely advanced and popularized. 
\cite{chernozhukov2007estimation} proposes confidence sets based on criterion functions that cover the true identification region with probability 
converging to a given level. 
\cite{beresteanu2008asymptotic} study estimation and inference procedures for a population identification region that can be written as a transformation of 
the Aumann expectation of a properly defined set-valued random variable. 
Unlike in the aforementioned works, the parameter of interest in this paper is the volume functional of certain sets, rather than the sets themselves.

Building on the VUS formula \eqref{VUS main formula}, we then propose a double/debiased machine learning estimator of the population VUS \citep{chernozhukov2018double,chernozhukov2022locally}.
Using an approximately optimal plug-in policy similar to those in \cite{feng2026statistical} and a refined margin assumption inspired by \cite{escanciano2026debiased}, we show that our estimator converges to the population VUS at essentially the $1/\sqrt{n}$ rate, with an asymptotically linear representation.\footnote{The margin assumptions in this paper is different from, for instance, those used in \cite{kitagawa2018should} and \cite{feng2026statistical}. 
We intentionally employ the ``excluding center'' type of margin assumption from works such as \cite{audibert2007fast} and \cite{escanciano2026debiased}. See section \ref{refined margin assumption}.}
We also propose a cross-validation scheme for $K$th-order U-statistics to achieve estimation efficiency. 

The volume under the surface (VUS) is commonly used as the three-class generalization of the AUC, although the interpretation of the VUS has been too difficult to formalize. 
Several papers (see e.g. \cite{scurfield1996multiple}, \cite{scurfield1998generalization}, \cite{mossman1999three}, \cite{dreiseitl2000comparing}, \cite{nakas2004ordered}, \cite{li2008roc}, \cite{he2008meaning}, \cite{wu2013optimal}, and \cite{wu2016roc}) have studied the meaning of the VUS for ordinal classification problems where a ranking among the different classes is determined based on a decision rule using an ordered series of cutoffs. 
However, to our knowledge, no paper has examined the meaning of the VUS in more general classification problems where there isn't necessarily a ranking among the different classes.

Our volume calculation method can also be applied to other problems in which the sets of interest can be written as projections of critical function sets. 
We give two examples.
The first is the volume of the feasible error set in \cite{liang2026algorithm}, and the second is a generalization of the Gini coefficient closely related to the Lorenz zonoid in \cite{koshevoy1996lorenz}. 
In these two applications, the volumes of the sets serve as measures of informativeness and inequality, respectively.
Recently, \cite{heikkuri2026subgroup} derives a decomposition of the Gini coefficient into within- and between-group components using the Brunn–Minkowski theorem. Mathematically, Minkowski mixed volumes and the Brunn–Minkowski theorem are both part of the Brunn–Minkowski theory of convex bodies, see \cite{Schneider2014}.

The rest of the paper is organized as follows.
Section \ref{population VUS section} develops the main Aumann expectation--mixed volume method in detail.
Section \ref{sample VUS section} proposes the double/debiased estimator for the VUS and establishes its convergence rates and inference results.
Section \ref{more examples section} provides the feasible error set and the generalized Gini coefficient examples. 
Section \ref{conclusion section} offers conclusions.

\section{Volume under the Receiver Operating Characteristic Surface}
\label{sec:roc-geometric-representation}\label{population VUS section}

In this section, we introduce a geometric framework for computing the volume under the multi-class ROC frontier.
The central object is the ROC body $\cH$, the region below the frontier, which collects the accuracy requirements generated by allocation rules using the observed features.
Combining an Aumann representation of $\cH$ with mixed-volume methods yields the explicit volume formula in Theorem \ref{thm:volume-formula}.
This formula expresses $\Vol_K(\cH)$ in terms of an expected maximum over class assignments and permits its calculation without reconstructing the frontier.
We also compare this framework with the Lagrangian approach and explain how the volume reflects the information that the observed features provide about class labels.

\subsection{Allocation rules and the ROC body}
\label{subsec:allocation-roc-body}

Let $O=(X,C)$ denote an observation, where $C\in[K]\coloneqq\{1,\ldots,K\}$ is its class label and $K\geq2$ is the number of classes.
The observed feature vector $X$ takes values in the measurable space $\mathcal X$, and $x\in\mathcal X$ denotes a realization.
Write $\rho_k=\Pp(C=k)>0$ for the class probabilities, $P_k$ for the conditional distribution of $X$ given $C=k$, and $P$ for the marginal distribution of $X$.
The law of total probability gives
\begin{align}\begin{split}\nonumber
P=\sum_{k=1}^K\rho_kP_k,
\quad
\sum_{k=1}^K\rho_k=1.
\end{split}\end{align}
Because every $\rho_k$ is positive, a set of feature values with zero probability under $P$ also has zero probability under $P_k$.
Thus, $P_k$ is absolutely continuous with respect to $P$, and we define the \emph{class density ratio} by
\begin{align}\begin{split}\nonumber
q_k=\frac{\dd P_k}{\dd P}.
\end{split}\end{align}
Write $p_k(x)\coloneqq\Pp(C=k\mid X=x)$ for the conditional probability of class $k$.
Then $p_k(x)=\rho_kq_k(x)$, or equivalently $q_k(x)=p_k(x)/\rho_k$.
Let $p=(p_1,\ldots,p_K)$ denote the vector of conditional class probabilities and $q=(q_1,\ldots,q_K)$ the vector of class density ratios.
The Radon--Nikodym derivative $q_k$ expresses integration under $P_k$ as weighted integration under $P$.
For every measurable $g:\mathcal X\to\R$ satisfying $\E_P[q_k(X)|g(X)|]<\infty$,
\begin{align}\begin{split}\label{eq:class-change-of-measure}
\E[g(X)\mid C=k]
=\int_{\mathcal X}g(x)\dd P_k(x)
=\int_{\mathcal X}q_k(x)g(x)\dd P(x)
=\E_P[q_k(X)g(X)].
\end{split}\end{align}

The class density ratios $q_k$ satisfy the following identities and bounds: 
\begin{align}\begin{split}\label{eq:q-properties}
\E_P[q_k(X)]=1,
\quad
\sum_{k=1}^K\rho_kq_k(x)=1,
\quad
0\leq q_k(x)\leq\rho_k^{-1}.
\end{split}\end{align}
Unless otherwise stated, equalities and inequalities between functions of $x$ are understood to hold $P$-almost everywhere.

Write $\R_+^K=[0,\infty)^K$ and let $\Delta^{K-1}\coloneqq\{a\in\R_+^K:\sum_{k=1}^Ka_k=1\}$ denote the probability simplex.

\begin{definition}[Randomized allocation rule]
\label{def:allocation-rule}
A \emph{randomized allocation rule} is a measurable map $\phi=(\phi_1,\ldots,\phi_K):\mathcal X\to\Delta^{K-1}$.
The rule $\phi$ uses only the observed features $X$ and, conditional on $X=x$, assigns class $k$ with probability $\phi_k(x)$.
\end{definition}

\begin{definition}[True-positive rates]
\label{def:true-positive-rates}
For a randomized allocation rule $\phi$, the \emph{class-$k$ true-positive rate} is $T_k(\phi)\coloneqq\E[\phi_k(X)\mid C=k]$, the probability of assigning an observation to class $k$ conditional on its true class being $k$.
The corresponding true-positive rate vector is $T(\phi)=(T_1(\phi),\ldots,T_K(\phi))$.
\end{definition}

By the change of measure in \eqref{eq:class-change-of-measure}, the class-$k$ true-positive rate $T_k(\phi)$ admits the representation
\begin{align}\begin{split}\label{eq:tp-rate-representation}
T_k(\phi)
&=\E[\phi_k(X)\mid C=k]
=\int_{\mathcal X}\phi_k(x)\dd P_k(x) \\
&=\int_{\mathcal X}q_k(x)\phi_k(x)\dd P(x)
=\E_P[q_k(X)\phi_k(X)].
\end{split}\end{align}

To describe the tradeoffs among the rates in $T(\phi)$, we maximize the class-1 rate subject to minimum accuracy requirements for the other classes.
For $k=2,\ldots,K$, let $t_k\in[0,1]$ denote the minimum required true-positive rate for class $k$.
The feasible set $\Lambda$ consists of the requirement vectors $t=(t_2,\ldots,t_K)$ for which there exists an allocation rule meeting all the requirements.
\begin{align}\begin{split}\nonumber
\Lambda
\coloneqq\left\{
t\in[0,1]^{K-1}:
T_k(\phi)\geq t_k,\ k=2,\ldots,K,
\text{ for some allocation rule }\phi
\right\}.
\end{split}\end{align}
For each $t\in\Lambda$, the greatest class-1 rate achievable under these requirements is
\begin{align}\begin{split}\label{eq:constrained-frontier}
\beta_1(t)
\coloneqq\max_\phi\left\{T_1(\phi):T_k(\phi)\geq t_k,\ k=2,\ldots,K\right\}.
\end{split}\end{align}
For every $t\in\Lambda$, an optimal allocation rule with value $\beta_1(t)$ exists, which will be proved in Proposition \ref{prop:tpr-region}.
As $t$ varies over $\Lambda$, these optimal values define the \emph{upper receiver operating characteristic (ROC) frontier}, the graph $\{(\beta_1(t),t):t\in\Lambda\}$.

\begin{definition}[ROC body]
\label{def:roc-body}
The \emph{ROC body} is the region $\cH\coloneqq\{(u_1,t):t\in\Lambda,\ 0\leq u_1\leq\beta_1(t)\}$ below the upper ROC frontier.
Its \emph{volume under the ROC surface} is $\Vol_K(\cH)$, where $\Vol_K$ denotes $K$-dimensional Lebesgue volume.
\end{definition}

The ROC body $\cH$ can also be described through the true-positive rate vectors $T(\phi)$ achieved by allocation rules.
For $u=(u_1,t)\in\cH$, let $\phi$ be an optimal allocation rule for the problem in \eqref{eq:constrained-frontier}.
It satisfies
\begin{align}\begin{split}\nonumber
T_1(\phi)=\beta_1(t)\geq u_1,
\quad
T_k(\phi)\geq t_k=u_k,\quad k=2,\ldots,K.
\end{split}\end{align}
Conversely, suppose $u\geq0$ and an allocation rule satisfies $T_k(\phi)\geq u_k$ for every $k$.
Then $t=(u_2,\ldots,u_K)\in\Lambda$ and
\begin{align}\begin{split}\nonumber
0\leq u_1\leq T_1(\phi)\leq\beta_1(t),
\end{split}\end{align}
so $u\in\cH$.
The two implications identify the ROC body $\cH$ with the set of feasible nonnegative requirement vectors $u$, giving
\begin{align}\begin{split}\label{eq:roc-body-feasibility}
\cH=\left\{u\in\R_+^K\mid u\leq T(\phi)\text{ for some allocation rule }\phi\right\}.
\end{split}\end{align}
The inequality $u\leq T(\phi)$ is understood coordinate-wise.

\subsection{Aumann representations of the ROC body}
\label{subsec:aumann-tpr-region}

To analyze the volume of the ROC body $\cH$, we first introduce the true-positive-rate region $\Gamma$, defined as follows.

\begin{definition}[True-positive-rate region]
\label{def:true-positive-rate-region}
The \emph{true-positive-rate region} $\Gamma\subseteq[0,1]^K$ is the set of all true-positive rate vectors $T(\phi)$ obtained as the randomized allocation rule $\phi$ varies.
\end{definition}

\begin{remark}[Requirements and achieved rates]
\label{rem:requirements-achieved-rates}
By \eqref{eq:roc-body-feasibility}, the condition $u\in\cH$ requires an allocation rule satisfying $T_k(\phi)\geq u_k$ for every class $k$, whereas $u\in\Gamma$ requires an allocation rule satisfying $T(\phi)=u$.
A feasible requirement vector $u\in\cH$ therefore need not equal $T(\phi)$ for any allocation rule $\phi$.
The ROC body $\cH$ is downward closed because, whenever $u\in\cH$ and $0\leq u'\leq u$, an allocation rule $\phi$ with $u\leq T(\phi)$ also satisfies $u'\leq T(\phi)$, and hence $u'\in\cH$.
\end{remark}

For each $t\in\Lambda$, the frontier value $\beta_1(t)$ in \eqref{eq:constrained-frontier} is the largest first coordinate $T_1(\phi)$ among the true-positive rate vectors $T(\phi)\in\Gamma$ satisfying $T_k(\phi)\geq t_k$ for $k=2,\ldots,K$.
Figure \ref{fig:roc-construction} illustrates this construction for $K=2$.

\begin{figure}[htbp!]
\centering
\begin{tikzpicture}[x=4cm,y=3.5cm,>=stealth,font=\small]
\begin{scope}
\fill[blue!10] (0,1) -- (0.8,0.8) -- (1,0) -- (0.2,0.2) -- cycle;
\fill[blue!28] (0.6,0.85) -- (0.8,0.8) -- (1,0) -- (0.6,0.1) -- cycle;
\draw[blue!65!black,thick] (0,1) -- (0.8,0.8) -- (1,0) -- (0.2,0.2) -- cycle;
\draw[->] (-0.035,0) -- (1.12,0) node[right] {$T_2(\phi)$};
\draw[->] (0,-0.035) -- (0,1.12) node[above] {$T_1(\phi)$};
\draw[densely dashed] (0.6,0) -- (0.6,0.85) -- (0,0.85);
\fill[blue!75!black] (0.6,0.85) circle (1.4pt);
\node[below left] at (0,0) {$0$};
\node[below] at (1,0) {$1$};
\node[left] at (0,1) {$1$};
\node[below] at (0.6,0) {$t_2$};
\node[left] at (0,0.85) {$\beta_1(t_2)$};
\node[blue!65!black] at (0.31,0.60) {$\Gamma$};
\node[font=\scriptsize,align=center] at (0.79,0.43) {$T_2(\phi)\geq t_2$};
\node[anchor=north] at (0.5,-0.25) {(a) Fixing $t_2$};
\end{scope}
\begin{scope}[xshift=6cm]
\draw[->] (-0.035,0) -- (1.12,0) node[right] {$t_2$};
\draw[->] (0,-0.035) -- (0,1.12) node[above] {$\beta_1(t_2)$};
\draw[blue!65!black,very thick] (0,1) -- (0.8,0.8) -- (1,0);
\draw[densely dashed] (0.6,0) -- (0.6,0.85) -- (0,0.85);
\fill[blue!75!black] (0.6,0.85) circle (1.4pt);
\node[below left] at (0,0) {$0$};
\node[below] at (1,0) {$1$};
\node[left] at (0,1) {$1$};
\node[below] at (0.6,0) {$t_2$};
\node[left] at (0,0.85) {$\beta_1(t_2)$};
\node[blue!65!black] at (0.47,0.57) {ROC frontier};
\node[anchor=north] at (0.5,-0.25) {(b) Varying $t_2$};
\end{scope}
\end{tikzpicture}
\caption{Construction of the upper ROC frontier}
\label{fig:roc-construction}
\begin{minipage}{0.94\linewidth}
\footnotesize
\textit{Note.} The feature takes two values with class-conditional probabilities $(0.8,0.2)$ and $(0.2,0.8)$.
The true-positive-rate region $\Gamma$ in Panel (a) has vertices $(0,1),(0.8,0.8),(1,0),(0.2,0.2)$, with coordinates ordered as $(T_2,T_1)$.
Its darker part satisfies $T_2(\phi)\geq t_2$.
The highest point in this part gives $\beta_1(t_2)$, and Panel (b) records these maxima as $t_2$ varies over $\Lambda=[0,1]$.
\end{minipage}
\end{figure}


To describe the true-positive-rate region $\Gamma$ underlying this frontier construction, we return to the integral representation of each rate $T_k(\phi)$ in \eqref{eq:tp-rate-representation}.
For each allocation rule $\phi$, collect the integrands with respect to $P$ into the vector 
\begin{align}\begin{split}\nonumber
f_\phi(x)\coloneqq(q_1(x)\phi_1(x),\ldots,q_K(x)\phi_K(x)). 
\end{split}\end{align}
Taking expectations coordinatewise gives
\begin{align}\begin{split}\label{eq:weighted-allocation-selection}
T(\phi)=\bigl(\E_P[q_1(X)\phi_1(X)],\ldots,\E_P[q_K(X)\phi_K(X)]\bigr)=\E_P[f_\phi(X)].
\end{split}\end{align}
Hence Definition \ref{def:true-positive-rate-region} gives
\begin{align}\begin{split}\nonumber
\Gamma=\left\{\E_P[f_\phi(X)]:\phi\text{ is a randomized allocation rule}\right\}.
\end{split}\end{align}

At each $x$, the possible values of the integrand $f_\phi(x)$ form the set
\begin{align}\begin{split}\label{eq:pointwise-selection-set}
F(x)
\coloneqq\left\{(q_1(x)a_1,\ldots,q_K(x)a_K):a\in\Delta^{K-1}\right\}.
\end{split}\end{align}

Each $F(x)$ is compact and convex, being a linear image of the probability simplex $\Delta^{K-1}$.
Aumann integration allows us to represent the true-positive-rate region $\Gamma$ as an integral of $F$ and transfer the compactness and convexity of the sets $F(x)$ to $\Gamma$.
To obtain this representation, we consider measurable selections of $F$.
A measurable function $f:\mathcal X\to\R^K$ satisfying $f(x)\in F(x)$ for $P$-almost every $x$ is called a \emph{measurable selection} of $F$.
For each allocation rule $\phi$, the function $f_\phi$ is therefore a measurable selection of $F$.

A measurable selection $f$ is \emph{integrable} if $\E_P[\lVert f(X)\rVert]<\infty$.
For an integrable selection $f=(f_1,\ldots,f_K)$, integration is coordinate-wise,
\begin{align}\begin{split}\label{eq:selection-vector-integral}
\E_P[f(X)] = \int_{\mathcal X}f(x)\dd P(x)
=\left(\int_{\mathcal X}f_1(x)\dd P(x),\ldots,\int_{\mathcal X}f_K(x)\dd P(x)\right).
\end{split}\end{align}

Collecting the expectations $\E_P[f(X)]$ of all integrable measurable selections $f$ of $F$ gives the set-valued integral introduced by \citet{Aumann1965}. 

\begin{definition}[Aumann integral and Aumann expectation]
\label{def:aumann-expectation}
For a set-valued function $F$ with $F(x)\subseteq\R^K$ for every $x\in\mathcal X$, the \emph{Aumann integral} $\int_{\mathcal X}F(x)\dd P(x)$ is the set of vectors $\E_P[f(X)]$ obtained as $f$ ranges over all integrable measurable selections of $F$.
Its \emph{Aumann expectation} is $\E_A[F(X)]\coloneqq\operatorname{cl}(\int_{\mathcal X}F(x)\dd P(x))$, where $\operatorname{cl}$ denotes closure.
\end{definition}

A set-valued function $F$ is \emph{measurable} if $\{x\in\mathcal X:F(x)\cap O\neq\varnothing\}$ is measurable for every open set $O\subseteq\R^K$.
The set-valued function $F$ is \emph{integrably bounded} if $\E_P[\sup_{y\in F(X)}\lVert y\rVert]<\infty$.
For an integrably bounded $F$, every measurable selection is integrable because
\begin{align}\begin{split}\nonumber
\E_P[\lVert f(X)\rVert]
\leq\E_P\!\left[\sup_{y\in F(X)}\lVert y\rVert\right]
<\infty.
\end{split}\end{align}

The following theorem shows that compactness and convexity of the sets $F(x)$ carry over to the Aumann integral of $F$, provided $F$ is measurable and integrably bounded.
\begin{theorem}
\label{thm:aumann-compactness}
Let $F$ be a measurable, integrably bounded set-valued function with nonempty compact values in $\R^K$.
Then its Aumann integral is compact and
\begin{align}\begin{split}\nonumber
\int_{\mathcal X}F(x)\dd P(x)=\E_A[F(X)].
\end{split}\end{align}
If $F(x)$ is also convex almost everywhere, this set is convex.
\end{theorem}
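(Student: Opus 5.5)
The plan is to prove the two assertions separately, compactness first and then convexity. Throughout, write $I(F)\coloneqq\int_{\mathcal X}F(x)\dd P(x)$ and $h(x)\coloneqq\sup_{y\in F(x)}\lVert y\rVert$. By assumption $h$ is integrable. Measurability of $F$ with nonempty closed values lets us invoke the Kuratowski--Ryll-Nardzewski theorem, so at least one measurable selection exists and $I(F)\neq\varnothing$. Each selection satisfies $\lVert f\rVert\le h$, hence $\lVert\E_P f\rVert\le\E_P h$, and $I(F)$ is bounded. It therefore suffices to show that $I(F)$ is closed; the identity $I(F)=\E_A[F(X)]$ then follows immediately from Definition \ref{def:aumann-expectation}.

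For closedness, take selections $f_n$ with $\E_P f_n\to z$. The family $\{f_n\}$ is dominated by the integrable function $h$, so it is uniformly integrable in $L^1(P;\R^K)$. By Dunford--Pettis, a subsequence converges weakly in $L^1$ to some $f$, and $\E_P f=z$ because $g\mapsto\E_P g$ is weakly continuous. The difficulty is that $f$ need not be a selection of $F$ when $F$ is not convex; it is only guaranteed to be a selection of $\overline{\operatorname{conv}}F$, since Mazur's lemma gives convex combinations of the $f_n$ converging strongly and hence a.e. This is the main obstacle. I would handle it with Lyapunov's convexity theorem in the form of the Aumann/Richter result: when $P$ is nonatomic, $I(F)=I(\operatorname{conv}F)$. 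The standard route approximates a selection of $\operatorname{conv}F$ by a Carathéodory decomposition $f=\sum_{j\le K+1}\lambda_j g_j$ with measurable $\lambda_j$ and selections $g_j$ of $F$, then applies Lyapunov to the vector measure $A\mapsto(\int_A g_j\dd P)_j$ to find a partition $\{A_j\}$ with $\int_{A_j}g_j=\int\lambda_jg_j$. Setting $\tilde f=\sum_j g_j\1_{A_j}$ gives a selection of $F$ with $\E_P\tilde f=z$. Since $\operatorname{conv}F(x)$ is compact, $\overline{\operatorname{conv}}F=\operatorname{conv}F$, so $z\in I(F)$.

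If $P$ has atoms, I would decompose $P$ into a nonatomic part and countably many atoms $x_i$. On the atomic part the integral is $\sum_i P(\{x_i\})F(x_i)$, a countable Minkowski sum of compact sets dominated by $\sum_i P(\{x_i\})h(x_i)<\infty$. This sum is compact by a diagonal/Tychonoff argument, and the Minkowski sum of the two compact pieces is compact. In the application of interest, $F(x)$ is itself convex, and then no Lyapunov step is needed: the weak limit $f$ takes values in $F(x)$ a.e. because closed convex-valued selection sets are weakly closed. I would present this simpler case in full and cite \citet{Aumann1965} for the general nonconvex case.

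Convexity when the $F(x)$ are convex is direct. For selections $f,g$ and $\lambda\in[0,1]$, the function $\lambda f+(1-\lambda)g$ is measurable and is a selection of $F$ almost everywhere, and its expectation is $\lambda\E_P f+(1-\lambda)\E_P g$.
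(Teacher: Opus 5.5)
Your argument is correct, but it follows a genuinely different route from the paper. The paper does not prove compactness itself. It cites \citet[Theorem 2.1.38]{Molchanov2017} for compactness of the Aumann integral of an integrably bounded compact-valued random set. It then notes that a compact set equals its closure, which gives $\int_{\mathcal X}F\dd P=\E_A[F(X)]$, and it gives exactly your two-line convexity argument. You instead derive compactness from boundedness plus closedness, using Dunford--Pettis and Mazur, with Lyapunov's theorem on the nonatomic part and an atomic decomposition to handle nonconvex values. What your route buys is a self-contained proof in the convex-valued case, which is the only case the paper ever uses: the sets $F(x)$ in \eqref{eq:pointwise-selection-set}, $D(x)$, and $[0,a(x)]$ are all convex. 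In that case your argument needs neither Lyapunov nor any condition on atoms. For nonconvex values you end up citing \citet{Aumann1965}, so in full generality you are no more self-contained than the paper, only heavier.

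If you write out the general case, two details need fixing. First, on an abstract measurable space $\mathcal X$, atoms of $P$ are measurable sets $A_i$, not points. The atomic contribution is therefore $\sum_iP(A_i)F_i$, where $F_i$ is the almost-everywhere value of $F$ on $A_i$. This value exists because a compact-valued measurable $F$ is a measurable map into the separable metric space of compact sets under $d_H$, hence a.e.\ constant on an atom, as is every selection. Second, producing the whole partition $\{A_j\}$ requires the partition (Dvoretzky--Wald--Wolfowitz) form of Lyapunov's theorem, not just convexity of the range. The Carathéodory decomposition is exact, obtained by a measurable selection, rather than an approximation.
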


The compactness in Theorem \ref{thm:aumann-compactness} follows from \citet[Theorem 2.1.38, p. 250]{Molchanov2017}.
The equality between the Aumann integral and its Aumann expectation $\E_A[F(X)]$ follows because a compact subset of $\R^K$ is closed.
If $F(x)$ is convex for $P$-almost every $x$, any integrable measurable selections $f,g$ of $F$ give another such selection $h=\lambda f+(1-\lambda)g$ for every $\lambda\in[0,1]$.
The identity $\E_P[h(X)]=\lambda\E_P[f(X)]+(1-\lambda)\E_P[g(X)]$ proves convexity of the Aumann integral.



The following proposition establishes the Aumann representation of the true-positive-rate region $\Gamma$.
We show that the functions $f_\phi$ generated by allocation rules $\phi$ are exactly the integrable measurable selections of $F$.
Taking expectations gives $\Gamma=\E_A[F(X)]$, and Theorem \ref{thm:aumann-compactness} then yields compactness and convexity of $\Gamma$.

\begin{proposition}
\label{prop:tpr-region}
For the set-valued function $F$ in \eqref{eq:pointwise-selection-set},
\begin{align}\begin{split}\label{eq:tpr-aumann-representation}
\Gamma
  &=\left\{\E_P[f_\phi(X)]: \ \text{where}\ \phi:\mathcal X\to\Delta^{K-1}\text{ is measurable}\right\}\\
&=\int_{\mathcal X}F(x)\dd P(x)
=\E_A[F(X)].
\end{split}\end{align}
The set $\Gamma$ is compact and convex and contains $\Delta^{K-1}$.
For every $t\in\Lambda$, the constrained maximization problem in \eqref{eq:constrained-frontier} has a solution $\phi$ satisfying $T_1(\phi)=\beta_1(t)$.
\end{proposition}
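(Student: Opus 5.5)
The plan is to first identify the measurable selections of $F$ with the functions $f_\phi$ generated by allocation rules. After that, compactness and convexity will follow from Theorem \ref{thm:aumann-compactness}, and existence of an optimal rule from a continuity argument.

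The first equality in \eqref{eq:tpr-aumann-representation} is just \eqref{eq:weighted-allocation-selection}. For the second, note that every $f_\phi$ is a measurable selection of $F$. It is integrable because \eqref{eq:q-properties} gives $\sup_{y\in F(x)}\lVert y\rVert\le\max_k\rho_k^{-1}$, so $F$ is integrably bounded.

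Conversely, let $f$ be a measurable selection of $F$. I must produce a measurable $\phi$ with $f=f_\phi$ almost everywhere. The natural candidate is $\phi_k(x)=f_k(x)/q_k(x)$ on $\{q_k>0\}$. On $\{q_k=0\}$ we know $f_k(x)=0$, and the coordinate $a_k$ is free. To fix it, I will put the leftover mass $1-\sum_{j:q_j(x)>0}f_j(x)/q_j(x)$ on the smallest index $k$ with $q_k(x)=0$, or spread it uniformly over such indices.

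This leftover mass is nonnegative for the following reason. Since $f(x)\in F(x)$, there is some $a\in\Delta^{K-1}$ with $f_j=q_ja_j$, so on $\{q_j>0\}$ we get $f_j/q_j=a_j$, and these sum to at most $1$. If every $q_k(x)>0$, the sum equals exactly $1$ and no leftover arises. Each piece of this construction is a measurable function of $f$ and $q$, so $\phi$ is measurable and maps into $\Delta^{K-1}$. The key point is that no measurable-selection theorem is needed, because the inverse map can be written explicitly. This gives $\Gamma=\int F\,dP$.

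Measurability of $F$ as a set-valued map also has to be checked. $F(x)$ is the convex hull of the points $q_k(x)e_k$, and the vertex map $x\mapsto(q_1(x)e_1,\ldots,q_K(x)e_K)$ is measurable. Hence $F(x)$ is the image of the fixed compact set $\Delta^{K-1}$ under the Carathéodory map $(x,a)\mapsto(q_k(x)a_k)_k$, which is measurable in $x$ and continuous in $a$. Taking a countable dense set of $a$'s then shows $\{x:F(x)\cap O\ne\varnothing\}$ is measurable.

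Theorem \ref{thm:aumann-compactness} now gives $\Gamma=\E_A[F(X)]$, compact and convex. To show $\Delta^{K-1}\subseteq\Gamma$, take the constant rule $\phi\equiv a$. By \eqref{eq:q-properties}, $T_k=a_k\E_P[q_k]=a_k$, so $a\in\Gamma$.

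Finally, fix $t\in\Lambda$. The set $\Gamma_t=\Gamma\cap\{u:u_k\ge t_k,\ k\ge2\}$ is nonempty by the definition of $\Lambda$, and compact as the intersection of a compact set with a closed set. The continuous map $u\mapsto u_1$ therefore attains its maximum on $\Gamma_t$ at some $u^*$. By the first part, $u^*=T(\phi)$ for some allocation rule $\phi$, and this $\phi$ solves \eqref{eq:constrained-frontier} with $T_1(\phi)=\beta_1(t)$.

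I expect the main technical obstacle to be the converse direction of the selection correspondence, namely handling the sets where some $q_k$ vanishes while keeping $\phi$ measurable. The explicit leftover-mass construction above is what I would use there.
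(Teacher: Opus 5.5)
Your proposal is correct and follows the paper's proof. Both arguments use the same steps:
\begin{itemize}
\item the explicit inverse construction of $\phi$ from a selection $f$, with the leftover mass placed on the zero-ratio coordinates;
\item measurability of $F$ via a countable dense set of simplex points;
\item Theorem \ref{thm:aumann-compactness} for compactness, convexity and $\Gamma=\E_A[F(X)]$;
\item the constant rule to show $\Delta^{K-1}\subseteq\Gamma$;
\item maximizing $u\mapsto u_1$ over the compact feasible slice of $\Gamma$.
\end{itemize}
One small detail to add: redefine $\phi$ (e.g.\ as $(1,0,\ldots,0)$) on the $P$-null set where $f(x)\notin F(x)$. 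Your formula need not land in $\Delta^{K-1}$ there, and the paper handles this explicitly.
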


\begin{proof}
\emph{Equality of the sets.}
For every allocation rule $\phi$, the function $f_\phi$ is a measurable selection of $F$.
Since $0\leq\phi_k\leq1$,
\begin{align}\begin{split}\nonumber
\E_P[\lVert f_\phi(X)\rVert]
\leq\sum_{k=1}^K\E_P[q_k(X)\phi_k(X)] 
\leq\sum_{k=1}^K\E_P[q_k(X)]
=K.
\end{split}\end{align}
Thus, $f_\phi$ is integrable and Definition \ref{def:aumann-expectation} gives
\begin{align}\begin{split}\nonumber
\Gamma\subseteq\int_{\mathcal X}F(x)\dd P(x).
\end{split}\end{align}

For the reverse inclusion, let $f=(f_1,\ldots,f_K)$ be an integrable measurable selection of $F$.
We construct an allocation rule $\phi$ satisfying $f=f_\phi$, so that its integral belongs to $\Gamma$.
Fix $x$ for which $f(x)\in F(x)$.
By \eqref{eq:pointwise-selection-set}, there is an $a\in\Delta^{K-1}$ with
\begin{align}\begin{split}\nonumber
f_k(x)=q_k(x)a_k,
\quad k=1,\ldots,K.
\end{split}\end{align}
Consequently,
\begin{align}\begin{split}\nonumber
q_k(x)>0 \to \frac{f_k(x)}{q_k(x)}=a_k\in[0,1], \quad 
q_k(x)=0 \to f_k(x)=0.
\end{split}\end{align}
Define
\begin{align}\begin{split}\nonumber
\phi_k(x)
\coloneqq\begin{cases}
\dfrac{f_k(x)}{q_k(x)},&q_k(x)>0,\\[7pt]
\dfrac{1-\displaystyle\sum_{j:q_j(x)>0}f_j(x)/q_j(x)}
{\bigl|\{j:q_j(x)=0\}\bigr|},&q_k(x)=0.
\end{cases}
\end{split}\end{align}
The denominator counts the zero-ratio coordinates, and the second case distributes the remaining probability equally among them.
Its numerator satisfies
\begin{align}\begin{split}\nonumber
1-\sum_{j:q_j(x)>0}\frac{f_j(x)}{q_j(x)}
=1-\sum_{j:q_j(x)>0}a_j
=\sum_{j:q_j(x)=0}a_j
\geq0.
\end{split}\end{align}
If every $q_k(x)$ is positive, the first case applies to all coordinates and $\sum_k\phi_k(x)=\sum_ka_k=1$.
If at least one ratio is zero, the second case has a positive denominator and
\begin{align}\begin{split}\nonumber
\sum_{k=1}^K\phi_k(x)
=\sum_{k:q_k(x)>0}\frac{f_k(x)}{q_k(x)}
+1-\sum_{k:q_k(x)>0}\frac{f_k(x)}{q_k(x)}
=1.
\end{split}\end{align}
All coordinates are nonnegative, so $\phi(x)\in\Delta^{K-1}$.
Set $\phi(x)=(1,0,\ldots,0)$ on the exceptional null set where $f(x)\notin F(x)$.
The ratios on $\{q_k>0\}$, the finite sums, and the case events are measurable, so the resulting $\phi$ is measurable and is an allocation rule as specified in Definition \ref{def:allocation-rule}.
For every $k$, the construction gives $q_k\phi_k=f_k$, including the zero-ratio case.
Equations \eqref{eq:selection-vector-integral} and \eqref{eq:weighted-allocation-selection} imply
\begin{align}\begin{split}\nonumber
\int_{\mathcal X}f(x)\dd P(x)
=\E_P[f_\phi(X)]
=T(\phi)\in\Gamma.
\end{split}\end{align}
Since $f$ was arbitrary, this proves $\int_{\mathcal X}F(x)\dd P(x)\subseteq\Gamma$.

\emph{Compactness.}
For each $x$, the set $F(x)$ is the image of $\Delta^{K-1}$ under the continuous linear map $a\mapsto(q_k(x)a_k)_{k=1}^K$.
It is therefore nonempty, compact, and convex.
The rational points in $\Delta^{K-1}$ form a countable dense subset, and the image of each fixed rational point is measurable in $x$.
For every open $O\subseteq\R^K$, continuity in $a$ gives
\begin{align}\begin{split}\nonumber
&\{x\in\mathcal X:F(x)\cap O\neq\varnothing\}\\
&\quad=\bigcup_{a\in\Delta^{K-1}\cap\mathbb Q^K}
\{x\in\mathcal X:(q_1(x)a_1,\ldots,q_K(x)a_K)\in O\}.
\end{split}\end{align}
Every event in this countable union is measurable, which verifies measurability of $F$.
For $a\in\Delta^{K-1}$,
\begin{align}\begin{split}\nonumber
\lVert(q_k(x)a_k)_{k=1}^K\rVert
\leq\sum_{k=1}^Kq_k(x)a_k
\leq\max_{1\leq k\leq K}q_k(x).
\end{split}\end{align}
Hence \eqref{eq:q-properties} yields
\begin{align}\begin{split}\nonumber
\E_P\!\left[\sup_{y\in F(X)}\lVert y\rVert\right]
\leq\E_P\!\left[\max_{1\leq k\leq K}q_k(X)\right]
\leq\sum_{k=1}^K\E_P[q_k(X)]
=K.
\end{split}\end{align}
Theorem \ref{thm:aumann-compactness} applies and gives compactness and convexity of $\Gamma$, together with the last equality in \eqref{eq:tpr-aumann-representation}.

\emph{Convexity and constant rules.}
For allocation rules $\phi,\psi$ and $\lambda\in[0,1]$, the function $\lambda\phi+(1-\lambda)\psi$ is measurable and has nonnegative coordinates summing to one.
For each $k$,
\begin{align}\begin{split}\nonumber
T_k\bigl(\lambda\phi+(1-\lambda)\psi\bigr)
&=\E_P\!\left[q_k(X)\bigl(\lambda\phi_k(X)+(1-\lambda)\psi_k(X)\bigr)\right]\\
&=\lambda T_k(\phi)+(1-\lambda)T_k(\psi).
\end{split}\end{align}
Thus, $\lambda T(\phi)+(1-\lambda)T(\psi)\in\Gamma$, proving convexity.
For any $a\in\Delta^{K-1}$, the constant rule $\phi(x)=a$ gives
\begin{align}\begin{split}\nonumber
T_k(\phi)=\E_P[q_k(X)a_k]=a_k,
\quad k=1,\ldots,K,
\end{split}\end{align}
so $\Delta^{K-1}\subseteq\Gamma$.

\emph{Existence of an optimal allocation rule.}
For each $t\in\Lambda$, the feasible true-positive rate vectors form the nonempty compact set $\{v\in\Gamma:v_k\geq t_k,\ k=2,\ldots,K\}$.
The continuous map $v\mapsto v_1$ has a maximizing point in this set.
By Definition \ref{def:true-positive-rate-region}, a maximizing vector equals $T(\phi)$ for some allocation rule $\phi$, which is therefore an optimal allocation rule with value $\beta_1(t)$.
\end{proof}


Proposition \ref{prop:tpr-region} shows that an optimal allocation rule $\phi$ exists for each $t\in\Lambda$, so $\beta_1(t)$ in \eqref{eq:constrained-frontier} and the upper ROC frontier are well defined.
We can also extend the Aumann representation of the true-positive-rate region $\Gamma$ to the ROC body $\cH$, whose volume we aim to compute.
The characterization in \eqref{eq:roc-body-feasibility} gives an equivalent description of $\cH$ in terms of nonnegative measurable functions $\alpha$.
\begin{align}\begin{split}\label{eq:roc-body-alpha-representation}
\cH
=\left\{
\bigl(\E_P[q_k(X)\alpha_k(X)]\bigr)_{k=1}^K:
\alpha:\mathcal X\to\R_+^K\text{ is measurable},
\displaystyle\sum_{k=1}^K\alpha_k(X)\leq1
\right\}.
\end{split}\end{align}

To verify this representation, take any $u\in\cH$ and use \eqref{eq:roc-body-feasibility} to choose an allocation rule $\phi$ with $u\leq T(\phi)$.
Setting 
$\alpha_k(x)=u_k\phi_k(x)/T_k(\phi)$ when $T_k(\phi)>0$, and $\alpha_k(x)=0$ otherwise, gives $0\leq\alpha_k\leq\phi_k$ and $\E_P[q_k(X)\alpha_k(X)]=u_k$.
Conversely, any nonnegative measurable $\alpha$ with $\sum_k\alpha_k(X)\leq1$ can be extended to an allocation rule $\phi\geq\alpha$ by assigning the unused probability to class 1.
The vector $(\E_P[q_k(X)\alpha_k(X)])_{k=1}^K$ lies below $T(\phi)$ and hence belongs to $\cH$.

\begin{proof}[Derivation of \eqref{eq:roc-body-alpha-representation}]
Take a measurable $\alpha:\mathcal X\to\R_+^K$ with $\sum_k\alpha_k(X)\leq1$ almost surely, and set $u_k=\E_P[q_k(X)\alpha_k(X)]$.
After setting $\alpha=0$ on its exceptional null set, we may assume $\alpha_k(x)\geq0$ and $\sum_k\alpha_k(x)\leq1$ for every $x$.
Assign the unused probability to class 1,
\begin{align}\begin{split}\nonumber
\phi_1(x)=\alpha_1(x)+1-\sum_{j=1}^K\alpha_j(x),
\quad
\phi_k(x)=\alpha_k(x),\quad k=2,\ldots,K.
\end{split}\end{align}
These measurable functions satisfy $\phi_k(x)\geq\alpha_k(x)\geq0$ and $\sum_k\phi_k(x)=1$.
Thus, $\phi$ is an allocation rule, and \eqref{eq:tp-rate-representation} gives
\begin{align}\begin{split}\nonumber
T_1(\phi)
&=u_1+\E_P\!\left[q_1(X)\left(1-\sum_{j=1}^K\alpha_j(X)\right)\right]
\geq u_1,\\
T_k(\phi)
&=\E_P[q_k(X)\alpha_k(X)]
=u_k,\quad k=2,\ldots,K.
\end{split}\end{align}
Equation \eqref{eq:roc-body-feasibility} implies $u\in\cH$.

Conversely, take $u\in\cH$ and choose $\phi$ with $0\leq u_k\leq T_k(\phi)$ for every $k$.
Set
\begin{align}\begin{split}\nonumber
\alpha_k(x)
\coloneqq\begin{cases}
\dfrac{u_k}{T_k(\phi)}\phi_k(x),&T_k(\phi)>0,\\[5pt]
0,&T_k(\phi)=0.
\end{cases}
\end{split}\end{align}
Since $u_k=0$ whenever $T_k(\phi)=0$,
\begin{align}
0\leq\alpha_k(x)&\leq\phi_k(x),\notag\\
\sum_{k=1}^K\alpha_k(x)
&\leq\sum_{k=1}^K\phi_k(x)=1,\notag\\
\E_P[q_k(X)\alpha_k(X)]
&=\begin{cases}
\dfrac{u_k}{T_k(\phi)}\E_P[q_k(X)\phi_k(X)]=u_k,&T_k(\phi)>0,\\[5pt]
0=u_k,&T_k(\phi)=0.\notag
\end{cases}
\end{align}
This proves \eqref{eq:roc-body-alpha-representation}.
\end{proof}


To construct an Aumann representation of the ROC body $\cH$ from \eqref{eq:roc-body-alpha-representation}, we first define coordinate simplices.
Let $e_k$ denote the $k$-th coordinate vector in $\R^K$ and let $\conv$ denote convex hull.

\begin{definition}[Coordinate simplex]
\label{def:coordinate-simplex}
A \emph{coordinate simplex} in $\R^K$ is a set of the form $D(a)=D(a_1,\ldots,a_K)\coloneqq\conv\{0,a_1e_1,\ldots,a_Ke_K\}$, where $a=(a_1,\ldots,a_K)\in\R_+^K$.
Equivalently,
\begin{align}\begin{split}\label{eq:coordinate-simplex-representation}
D(a)=\left\{\sum_{k=1}^K a_k\alpha_ke_k:
\alpha\in\R_+^K,\ \sum_{k=1}^K\alpha_k\leq1\right\}.
\end{split}\end{align}
\end{definition}

For every $a\in\R_+^K$, we have $D(a)=\operatorname{diag}(a_1,\ldots,a_K)D(1,\ldots,1)$, and hence
\begin{align}
\Vol_K(D(a))
&=\frac{1}{K!}\prod_{k=1}^K a_k.
\label{eq:coordinate-simplex-volume}
\end{align}
In particular, if $a_k=0$ for some $k$, then $\Vol_K(D(a))=0$.

Taking $a_k=q_k(x)$ in Definition \ref{def:coordinate-simplex} gives
\begin{align}\begin{split}\label{eq:random-coordinate-simplex}
D(x)\coloneqq D(q_1(x),\ldots,q_K(x)),
\quad
D(X)=\conv\{0,q_1(X)e_1,\ldots,q_K(X)e_K\}.
\end{split}\end{align}
Thus the coordinate simplex $D(x)$ equals $\conv(\{0\}\cup F(x))$, with the origin accounting for unused allocation probability.
Lemma \ref{lem:random-simplex-regularity} verifies the conditions needed to apply Theorem \ref{thm:aumann-compactness} to the set-valued function $x\mapsto D(x)$.

\begin{lemma}
\label{lem:random-simplex-regularity}
The set-valued function $x\mapsto D(x)$ in \eqref{eq:random-coordinate-simplex} is measurable and integrably bounded, with nonempty compact convex values.
It satisfies
\begin{align}\begin{split}\label{eq:random-simplex-integrability}
\sup_{y\in D(x)}\lVert y\rVert=\max_{1\leq k\leq K}q_k(x),
\quad
\E_P\!\left[\sup_{y\in D(X)}\lVert y\rVert\right]
\leq\sum_{k=1}^K\E_P[q_k(X)]=K.
\end{split}\end{align}
The simplices $D(x)$ are all contained in the same fixed simplex,
\begin{align}\begin{split}\label{eq:random-simplex-bound}
D(x)\subseteq D(\rho_1^{-1},\ldots,\rho_K^{-1}).
\end{split}\end{align}
Its Aumann integral is compact and convex, and
\begin{align}\begin{split}\label{eq:aumann-coordinate-integral}
\int_{\mathcal X}D(x)\dd P(x)
=\int_{\mathcal X}\conv\{0,q_1(x)e_1,\ldots,q_K(x)e_K\}\dd P(x)
=\E_A[D(X)].
\end{split}\end{align}
\end{lemma}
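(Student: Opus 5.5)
The plan mirrors the compactness step in the proof of Proposition \ref{prop:tpr-region}. We check the regularity conditions of Theorem \ref{thm:aumann-compactness} for $x\mapsto D(x)$ and then invoke that theorem.

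\emph{Values.} For each $x$, the set $D(x)=\conv\{0,q_1(x)e_1,\ldots,q_K(x)e_K\}$ is the convex hull of finitely many points, so it is nonempty, compact and convex. By \eqref{eq:coordinate-simplex-representation}, it is also the image of the compact set $S\coloneqq\{\alpha\in\R_+^K:\sum_k\alpha_k\leq1\}$ under the continuous linear map $\alpha\mapsto(q_k(x)\alpha_k)_{k=1}^K$.

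\emph{Measurability.} Let $S_{\mathbb Q}=S\cap\mathbb Q^K$, which is countable and dense in $S$. For fixed $\alpha$, the map $x\mapsto(q_k(x)\alpha_k)_k$ is measurable because each $q_k$ is. For open $O$, continuity in $\alpha$ gives
\[
\{x:D(x)\cap O\neq\varnothing\}=\bigcup_{\alpha\in S_{\mathbb Q}}\{x:(q_k(x)\alpha_k)_k\in O\}.
\]
This is a countable union of measurable sets, so $D$ is measurable.

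\emph{Norm bound.} For $y=(q_k(x)\alpha_k)_k$ with $\alpha\in S$, the Euclidean norm satisfies
\[
\lVert y\rVert\leq\sum_k q_k(x)\alpha_k\leq\max_k q_k(x).
\]
The bound is attained at the vertex $q_{k^*}(x)e_{k^*}$ with $k^*$ a maximizing index. This gives the equality in \eqref{eq:random-simplex-integrability}. Since $q_k\geq0$, we have $\max_k q_k\leq\sum_k q_k$, and $\E_P[q_k(X)]=1$ from \eqref{eq:q-properties} yields the bound $K$. In particular, $D$ is integrably bounded.

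\emph{Containment \eqref{eq:random-simplex-bound}.} By \eqref{eq:q-properties}, $0\leq q_k(x)\leq\rho_k^{-1}$. Hence each vertex $q_k(x)e_k=(q_k(x)\rho_k)\,\rho_k^{-1}e_k$ is a convex combination of $0$ and $\rho_k^{-1}e_k$, so it lies in the convex set $D(\rho_1^{-1},\ldots,\rho_K^{-1})$. Taking convex hulls gives the inclusion. Equivalently, one can write $y=\sum_k\rho_k^{-1}(\rho_kq_k(x)\alpha_k)e_k$ and note that $\sum_k\rho_kq_k(x)\alpha_k\leq\sum_k\alpha_k\leq1$.

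\emph{Aumann integral.} With nonempty compact convex values, measurability and integrable boundedness established, Theorem \ref{thm:aumann-compactness} gives that $\int D\,\dd P$ is compact and convex and equals $\E_A[D(X)]$. This is \eqref{eq:aumann-coordinate-integral}; the middle expression is just the definition of $D(x)$.

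There is no real obstacle here. The only point needing care is measurability, and it is handled by the countable dense-set argument already used in the proof of Proposition \ref{prop:tpr-region}. The null sets on which \eqref{eq:q-properties} might fail can be removed by choosing versions of the $q_k$ for which it holds everywhere.
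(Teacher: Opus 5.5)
Your proposal is correct and follows essentially the same route as the paper: convex hulls of finitely many points for the values, a countable union over rational coefficient vectors for measurability, the bound $\lVert y\rVert\leq\sum_k q_k(x)\alpha_k\leq\max_k q_k(x)$ with equality at a vertex, the rescaling $\sum_k\rho_k^{-1}(\rho_kq_k(x)\alpha_k)e_k$ for the containment, and then Theorem~\ref{thm:aumann-compactness}. Your vertex-by-vertex convex-combination argument for the containment is only an equivalent phrasing of the paper's coefficient computation.
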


\begin{proof}
Each $D(x)$ is nonempty, compact, and convex as the convex hull of finitely many points.
For every open $O\subseteq\R^K$, continuity in the coefficients $\alpha$ gives
\begin{align}\begin{split}\nonumber
\{x:D(x)\cap O\neq\varnothing\}
=\bigcup_{\substack{\alpha\in\mathbb Q_+^K\\\sum_{k=1}^K\alpha_k\leq1}}
\left\{x:\sum_{k=1}^Kq_k(x)\alpha_ke_k\in O\right\}.
\end{split}\end{align}
The union is countable, and each set on the right is measurable because the functions $q_k$ are measurable.
Thus, $x\mapsto D(x)$ is a measurable set-valued function.

For every admissible $\alpha$ in \eqref{eq:coordinate-simplex-representation},
\begin{align}\begin{split}\nonumber
\left\|\sum_{k=1}^K q_k(x)\alpha_ke_k\right\|
& \leq\sum_{k=1}^Kq_k(x)\alpha_k\\
& \leq\left(\max_{1\leq k\leq K}q_k(x)\right)\sum_{k=1}^K\alpha_k 
\leq\max_{1\leq k\leq K}q_k(x).
\end{split}\end{align}
Choosing a vertex with the largest coordinate length gives equality in the norm bound.
Taking expectations and using $\max_k q_k(X)\leq\sum_k q_k(X)$ together with \eqref{eq:q-properties} proves \eqref{eq:random-simplex-integrability}.

To prove \eqref{eq:random-simplex-bound} using \eqref{eq:q-properties}, a point $\sum_kq_k(x)\alpha_ke_k\in D(x)$ can be written as
\begin{align}\begin{split}\nonumber
\sum_{k=1}^Kq_k(x)\alpha_ke_k
=\sum_{k=1}^K\rho_k^{-1}\bigl(\rho_kq_k(x)\alpha_k\bigr)e_k,
\quad
\sum_{k=1}^K\rho_kq_k(x)\alpha_k\leq\sum_{k=1}^K\alpha_k\leq1.
\end{split}\end{align}
Theorem \ref{thm:aumann-compactness} now applies to $x\mapsto D(x)$, giving compactness and convexity of its Aumann integral and the equality \eqref{eq:aumann-coordinate-integral}.
\end{proof}

To interpret \eqref{eq:roc-body-alpha-representation} as an Aumann integral, we identify its integrands with the measurable selections of $x\mapsto D(x)$.

\begin{lemma}
\label{lem:coordinate-simplex-selections}
The integrable measurable selections of $x\mapsto D(x)$ in \eqref{eq:random-coordinate-simplex} are the functions of the form
\begin{align}\begin{split}\label{eq:coordinate-simplex-selection}
f_\alpha(x)\coloneqq\sum_{k=1}^Kq_k(x)\alpha_k(x)e_k,
\end{split}\end{align}
where $\alpha:\mathcal X\to\R_+^K$ is measurable and $\sum_k\alpha_k(X)\leq1$.
\end{lemma}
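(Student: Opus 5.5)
The plan is to prove the two inclusions separately, closely following the construction used in the proof of Proposition \ref{prop:tpr-region}.

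\emph{Functions $f_\alpha$ are selections.} Let $\alpha:\mathcal X\to\R_+^K$ be measurable with $\sum_k\alpha_k(X)\leq1$ almost surely. Then $f_\alpha$ is measurable, because each coordinate $q_k\alpha_k$ is a product of measurable functions. By the representation \eqref{eq:coordinate-simplex-representation} with $a_k=q_k(x)$, we have $f_\alpha(x)\in D(x)$ wherever $\sum_k\alpha_k(x)\leq1$, which holds $P$-almost everywhere. Integrability follows from the bound $\lVert f_\alpha(x)\rVert\leq\max_kq_k(x)$ together with \eqref{eq:random-simplex-integrability}. Alternatively, note that $D$ is integrably bounded by Lemma \ref{lem:random-simplex-regularity}, so every measurable selection is integrable.

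\emph{Every selection has the form $f_\alpha$.} Let $f$ be an integrable measurable selection of $D$. Fix $x$ with $f(x)\in D(x)$. There is some admissible $\alpha^x\in\R_+^K$ with $\sum_k\alpha^x_k\leq1$ and $f_k(x)=q_k(x)\alpha^x_k$ for each $k$. Consequently $f_k(x)\geq0$, and $f_k(x)=0$ whenever $q_k(x)=0$. I would not try to select $\alpha^x$ measurably in the abstract. Instead I will define $\alpha$ explicitly:
\begin{align}\begin{split}\nonumber
\alpha_k(x)\coloneqq\begin{cases}
f_k(x)/q_k(x),&q_k(x)>0,\\
0,&q_k(x)=0,
\end{cases}
\end{split}\end{align}
and set $\alpha(x)=0$ on the exceptional null set where $f(x)\notin D(x)$.

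Then:
\begin{itemize}
\item On $\{q_k>0\}$ we have $\alpha_k(x)=\alpha^x_k$.
\item On $\{q_k=0\}$ we have $\alpha_k(x)=0\leq\alpha^x_k$.
\item Hence $\alpha_k(x)\geq0$ and $\sum_k\alpha_k(x)\leq\sum_k\alpha^x_k\leq1$.
\end{itemize}
In both cases $q_k\alpha_k=f_k$, so $f=f_\alpha$ almost everywhere. Since selections are only defined up to null sets, this identification suffices.

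Measurability of $\alpha$ holds because it is built from the measurable sets $\{q_k>0\}$, quotients of measurable functions on those sets, and the measurable set $\{x:f(x)\notin D(x)\}$. That last set is measurable because $f(x)\in D(x)$ is equivalent to the finitely many measurable conditions
\[
f_k(x)\geq0,\qquad f_k(x)=0\text{ on }\{q_k=0\},\qquad \sum_{k:q_k(x)>0}f_k(x)/q_k(x)\leq1.
\]

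\emph{Main obstacle.} The only delicate point is the coordinates where $q_k(x)=0$. There the coefficient $\alpha^x_k$ is not determined by $f(x)$, and the naive quotient is undefined. Setting $\alpha_k=0$ on these coordinates resolves this. Unlike in Proposition \ref{prop:tpr-region}, no leftover mass has to be redistributed, because here the constraint is only $\sum_k\alpha_k\leq1$, not equality.
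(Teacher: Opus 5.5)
Your proof is correct and takes essentially the same route as the paper: for the forward direction you use the representation \eqref{eq:coordinate-simplex-representation} and the integrable bound from Lemma \ref{lem:random-simplex-regularity}, and for the converse you use the explicit coefficients $\alpha_k=f_k/q_k$ on $\{q_k>0\}$ and $\alpha_k=0$ otherwise. The only differences are cosmetic: you set $\alpha$ rather than $f$ to zero on the exceptional null set, and you verify explicitly that this null set is measurable, which the paper leaves implicit.
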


\begin{proof}
For every admissible $\alpha$, the function $f_\alpha$ in \eqref{eq:coordinate-simplex-selection} is measurable because the functions $q_k$ and $\alpha_k$ are measurable.
Equation \eqref{eq:coordinate-simplex-representation} gives $f_\alpha(x)\in D(x)$, and Lemma \ref{lem:random-simplex-regularity} gives its integrability.
Conversely, let $f=(f_1,\ldots,f_K)$ be an integrable measurable selection of $D(x)$.
Set $f(x)=0$ on the $P$-null set where $f(x)\notin D(x)$, which preserves its integral.
Membership in $D(x)$ implies
\begin{align}\begin{split}\nonumber
f_k(x)=0\quad\text{if }q_k(x)=0,
\quad
f_k(x)\geq0,
\quad
\sum_{k:q_k(x)>0}\frac{f_k(x)}{q_k(x)}\leq1.
\end{split}\end{align}
Define the measurable functions
\begin{align}\begin{split}\nonumber
\alpha_k(x)
\coloneqq\begin{cases}
f_k(x)/q_k(x),&q_k(x)>0,\\
0,&q_k(x)=0.
\end{cases}
\end{split}\end{align}
Then
\begin{align}\begin{split}\nonumber
\sum_{k=1}^K\alpha_k(x)
=\sum_{k:q_k(x)>0}\frac{f_k(x)}{q_k(x)}\leq1,
\quad
q_k(x)\alpha_k(x)=f_k(x),
\end{split}\end{align}
and hence $f=f_\alpha$ almost surely.

For nonnegative measurable functions $b_k$ satisfying $\E_P[\max_{1\leq k\leq K}b_k(X)]<\infty$, the measurability argument in the proof of Lemma \ref{lem:random-simplex-regularity} also applies to $x\mapsto D(b_1(x),\ldots,b_K(x))$.
These sets are nonempty, compact, and convex, and
\begin{align}\begin{split}\nonumber
\sup_{y\in D(b_1(x),\ldots,b_K(x))}\lVert y\rVert=\max_{1\leq k\leq K}b_k(x).
\end{split}\end{align}
The assumed integrability of the maximum makes this set-valued function integrably bounded, so Theorem \ref{thm:aumann-compactness} gives a compact Aumann integral.
The same construction, with $b_k$ in place of $q_k$, gives the selection characterization in this setting as well.
\end{proof}

Combining \eqref{eq:roc-body-alpha-representation} with Lemma \ref{lem:coordinate-simplex-selections} gives the following Aumann representation of the ROC body $\cH$.
The compactness and convexity of $\cH$ then follow from Lemma \ref{lem:random-simplex-regularity}.

\begin{proposition}
\label{prop:aumann-roc}
The ROC body $\cH$ satisfies
\begin{align}\begin{split}\label{eq:roc-aumann-representation}
\cH=\int_{\mathcal X}D(x)\dd P(x)=\E_A[D(X)].
\end{split}\end{align}
The set $\cH$ is compact and convex with nonempty interior in $\R^K$, and satisfies
\begin{align}\begin{split}\label{eq:roc-body-containments}
D(1,\ldots,1)\subseteq\cH\subseteq[0,1]^K.
\end{split}\end{align}
\end{proposition}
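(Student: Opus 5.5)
The plan is to chain the representation \eqref{eq:roc-body-alpha-representation} with Lemma \ref{lem:coordinate-simplex-selections} and Lemma \ref{lem:random-simplex-regularity}, and then verify the two containments in \eqref{eq:roc-body-containments}. First, by \eqref{eq:roc-body-alpha-representation}, every $u\in\cH$ has the form $u=\E_P[f_\alpha(X)]$ for some measurable $\alpha\geq0$ with $\sum_k\alpha_k\leq1$, where $f_\alpha$ is given in \eqref{eq:coordinate-simplex-selection}. Conversely, every such vector lies in $\cH$. Lemma \ref{lem:coordinate-simplex-selections} identifies the functions $f_\alpha$ exactly with the integrable measurable selections of $x\mapsto D(x)$. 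Note that changing $\alpha$ on a null set does not change the integral, so the almost-sure constraint matches. By Definition \ref{def:aumann-expectation}, the set of their expectations is $\int_{\mathcal X}D(x)\dd P(x)$. Then \eqref{eq:aumann-coordinate-integral} gives equality with $\E_A[D(X)]$, which proves \eqref{eq:roc-aumann-representation}. Compactness and convexity follow directly from Lemma \ref{lem:random-simplex-regularity}.

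For the upper containment, I take $u\in\cH$ and use \eqref{eq:roc-body-feasibility}: $u\geq0$ and $u\leq T(\phi)$ for some allocation rule $\phi$. Since $T_k(\phi)=\E[\phi_k(X)\mid C=k]\leq1$, this gives $u\in[0,1]^K$.

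For the lower containment, I use that $\cH$ contains $\Gamma\supseteq\Delta^{K-1}$. This holds by Proposition \ref{prop:tpr-region} together with \eqref{eq:roc-body-feasibility}, taking $u=T(\phi)$. The origin is also in $\cH$, via $\alpha=0$. Convexity of $\cH$ then gives $\conv(\{0\}\cup\Delta^{K-1})=D(1,\ldots,1)\subseteq\cH$. Alternatively, one can argue directly: for $u\in D(1,\ldots,1)$, the constant $\alpha\equiv u$ satisfies $\sum_k u_k\leq1$, and by \eqref{eq:q-properties} it yields $\E_P[q_k\alpha_k]=u_k$. This direct route avoids even needing convexity.

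Nonempty interior then follows because $D(1,\ldots,1)$ has volume $1/K!>0$ by \eqref{eq:coordinate-simplex-volume}; in particular it contains the open set $\{u>0:\sum u_k<1\}$. The only point needing care is the bookkeeping between almost-sure and everywhere constraints on $\alpha$ when matching \eqref{eq:roc-body-alpha-representation} to the selection characterization. This is handled by modifying $\alpha$ on a null set, as was already done in the derivation of \eqref{eq:roc-body-alpha-representation}. I expect no substantive obstacle.
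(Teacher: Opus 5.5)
Your proposal is correct and follows essentially the same route as the paper: you chain \eqref{eq:roc-body-alpha-representation} with Lemma \ref{lem:coordinate-simplex-selections} and Lemma \ref{lem:random-simplex-regularity}, use the constant choice $\alpha\equiv u$ for the lower containment and $T_k(\phi)\leq1$ for the upper one, and get nonempty interior from the open simplex $\{u>0:\sum_k u_k<1\}$. The only minor deviation is that you take convexity from Lemma \ref{lem:random-simplex-regularity}, whereas the paper re-proves it by mixing allocation rules; both are valid.
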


\begin{proof}
Lemma \ref{lem:coordinate-simplex-selections} identifies the integrable measurable selections of $x\mapsto D(x)$ with the functions $f_\alpha$.
Integration is coordinatewise, so
\begin{align}\begin{split}\nonumber
\int_{\mathcal X}f_\alpha(x)\dd P(x)
=\bigl(\E_P[q_k(X)\alpha_k(X)]\bigr)_{k=1}^K.
\end{split}\end{align}
Equation \eqref{eq:roc-body-alpha-representation} therefore gives $\cH=\int_{\mathcal X}D(x)\dd P(x)$.
Lemma \ref{lem:random-simplex-regularity} identifies this integral with $\E_A[D(X)]$, proving \eqref{eq:roc-aumann-representation}.

\emph{Geometry of $\cH$.}
Compactness follows from \eqref{eq:roc-aumann-representation} and Lemma \ref{lem:random-simplex-regularity}.
For $u,v\in\cH$, choose allocation rules $\phi,\psi$ such that $u\leq T(\phi)$ and $v\leq T(\psi)$.
For every $\lambda\in[0,1]$, Proposition \ref{prop:tpr-region} gives
\begin{align}\begin{split}\nonumber
\lambda u+(1-\lambda)v
\leq\lambda T(\phi)+(1-\lambda)T(\psi)
=T\bigl(\lambda\phi+(1-\lambda)\psi\bigr),
\end{split}\end{align}
so $\lambda u+(1-\lambda)v\in\cH$.
For any $u\in D(1,\ldots,1)$, the constant choice $\alpha_k(x)=u_k$ satisfies
\begin{align}\begin{split}\nonumber
\E_P[q_k(X)\alpha_k(X)]=u_k\E_P[q_k(X)]=u_k,
\end{split}\end{align}
which proves the lower containment in \eqref{eq:roc-body-containments}.
The upper containment follows from $T_k(\phi)\leq1$ and \eqref{eq:roc-body-feasibility}.
In particular, $\cH$ contains the nonempty open set $\{u\in\R^K:u_k>0\text{ for all }k,\ \sum_ku_k<1\}$.
\end{proof}

The ROC body $\cH$ is therefore a \emph{convex body}, meaning a compact convex set with nonempty interior in $\R^K$.
The containments in \eqref{eq:roc-body-containments} and the simplex volume in \eqref{eq:coordinate-simplex-volume} give
\begin{align}\begin{split}\label{eq:geometric-roc-volume-bounds}
\frac1{K!}=\Vol_K(D(1,\ldots,1))\leq\Vol_K(\cH)\leq\Vol_K([0,1]^K)=1.
\end{split}\end{align}

The uninformative and perfectly informative cases give equality at the lower and upper bounds, respectively.
In the uninformative case $P_1=\cdots=P_K$, the class density ratios satisfy $q_k=1$ for every $k$, so $D(X)=D(1,\ldots,1)$.
Proposition \ref{prop:aumann-roc} then gives $\cH=D(1,\ldots,1)$ and $\Vol_K(\cH)=1/K!$.
In the perfectly informative case, an allocation rule $\phi$ satisfies $T(\phi)=(1,\ldots,1)$.
The characterization in \eqref{eq:roc-body-feasibility} then gives $\cH=[0,1]^K$ and $\Vol_K(\cH)=1$.



\subsection{Minkowski sums, mixed volume and Hausdorff distance}
\label{subsec:mixed-volume-tools}

In this part, we introduce Minkowski sums and mixed volume, together with the continuity properties needed to approximate volumes of convex sets.
As in Definition \ref{def:roc-body}, $\Vol_K(A)$ denotes the $K$-dimensional Lebesgue measure of $A$ \citep[Conventions and notation, p. xxi]{Schneider2014}.
Every compact set $A\subseteq\R^K$ is closed and bounded, hence Borel measurable with finite volume.
If $A$ lies in an affine subspace of dimension less than $K$, then $\Vol_K(A)=0$.


For nonempty compact convex sets $A,B\subseteq\R^K$ and $\lambda\geq0$, their \emph{Minkowski sum} and \emph{scaling} are
\begin{align}\begin{split}\nonumber
A+B\coloneqq\{a+b:a\in A,\ b\in B\},
\quad
\lambda A\coloneqq\{\lambda a:a\in A\}.
\end{split}\end{align}
For nonempty compact convex sets $A_1,\ldots,A_M$ and weights $\lambda_1,\ldots,\lambda_M\geq0$, the weighted Minkowski sum is
\begin{align}\begin{split}\nonumber
\sum_{m=1}^M\lambda_mA_m
=\left\{\sum_{m=1}^M\lambda_ma_m:a_m\in A_m\text{ for every }m\right\}.
\end{split}\end{align}
The weighted Minkowski sum $\sum_{m=1}^M\lambda_mA_m$ is nonempty, compact, and convex.

\begin{definition}[Mixed volume (equation (5.20), p. 280 of \cite{Schneider2014})]
\label{def:mixed-volume}
For nonempty compact convex sets $A_1,\ldots,A_K\subseteq\R^K$, their \emph{mixed volume} $V(A_1,\ldots,A_K)$ is defined by the \emph{polarization formula}.
\begin{align}\begin{split}\label{eq:polytopal-mixed-volume}
V(A_1,\ldots,A_K)
\coloneqq\frac{1}{K!}
\sum_{\varnothing\neq I\subseteq\{1,\ldots,K\}}
(-1)^{K-\lvert I\rvert}
\Vol_K\!\left(\sum_{r\in I}A_r\right).
\end{split}\end{align}
\end{definition}
The subset $I$ selects positions in the ordered list $(A_1,\ldots,A_K)$, and $\lvert I\rvert$ counts the selected positions.
For example, when $K=3$, $I=\{1,3\}$ selects $A_1+A_3$, with $\lvert I\rvert=2$.
The same position labels apply when sets repeat.
If $A_1=A_2=A$ and $A_3=B$, then $I=\{1,2,3\}$ has $\lvert I\rvert=3$ and selects $A_1+A_2+A_3=A+A+B=2A+B$.

The mixed volume $V$ in Definition \ref{def:mixed-volume} determines the volume of weighted Minkowski sums through the following theorem of \citet[Theorem 5.1.7, equation (5.17), p. 280]{Schneider2014}.
\begin{theorem}[Minkowski volume theorem]
\label{thm:minkowski-volume}
Let $A_1,\ldots,A_M\subseteq\R^K$ be nonempty compact convex sets, where $M\geq1$.
For all nonnegative weights $\lambda_1,\ldots,\lambda_M$,
\begin{align}\begin{split}\label{eq:mixed-volume-polynomial}
\Vol_K\!\left(\sum_{m=1}^M\lambda_mA_m\right)
=\sum_{i_1,\ldots,i_K=1}^M
\lambda_{i_1}\cdots\lambda_{i_K}
V(A_{i_1},\ldots,A_{i_K}).
\end{split}\end{align}
The indices $(i_1,\ldots,i_K)$ describe $K$ successive choices from the $M$ sets, with repetitions allowed.
The sum therefore contains $M^K$ terms, one for each ordered tuple $(i_1,\ldots,i_K)\in\{1,\ldots,M\}^K$.
\end{theorem}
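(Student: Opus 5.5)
The plan is to prove \eqref{eq:mixed-volume-polynomial} in two stages. First we show that $\Vol_K(\sum_m\lambda_mA_m)$ is a homogeneous polynomial of degree $K$ in $(\lambda_1,\ldots,\lambda_M)\in\R_+^M$. Then we identify its symmetrized coefficients with the polarization formula \eqref{eq:polytopal-mixed-volume}.

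For the polynomiality we will first treat polytopes $A_m$ and then pass to general convex bodies by approximation.

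\emph{Polytopes.} Decompose $\sum_m\lambda_mA_m$, for strictly positive $\lambda$, into pyramids over its facets with apex at the origin. The volume is then $\frac1K\sum_u h(\sum_m\lambda_mA_m,u)\,\Vol_{K-1}(F(\sum_m\lambda_mA_m,u))$, summed over the facet normals $u$. Support functions are additive, so $h(\sum_m\lambda_mA_m,u)=\sum_m\lambda_mh(A_m,u)$. Faces are also additive, $F(\sum_m\lambda_mA_m,u)=\sum_m\lambda_mF(A_m,u)$. The set of facet normals is the same for all positive $\lambda$, because it is determined by the common refinement of the normal fans. Induction on dimension therefore makes each facet volume a homogeneous polynomial of degree $K-1$, and the total a homogeneous polynomial of degree $K$. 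The base case is dimension $1$, where lengths are linear. Boundary values $\lambda_m=0$ follow by continuity in $\lambda$.

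\emph{General convex bodies.} Approximate each $A_m$ in Hausdorff distance by polytopes. Volume is continuous in the Hausdorff metric, and Minkowski combination is Lipschitz in that metric. The coefficients of the approximating polynomials are determined by finitely many values through finite differences, so they converge. The limit is therefore again a homogeneous polynomial of degree $K$.

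Write this polynomial as $\sum_{i_1,\ldots,i_K}\lambda_{i_1}\cdots\lambda_{i_K}W(A_{i_1},\ldots,A_{i_K})$ with $W$ symmetric. This representation is unique once symmetry is imposed. It remains to show $W=V$ as defined in \eqref{eq:polytopal-mixed-volume}.

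Fix $K$ bodies $B_1,\ldots,B_K$ and apply the polynomial identity to the list $B_1,\ldots,B_K$ with weights $\lambda_r=\1\{r\in I\}$. This gives
\begin{align*}
\Vol_K\Bigl(\sum_{r\in I}B_r\Bigr)=\sum_{i\in I^K}W(B_{i_1},\ldots,B_{i_K}).
\end{align*}
Form the alternating sum $\sum_I(-1)^{K-|I|}$ of both sides. By inclusion--exclusion, a tuple $i$ is counted with net weight $\sum_{J\supseteq \mathrm{range}(i)}(-1)^{K-|J|}$. This weight is zero unless $\mathrm{range}(i)=\{1,\ldots,K\}$, in which case it equals $1$. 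Only the $K!$ permutations survive, so the alternating sum equals $K!\,W(B_1,\ldots,B_K)$, which is $K!\,V$ as claimed.

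The expected main obstacle is the polytope polynomiality step, specifically justifying that the facet structure and face additivity are uniform in $\lambda$. The first thing to try is to work with the common refinement of normal fans. Alternatively, one could use a direct mixed-subdivision or zonotope argument for polytopes.
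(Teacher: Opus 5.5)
Your proposal is correct, and its route is the classical one in \citet[Section 5.1]{Schneider2014}: polynomiality for polytopes via the facet formula and induction on dimension, extension to convex bodies by Hausdorff approximation, then inclusion--exclusion to recover the polarization formula. The paper does not prove this theorem; it quotes Schneider's Theorem 5.1.7 and (5.20). So your argument is a self-contained reconstruction of what is cited.

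The step you flag as the main obstacle is harmless. By face additivity, $\dim F(\sum_m\lambda_mA_m,u)=\dim\sum_mF(A_m,u)$ for every $\lambda\in(0,\infty)^M$, so the facet normals do not depend on $\lambda$. If $\sum_mA_m$ is not full-dimensional, the volume vanishes identically on the open orthant.

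One step needs to be made explicit. Write $w(i_1,\ldots,i_K)$ for the symmetric coefficient of the list $(A_1,\ldots,A_M)$ at positions $(i_1,\ldots,i_K)$. Your inclusion--exclusion is run on the auxiliary list $(B_1,\ldots,B_K)=(A_{i_1},\ldots,A_{i_K})$. So it identifies that list's coefficient with $V(A_{i_1},\ldots,A_{i_K})$, but you have not shown that it equals $w(i_1,\ldots,i_K)$. Tuples with repeated indices make this visible.

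The fix is to run the same alternating sum on the original list, with integer weights $\mu_m=\lvert\{r\in I:i_r=m\}\rvert$. Since $\alpha A+\beta A=(\alpha+\beta)A$ for convex $A$, you have $\sum_{r\in I}A_{i_r}=\sum_m\mu_mA_m$. Expanding $\mu_{j_1}\cdots\mu_{j_K}$ then gives $\Vol_K(\sum_{r\in I}A_{i_r})=\sum_{s\in I^K}w(i_{s_1},\ldots,i_{s_K})$. Your cancellation argument leaves exactly $K!\,w(i_1,\ldots,i_K)=K!\,V(A_{i_1},\ldots,A_{i_K})$.
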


We illustrate Definition \ref{def:mixed-volume} and Theorem \ref{thm:minkowski-volume} with \emph{polytopes}, which are convex hulls of finitely many points and include the coordinate simplices $D(x)$ defined in \eqref{eq:random-coordinate-simplex}.

\begin{example}[Weighted Minkowski sum]
\label{ex:minkowski-area}
\leavevmode\par
Consider the two-dimensional coordinate simplices
\begin{align}\begin{split}\nonumber
A_1=\conv\{0,\tfrac32e_1,\tfrac12e_2\},
\quad
A_2=\conv\{0,\tfrac12e_1,\tfrac32e_2\}.
\end{split}\end{align}
For $\lambda_1,\lambda_2\geq0$, their weighted Minkowski sum decomposes into translated copies of $\lambda_1A_1$ and $\lambda_2A_2$, together with a rectangle of side lengths $3\lambda_1/2$ and $3\lambda_2/2$, as shown in Figure \ref{fig:mixed-area-decomposition}.
Adding the three areas gives
\begin{align}\begin{split}\nonumber
\Vol_2(\lambda_1A_1+\lambda_2A_2)
=\frac38\lambda_1^2+\frac94\lambda_1\lambda_2+\frac38\lambda_2^2.
\end{split}\end{align}
The rectangle uses one coordinate length from each simplex and accounts for the term involving both weights.
Theorem \ref{thm:minkowski-volume} expresses this quadratic polynomial in terms of mixed volumes and gives the same form for any two nonempty polytopes $A_1,A_2\subseteq\R^2$, including segments and points.
\begin{align}\begin{split}\nonumber
\Vol_2(\lambda_1A_1+\lambda_2A_2)
&=\sum_{i,j=1}^2\lambda_i\lambda_jV(A_i,A_j)\\
&=\lambda_1^2V(A_1,A_1)
+2\lambda_1\lambda_2V(A_1,A_2)
+\lambda_2^2V(A_2,A_2).
\end{split}\end{align}
For these two-dimensional arguments, Definition \ref{def:mixed-volume} gives
\begin{align}\begin{split}\nonumber
V(A_1,A_1)&=\Vol_2(A_1),
\quad V(A_2,A_2)=\Vol_2(A_2),\\
V(A_1,A_2)&=V(A_2,A_1)
=\frac{\Vol_2(A_1+A_2)-\Vol_2(A_1)-\Vol_2(A_2)}{2}.
\end{split}\end{align}
In Figure \ref{fig:mixed-area-decomposition}(b), the blue and orange triangles have areas $\lambda_1^2V(A_1,A_1)$ and $\lambda_2^2V(A_2,A_2)$, while the gray rectangle has area $2\lambda_1\lambda_2V(A_1,A_2)$.
\begin{figure}[htbp]
\centering
\begin{tikzpicture}[x=1cm,y=1cm,>=stealth,font=\small]
\begin{scope}
  \filldraw[fill=blue!12,draw=blue!65!black,line width=0.8pt]
    (0,0) -- (2.7,0) -- (0,0.9) -- cycle;
  \node[text=blue!65!black] at (1.15,1.35) {$\lambda_1A_1$};
  \draw[<->] (0,-0.35) -- (2.7,-0.35)
    node[midway,fill=white,inner sep=2pt] {$3\lambda_1/2$};
  \draw[<->] (-0.32,0) -- (-0.32,0.9)
    node[midway,left] {$\lambda_1/2$};
  \node[font=\Large] at (3.35,1.20) {$+$};
  \filldraw[fill=orange!17,draw=orange!80!black,line width=0.8pt]
    (4,0) -- (4.9,0) -- (4,2.7) -- cycle;
  \node[text=orange!80!black] at (4.35,3.13) {$\lambda_2A_2$};
  \draw[<->] (4,-0.35) -- (4.9,-0.35)
    node[midway,below,inner sep=2pt] {$\lambda_2/2$};
  \draw[<->] (5.23,0) -- (5.23,2.7)
    node[midway,right] {$3\lambda_2/2$};
  \node[anchor=north,align=center] at (2.5,-1.10) {(a) Scaled coordinate simplices};
\end{scope}
\draw[->,thick,black!60] (6.42,1.20) -- (7.25,1.20);
\node[align=center,font=\footnotesize,text=black!65] at (6.835,1.85) {Minkowski\\sum};
\begin{scope}[xshift=8.0cm]
  \coordinate (mixed-area-corner) at (2.7,2.7);
  \fill[black!7] (0,0) rectangle (mixed-area-corner);
  \fill[blue!12] (0,2.7) -- (mixed-area-corner) -- (0,3.6) -- cycle;
  \fill[orange!17] (2.7,0) -- (3.6,0) -- (mixed-area-corner) -- cycle;
  \draw[blue!65!black,line width=0.8pt,line cap=butt]
    (0,2.7) -- (mixed-area-corner);
  \draw[orange!80!black,line width=0.8pt,line cap=butt]
    (2.7,0) -- (mixed-area-corner);
  \draw[black!70,line width=0.8pt]
    (0,0) -- (3.6,0) -- (mixed-area-corner) -- (0,3.6) -- cycle;
  \node at (1.35,1.35) {$9\lambda_1\lambda_2/4$};
  \node[text=blue!65!black] (firstarea) at (0.95,4.05) {$3\lambda_1^2/8$};
  \draw[->,blue!65!black,thin] (firstarea.south) -- (0.85,3.13);
  \node[text=orange!80!black,right] (secondarea) at (4.00,1.50) {$3\lambda_2^2/8$};
  \draw[->,orange!80!black,thin] (secondarea.west) -- (3.03,0.95);
  \draw[<->] (0,-0.35) -- (2.7,-0.35)
    node[midway,fill=white,inner sep=2pt] {$3\lambda_1/2$};
  \draw[<->] (2.7,-0.35) -- (3.6,-0.35)
    node[midway,below,inner sep=2pt] {$\lambda_2/2$};
  \node[anchor=north,align=center] at (1.80,-1.10) {(b) $\lambda_1A_1+\lambda_2A_2$};
\end{scope}
\end{tikzpicture}
\caption{Mixed-volume contributions to the area of a Minkowski sum}
\label{fig:mixed-area-decomposition}
\begin{minipage}{0.94\linewidth}
\footnotesize
\textit{Note.} Here $A_1=\conv\{0,\tfrac32e_1,\tfrac12e_2\}$ and $A_2=\conv\{0,\tfrac12e_1,\tfrac32e_2\}$.
The blue and orange regions in panel (b) are translated copies of the scaled simplices in panel (a).
The mixed volumes are $V(A_1,A_1)=V(A_2,A_2)=3/8$ and $V(A_1,A_2)=V(A_2,A_1)=9/8$, so the gray rectangle has area $2\lambda_1\lambda_2V(A_1,A_2)$.
The drawing uses $\lambda_1=\lambda_2=1$, while the labels apply to arbitrary positive weights.
\end{minipage}
\end{figure}
\end{example}

We next consider the continuity of volume $\Vol_K$ and mixed volume $V$ with respect to the \emph{Hausdorff distance}.
\begin{definition}[Hausdorff distance]
\label{def:hausdorff-distance}
For nonempty compact sets $A,B\subseteq\R^K$, their \emph{Hausdorff distance} is
\begin{align}\begin{split}\nonumber
d_H(A,B)
\coloneqq\max\left\{
\sup_{a\in A}\inf_{b\in B}\lVert a-b\rVert,
\sup_{b\in B}\inf_{a\in A}\lVert b-a\rVert
\right\}.
\end{split}\end{align}
\end{definition}
The Hausdorff distance $d_H(A,B)$ bounds the distance from every point of either set to its nearest point in the other set.
For nonempty compact sets $A,B,C\subseteq\R^K$, $d_H$ satisfies the metric properties \citep[Section 1.8, p. 61]{Schneider2014}
\begin{align}\begin{split}\nonumber
d_H(A,B)=d_H(B,A)\geq0,
d_H(A,B)=0\ \Leftrightarrow\ A=B,
d_H(A,C)\leq d_H(A,B)+d_H(B,C).
\end{split}\end{align}

For compact convex sets, Hausdorff convergence implies convergence of volume $\Vol_K$.
\begin{proposition}[Theorem 1.8.20 of \cite{Schneider2014}]
\label{prop:hausdorff-continuity}
For nonempty compact convex sets $A^{(n)},A\subseteq\R^K$, Hausdorff convergence $d_H(A^{(n)},A)\to0$ implies convergence of volume.
\begin{align}\begin{split}\nonumber
\Vol_K(A^{(n)})\to\Vol_K(A).
\end{split}\end{align}
\end{proposition}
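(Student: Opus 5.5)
The plan is to prove continuity of volume under Hausdorff convergence in two steps. First, reduce to the case where the limit $A$ has nonempty interior; the lower-dimensional case is handled separately. Then compare $A^{(n)}$ with inner and outer parallel bodies of $A$.

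\emph{Upper bound.} Let $B$ be the closed unit ball and set $\varepsilon_n\coloneqq d_H(A^{(n)},A)\to0$. By Definition \ref{def:hausdorff-distance}, every point of $A^{(n)}$ lies within $\varepsilon_n$ of $A$, so $A^{(n)}\subseteq A+\varepsilon_nB$. Apply Theorem \ref{thm:minkowski-volume} with $M=2$, $A_1=A$, $A_2=B$, $\lambda_1=1$, $\lambda_2=\varepsilon_n$. This writes $\Vol_K(A+\varepsilon_nB)$ as a polynomial in $\varepsilon_n$ whose constant term is $V(A,\ldots,A)=\Vol_K(A)$. Hence
\begin{align}\begin{split}\nonumber
\limsup_n\Vol_K(A^{(n)})\leq\lim_n\Vol_K(A+\varepsilon_nB)=\Vol_K(A).
\end{split}\end{align}
This step needs no interior assumption. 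In particular, if $A$ lies in a lower-dimensional affine subspace, then $\Vol_K(A)=0$ and the upper bound alone finishes the proof.

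\emph{Lower bound when $A$ has interior.} Translate so that $0$ is an interior point of $A$, say $rB\subseteq A$ with $r>0$. Fix $\delta\in(0,1)$ and consider $(1-\delta)A$. Convexity gives $(1-\delta)A+\delta rB\subseteq(1-\delta)A+\delta A=A$. I claim that $(1-\delta)A\subseteq A^{(n)}$ once $\varepsilon_n<\delta r$. Suppose instead that some $y\in(1-\delta)A$ is not in $A^{(n)}$. Since $A^{(n)}$ is compact and convex, a separating hyperplane theorem gives a unit vector $w$ with $\langle w,z\rangle\leq\langle w,y\rangle$ for all $z\in A^{(n)}$. The point $y+\delta rw$ belongs to $A$. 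Its distance to $A^{(n)}$ is at least $\langle w,y+\delta rw\rangle-\sup_{z\in A^{(n)}}\langle w,z\rangle\geq\delta r>\varepsilon_n$. This contradicts $d_H(A^{(n)},A)=\varepsilon_n$. Therefore
\begin{align}\begin{split}\nonumber
\liminf_n\Vol_K(A^{(n)})\geq(1-\delta)^K\Vol_K(A),
\end{split}\end{align}
and letting $\delta\downarrow0$ gives the lower bound.

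The main obstacle is this lower bound. The inner approximation must be done carefully, because $A^{(n)}$ need not contain $A$ minus a neighbourhood of its boundary unless convexity of $A^{(n)}$ is used, which is what the separation argument supplies. The degenerate case must also be split off: there the lower bound is trivial and only the upper bound is needed. Everything else follows from Theorem \ref{thm:minkowski-volume} and elementary scaling of volume.
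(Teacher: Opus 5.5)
The paper does not prove this statement; it imports Theorem 1.8.20 of \cite{Schneider2014}. Your argument is a correct, self-contained proof along classical lines.

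\begin{itemize}
\item The outer inclusion $A^{(n)}\subseteq A+\varepsilon_nB$ holds for any compact $A^{(n)}$.
\item The inner inclusion $(1-\delta)A\subseteq A^{(n)}$ is where convexity of $A^{(n)}$ is indispensable. Finite grids converge in Hausdorff distance to $[0,1]^K$ yet have volume zero. Your separation argument handles this step cleanly.
\end{itemize}

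Compared with the citation, your route makes explicit which hypothesis is used where; the citation buys only brevity.

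There are two points to tighten. First, using Theorem~\ref{thm:minkowski-volume} for the upper bound is legitimate inside the paper, where it is also a quoted result. As a stand-alone proof of Schneider's theorem, however, it is circular. The Minkowski polynomial for non-polytopal bodies such as $A$ and the ball $B$ is normally obtained from the polytope case by approximation, and that step relies on exactly this continuity of volume.

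You can avoid it entirely. Since $A$ is closed, $\bigcap_{\eta>0}(A+\eta B)=A$, and these sets have finite volume. Continuity from above of Lebesgue measure then gives $\Vol_K(A+\eta B)\downarrow\Vol_K(A)$ as $\eta\downarrow0$. Because $\varepsilon_n\leq\eta$ for all large $n$, you get $\limsup_n\Vol_K(A^{(n)})\leq\Vol_K(A+\eta B)$ for every $\eta>0$. This yields the upper bound with no mixed-volume machinery, no convexity of $A$, and no monotonicity of $\varepsilon_n$.

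Second, your case split uses the fact that a convex set with empty interior lies in an affine hyperplane. Equivalently, a convex set whose affine hull is $\R^K$ contains a nondegenerate simplex and hence a ball. State this explicitly, since it is what makes your two cases exhaustive.
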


For nonempty compact sets $A_m,B_m\subseteq\R^K$, $m=1,\ldots,M$, the Hausdorff distance between their weighted Minkowski sums satisfies the bound \citep[Section 1.8, p. 64]{Schneider2014}.
\begin{align}\begin{split}\label{eq:hausdorff-minkowski}
d_H\!\left(\sum_{m=1}^M\lambda_mA_m,\sum_{m=1}^M\lambda_mB_m\right)
\leq\sum_{m=1}^M\lambda_m d_H(A_m,B_m),
\quad \lambda_m\geq0.
\end{split}\end{align}
This bound shows that, for fixed weights $\lambda_m$, the two Minkowski sums become arbitrarily close as each $B_m$ approaches $A_m$ in Hausdorff distance.

The polarization formula in Definition \ref{def:mixed-volume} expresses the mixed volume $V$ as a finite linear combination of volumes.
Combining \eqref{eq:hausdorff-minkowski} with Proposition \ref{prop:hausdorff-continuity} therefore gives continuity of $V$.
\begin{proposition}
\label{prop:mixed-volume-continuity}
For $r=1,\ldots,K$, let $A_r^{(n)}$ and $A_r$ be nonempty compact convex subsets of $\R^K$.
If $d_H(A_r^{(n)},A_r)\to0$ for every $r$, then
\begin{align}\begin{split}\label{eq:mixed-continuity}
V(A_1^{(n)},\ldots,A_K^{(n)})
\to V(A_1,\ldots,A_K).
\end{split}\end{align}
\end{proposition}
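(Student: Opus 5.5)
The plan is to reduce the statement to Proposition \ref{prop:hausdorff-continuity} applied to each term of the polarization formula \eqref{eq:polytopal-mixed-volume}. That formula writes $V(A_1^{(n)},\ldots,A_K^{(n)})$ as a finite signed combination, with fixed coefficients $(-1)^{K-\lvert I\rvert}/K!$, of the volumes $\Vol_K\bigl(\sum_{r\in I}A_r^{(n)}\bigr)$ over the $2^K-1$ nonempty index sets $I\subseteq\{1,\ldots,K\}$. The coefficients do not depend on $n$, so it suffices to show, for each fixed nonempty $I$,
\begin{align}\begin{split}\nonumber
\Vol_K\Bigl(\sum_{r\in I}A_r^{(n)}\Bigr)\to\Vol_K\Bigl(\sum_{r\in I}A_r\Bigr).
\end{split}\end{align}
Convergence of the finite linear combination then follows by linearity of limits.

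Fix a nonempty $I$. First, both $\sum_{r\in I}A_r^{(n)}$ and $\sum_{r\in I}A_r$ are nonempty compact convex sets, as noted after the definition of the weighted Minkowski sum. Next, apply \eqref{eq:hausdorff-minkowski} with $\lambda_r=1$ for $r\in I$, which gives
\begin{align}\begin{split}\nonumber
d_H\Bigl(\sum_{r\in I}A_r^{(n)},\sum_{r\in I}A_r\Bigr)\leq\sum_{r\in I}d_H(A_r^{(n)},A_r).
\end{split}\end{align}
The right side is a finite sum of terms that each tend to zero by hypothesis, so the left side tends to zero. Proposition \ref{prop:hausdorff-continuity} then yields convergence of the volumes, which completes the argument.

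There is no genuine obstacle here. The only point to check is that Proposition \ref{prop:hausdorff-continuity} needs no full-dimensionality assumption: it is stated for all nonempty compact convex sets, so degenerate sums, such as lower-dimensional simplices $D(x)$ with some $q_k(x)=0$, are covered without any separate case analysis.
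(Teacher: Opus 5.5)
Your proof is correct and follows the paper's own argument. Both bound the Hausdorff distance of each partial sum $\sum_{r\in I}A_r^{(n)}$ via \eqref{eq:hausdorff-minkowski}, invoke Proposition \ref{prop:hausdorff-continuity} for the volumes, and pass to the limit term by term in the finite polarization formula \eqref{eq:polytopal-mixed-volume}.
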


\begin{proof}
For every nonempty $I\subseteq\{1,\ldots,K\}$, the bound in \eqref{eq:hausdorff-minkowski} gives
\begin{align}\begin{split}\nonumber
d_H\!\left(\sum_{r\in I}A_r^{(n)},\sum_{r\in I}A_r\right)
\leq\sum_{r\in I}d_H(A_r^{(n)},A_r)
\to0.
\end{split}\end{align}
These sums are compact and convex, so Proposition \ref{prop:hausdorff-continuity} gives convergence of their volumes.
The polarization formula in Definition \ref{def:mixed-volume} is a finite linear combination of these volumes, so taking limits term by term gives \eqref{eq:mixed-continuity}.
\end{proof}

Polytopal approximation provides one application of Proposition \ref{prop:mixed-volume-continuity}.
Every nonempty compact convex set can be approximated by nonempty polytopes \citep[Theorem 1.8.16, p. 67]{Schneider2014}.
That is, for $A_1,\ldots,A_K$, we can choose nonempty polytopes $A_r^{(n)}$ satisfying $d_H(A_r^{(n)},A_r)\to0$ for every $r$.
Proposition \ref{prop:mixed-volume-continuity} ensures that $V(A_1^{(n)},\ldots,A_K^{(n)})$ converges to $V(A_1,\ldots,A_K)$.

In addition to continuity, the mixed-volume function $V$ has the following properties \citep[Section 5.1, pp. 280--282]{Schneider2014}.
All sets in these statements are nonempty compact convex subsets of $\R^K$.
\begin{enumerate}
\item \emph{Symmetry and nonnegativity.}
The function $V$ is symmetric in its arguments and satisfies
\begin{align}\begin{split}\label{eq:mixed-normalization}
V(A_1,\ldots,A_K)\geq0,
\quad V(A,\ldots,A)=\Vol_K(A).
\end{split}\end{align}
\item \emph{Minkowski multilinearity.}
For $\lambda,\mu\geq0$,
\begin{align}\begin{split}\nonumber
V(\lambda A+\mu B,A_2,\ldots,A_K)
=\lambda V(A,A_2,\ldots,A_K)+\mu V(B,A_2,\ldots,A_K).
\end{split}\end{align}
\item \emph{Monotonicity.}
If $A_r\subseteq B_r$ for $r=1,\ldots,K$,
\begin{align}\begin{split}\label{eq:mixed-monotonicity}
0\leq V(A_1,\ldots,A_K)\leq V(B_1,\ldots,B_K).
\end{split}\end{align}
\end{enumerate}


\subsection{Computing the volume of the ROC body}
\label{subsec:roc-volume-formula}
\label{subsec:aumann-mixed-volume}
\label{subsec:coordinate-assignment}

We apply the tools from subsection \ref{subsec:mixed-volume-tools} to the ROC body $\cH$.
Its Aumann representation in Proposition \ref{prop:aumann-roc} gives
\begin{align}\begin{split}\label{eq:volume-aumann-body}
\Vol_K(\cH)
=\Vol_K\!\left(\int_{\mathcal X}D(x)\dd P(x)\right)
=\Vol_K\bigl(\E_A[D(X)]\bigr).
\end{split}\end{align}
Although $D(x)$ has explicit vertices given in \eqref{eq:random-coordinate-simplex}, the set $\E_A[D(X)]$ of expectations of all its measurable selections generally has a more complicated shape.


Let $X_1,\ldots,X_K$ be independent draws from $P$, with joint distribution $P^K$.
We first consider the case in which the random coordinate simplex $D(X)$ defined in \eqref{eq:random-coordinate-simplex} takes only finitely many values.

\begin{lemma}
\label{lem:finite-coordinate-simplices}
Suppose $D(X)$ takes only the distinct values $D_1,\ldots,D_M$, each with positive probability.
The Aumann expectation $\E_A[D(X)]$ is the weighted Minkowski sum of $D_1,\ldots,D_M$, with their probabilities as weights.
\begin{align}\begin{split}\label{eq:finite-aumann-minkowski}
\E_A[D(X)]=\int_{\mathcal X}D(x)\dd P(x)=\sum_{m=1}^M \Pp(D(X)=D_m)D_m.
\end{split}\end{align}
Theorem \ref{thm:minkowski-volume} and the independence of $X_1,\ldots,X_K$ give
\begin{align}\begin{split}\nonumber
\Vol_K\bigl(\E_A[D(X)]\bigr)
&=\sum_{i_1,\ldots,i_K=1}^M
\left(\prod_{r=1}^K\Pp(D(X)=D_{i_r})\right)V(D_{i_1},\ldots,D_{i_K})\\
&=\sum_{i_1,\ldots,i_K=1}^M
\Pp\!\left(D(X_1)=D_{i_1},\ldots,D(X_K)=D_{i_K}\right)V(D_{i_1},\ldots,D_{i_K})\\
&=\E_{P^K}\!\left[V(D(X_1),\ldots,D(X_K))\right].
\end{split}\end{align}
\end{lemma}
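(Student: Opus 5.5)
The argument has two parts. First comes the set identity \eqref{eq:finite-aumann-minkowski}. The volume formula then follows from Theorem \ref{thm:minkowski-volume} plus a rewriting of products of probabilities as a joint probability.

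For the set identity, write $\pi_m=\Pp(D(X)=D_m)>0$ and $S_m=\{x:D(x)=D_m\}$. Each $S_m$ is measurable: by Definition \ref{def:coordinate-simplex}, $D(x)=D_m$ holds exactly when $(q_1(x),\ldots,q_K(x))$ equals the fixed vertex vector $a^{(m)}$ of $D_m$, and the $q_k$ are measurable. The sets $S_1,\ldots,S_M$ partition $\mathcal X$ up to a $P$-null set. The equality of the Aumann integral with $\E_A[D(X)]$ is already given by Lemma \ref{lem:random-simplex-regularity}, so only $\int_{\mathcal X}D\dd P=\sum_m\pi_mD_m$ remains.

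\emph{Inclusion $\subseteq$.} Let $f$ be an integrable measurable selection. For each $m$, the point $y_m\coloneqq\pi_m^{-1}\E_P[f(X)\1\{X\in S_m\}]$ is a conditional mean of a random vector supported on the compact convex set $D_m$, so $y_m\in D_m$. I would justify this by separation: if $y_m\notin D_m$, a strict separating hyperplane integrated over $S_m$ gives a contradiction. Then $\E_P[f(X)]=\sum_m\pi_my_m$.

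\emph{Inclusion $\supseteq$.} Given $y_m\in D_m$, define the selection $f(x)=y_m$ for $x\in S_m$. It is measurable and bounded, and $\E_P[f(X)]=\sum_m\pi_my_m$.

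Applying Theorem \ref{thm:minkowski-volume} with $\lambda_m=\pi_m$ and $A_m=D_m$ gives $\Vol_K(\E_A[D(X)])=\sum_{i_1,\ldots,i_K}\prod_r\pi_{i_r}V(D_{i_1},\ldots,D_{i_K})$. Because $X_1,\ldots,X_K$ are independent copies of $X$, the product $\prod_r\pi_{i_r}$ equals $\Pp(D(X_1)=D_{i_1},\ldots,D(X_K)=D_{i_K})$. The random vector $(D(X_1),\ldots,D(X_K))$ takes values in the finite set $\{D_1,\ldots,D_M\}^K$ almost surely, and the events indexed by $(i_1,\ldots,i_K)$ are disjoint and exhaust this set up to a null set. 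The sum is therefore the expectation of the simple random variable $V(D(X_1),\ldots,D(X_K))$. This random variable is measurable as a function of the finitely-valued index tuple, and it is nonnegative and bounded by monotonicity \eqref{eq:mixed-monotonicity} together with \eqref{eq:random-simplex-bound}.

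The main obstacle is the $\subseteq$ inclusion, specifically the claim that the conditional mean of a selection lies in $D_m$. The separation argument is standard, but it needs care with the null set where $f\notin D$. I would handle that set by redefining $f$ there, which leaves the integral unchanged.
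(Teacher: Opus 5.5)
Your proposal is correct and follows the paper's proof. It uses the same partition into the events $\{D(X)=D_m\}$, conditional means $y_m$ for the inclusion $\subseteq$, a selection constant on each event for $\supseteq$, and Theorem~\ref{thm:minkowski-volume} with independence to turn $\prod_r\Pp(D(X)=D_{i_r})$ into a joint probability. The only local differences are minor and both variants are valid. You place $y_m$ in $D_m$ by a separating-hyperplane argument, whereas the paper checks that the explicit linear inequalities from \eqref{eq:coordinate-simplex-representation} survive conditional expectation. You also cite Lemma~\ref{lem:random-simplex-regularity} for $\int_{\mathcal X}D(x)\dd P(x)=\E_A[D(X)]$, where the paper uses compactness of the weighted Minkowski sum.
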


\begin{proof}
Writing $D_m=D(a_{m1},\ldots,a_{mK})$, equality of coordinate simplices is equivalent to equality of their axis lengths,
\begin{align}\begin{split}\label{eq:finite-simplex-events}
\{x:D(x)=D_m\}
=\bigcap_{k=1}^K\{x:q_k(x)=a_{mk}\}.
\end{split}\end{align}
These events are measurable, disjoint, and cover a set of $P$-measure one.

Take an integrable measurable selection $f$ of $x\mapsto D(x)$.
Set $f(x)=0$ on the measurable $P$-null set where $f(x)\notin D(x)$, which preserves its integral.
For $m=1,\ldots,M$, define
\begin{align}\begin{split}\nonumber
y_m\coloneqq\E_P[f(X)\mid D(X)=D_m]
=\frac{\E_P[f(X)\mathbf 1_{\{D(X)=D_m\}}]}{\Pp(D(X)=D_m)}.
\end{split}\end{align}
The coordinate description \eqref{eq:coordinate-simplex-representation} verifies directly that $y_m\in D_m$.
Indeed, on $\{D(X)=D_m\}$,
$f_k(X)\geq0$,
$f_k(X)=0$
if $a_{mk}=0$, and 
$\sum_{k:a_{mk}>0}\frac{f_k(X)}{a_{mk}} \leq 1$.
Taking conditional expectations gives $(y_m)_k \geq 0$, $(y_m)_k=0$ if $a_{mk}=0$, and 
\begin{align}\begin{split}\nonumber
\sum_{k:a_{mk}>0}\frac{(y_m)_k}{a_{mk}}
&=\E_P\!\left[\left.\sum_{k:a_{mk}>0}\frac{f_k(X)}{a_{mk}}\right|D(X)=D_m\right]
\leq1.
\end{split}\end{align}
Thus, the law of total expectation yields
\begin{align}\begin{split}\nonumber
\E_P[f(X)]
&=\sum_{m=1}^M \Pp(D(X)=D_m)\E_P[f(X)\mid D(X)=D_m] 
\in\sum_{m=1}^M \Pp(D(X)=D_m)D_m,
\end{split}\end{align}
and hence
\begin{align}\begin{split}\nonumber
\int_{\mathcal X}D(x)\dd P(x)\subseteq\sum_{m=1}^M \Pp(D(X)=D_m)D_m.
\end{split}\end{align}

Conversely, take $u\in\sum_m \Pp(D(X)=D_m)D_m$, so $u=\sum_m \Pp(D(X)=D_m)y_m$ for some $y_m\in D_m$.
Define
\begin{align}\begin{split}\nonumber
f(x)\coloneqq\begin{cases}
y_m,&D(x)=D_m,\quad m=1,\ldots,M,\\
0,&D(x)\notin\{D_1,\ldots,D_M\}.
\end{cases}
\end{split}\end{align}
Equation \eqref{eq:finite-simplex-events} makes $f$ measurable, and its finite range makes it integrable.
For every $x$,
\begin{align}\begin{split}\nonumber
f(x)\in D(x),
\quad
\begin{cases}
f(x)=y_m\in D_m=D(x),&D(x)=D_m,\\
f(x)=0\in\conv\{0,q_1(x)e_1,\ldots,q_K(x)e_K\}=D(x),&D(x)\notin\{D_m\}_{m=1}^M.
\end{cases}
\end{split}\end{align}
Therefore,
\begin{align}\begin{split}\nonumber
u=\sum_{m=1}^M \Pp(D(X)=D_m)y_m=\E_P[f(X)]\in\int_{\mathcal X}D(x)\dd P(x),
\end{split}\end{align}
which proves the reverse inclusion.
The weighted Minkowski sum is compact, so Definition \ref{def:aumann-expectation} gives
\begin{align}\begin{split}\nonumber
\E_A[D(X)]
&=\int_{\mathcal X}D(x)\dd P(x) \\ 
&=\sum_{m=1}^M \Pp(D(X)=D_m)D_m\\
&=\left\{\sum_{m=1}^M \Pp(D(X)=D_m)y_m:y_m\in D_m,\ m=1,\ldots,M\right\}.
\end{split}\end{align}

Apply \eqref{eq:mixed-volume-polynomial} to the sum in \eqref{eq:finite-aumann-minkowski}, with $\lambda_m=\Pp(D(X)=D_m)$, to obtain
\begin{align}\begin{split}\nonumber
\Vol_K\bigl(\E_A[D(X)]\bigr)
&=\Vol_K\!\left(\sum_{m=1}^M \Pp(D(X)=D_m)D_m\right)\\
&=\sum_{i_1,\ldots,i_K=1}^M
\left(\prod_{r=1}^K\Pp(D(X)=D_{i_r})\right)V(D_{i_1},\ldots,D_{i_K}).
\end{split}\end{align}
For any ordered tuple $(i_1,\ldots,i_K)\in\{1,\ldots,M\}^K$, independence gives
\begin{align}\begin{split}\nonumber
\Pp(D(X_1)=D_{i_1},\ldots,D(X_K)=D_{i_K})
&=\prod_{r=1}^K\Pp(D(X_r)=D_{i_r})\\
&=\prod_{r=1}^K\Pp(D(X)=D_{i_r}).
\end{split}\end{align}
On that event,
\begin{align}\begin{split}\nonumber
V(D(X_1),\ldots,D(X_K))=V(D_{i_1},\ldots,D_{i_K}).
\end{split}\end{align}
Summing over the disjoint events, including tuples with repeated indices, yields
\begin{align}\begin{split}\nonumber
\E_{P^K}[V(D(X_1),\ldots,D(X_K))]
&=\sum_{i_1,\ldots,i_K=1}^M
V(D_{i_1},\ldots,D_{i_K})
\Pp(D(X_r)=D_{i_r}\text{ for all }r)\\
&=\sum_{i_1,\ldots,i_K=1}^M
\left(\prod_{r=1}^K\Pp(D(X)=D_{i_r})\right)V(D_{i_1},\ldots,D_{i_K})\\
&=\Vol_K\bigl(\E_A[D(X)]\bigr),
\end{split}\end{align}
where the last equality follows from Theorem \ref{thm:minkowski-volume}.
The argument uses only the nonnegativity and measurability of the coordinate lengths and the finitely many possible simplex values.
\end{proof}

We extend the calculation in Lemma \ref{lem:finite-coordinate-simplices} to the general case by approximating $D(X)$ with finite-valued random coordinate simplices in Hausdorff distance.
The continuity results in Propositions \ref{prop:hausdorff-continuity} and \ref{prop:mixed-volume-continuity}, together with the bound in \eqref{eq:random-simplex-bound}, give the corresponding volume limits.

\begin{lemma}
\label{lem:finite-simplex-approximation}
\label{lem:coordinate-volume-limits}
Consider the random coordinate simplex $D(X)$ defined in \eqref{eq:random-coordinate-simplex}.
There exists a sequence of measurable random coordinate simplices $D^{(n)}(X)$, $n\geq1$.
Each $D^{(n)}(X)$ takes only finitely many values, such that 
$d_H(D^{(n)}(x),D(x))\to 0$  and 
\begin{align}
\Vol_K\bigl(\E_A[D^{(n)}(X)]\bigr)
& \to\Vol_K\bigl(\E_A[D(X)]\bigr), \label{eq:aumann-volume-limit}\\
\E_{P^K}[V(D^{(n)}(X_1),\ldots,D^{(n)}(X_K))]
& \to\E_{P^K}[V(D(X_1),\ldots,D(X_K))]. \label{eq:expected-mixed-volume-limit}
\end{align}
\end{lemma}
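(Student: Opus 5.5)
I would construct $D^{(n)}$ by quantizing the axis lengths from below. For each $k$ and $n$, set
\begin{align}\begin{split}\nonumber
q_k^{(n)}(x)\coloneqq 2^{-n}\bigl\lfloor 2^n\min\{q_k(x),\rho_k^{-1}\}\bigr\rfloor ,
\end{split}\end{align}
and take $D^{(n)}(x)\coloneqq D(q_1^{(n)}(x),\ldots,q_K^{(n)}(x))$. By \eqref{eq:q-properties} the truncation at $\rho_k^{-1}$ is inactive almost everywhere. It only ensures that each $q_k^{(n)}$ takes finitely many values in $[0,\rho_k^{-1}]$, so $D^{(n)}(X)$ takes finitely many values. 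Each $q_k^{(n)}$ is measurable, and $0\leq q_k^{(n)}\leq q_k$ with $q_k-q_k^{(n)}\leq 2^{-n}$.

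Next I would bound the Hausdorff distance. Both simplices can be parametrized by the same coefficients $\alpha\in\R_+^K$ with $\sum_k\alpha_k\leq1$ in \eqref{eq:coordinate-simplex-representation}. Matching the points $\sum_kq_k\alpha_ke_k$ and $\sum_kq_k^{(n)}\alpha_ke_k$ gives
\begin{align}\begin{split}\nonumber
d_H\bigl(D^{(n)}(x),D(x)\bigr)\leq\max_k\bigl|q_k(x)-q_k^{(n)}(x)\bigr|\leq 2^{-n}.
\end{split}\end{align}
This convergence is uniform in $x$, which makes both limits easy.

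For \eqref{eq:expected-mixed-volume-limit}, Proposition \ref{prop:mixed-volume-continuity} gives pointwise convergence $V(D^{(n)}(X_1),\ldots,D^{(n)}(X_K))\to V(D(X_1),\ldots,D(X_K))$ for every realization. By \eqref{eq:random-simplex-bound}, every $D^{(n)}(x)$ and $D(x)$ lies in $D(\rho_1^{-1},\ldots,\rho_K^{-1})$. Monotonicity \eqref{eq:mixed-monotonicity} and \eqref{eq:mixed-normalization} then bound the integrand by the constant $\Vol_K(D(\rho_1^{-1},\ldots,\rho_K^{-1}))$. Dominated convergence finishes this limit.

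Measurability of the integrand also needs a word. For finitely valued $D^{(n)}$ it is a simple function. For $D$ itself it is a pointwise limit of these simple functions, so it is measurable.

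For \eqref{eq:aumann-volume-limit}, the main step is to show $d_H(\E_A[D^{(n)}(X)],\E_A[D(X)])\leq 2^{-n}$. Then Proposition \ref{prop:hausdorff-continuity} applies, since both Aumann expectations are nonempty compact convex sets by Theorem \ref{thm:aumann-compactness} and Lemmas \ref{lem:random-simplex-regularity}–\ref{lem:coordinate-simplex-selections}, the latter applied with $b_k=q_k^{(n)}$.

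To prove the bound I would use the selection characterization of Lemma \ref{lem:coordinate-simplex-selections}. A selection of $D^{(n)}$ is $f^{(n)}_\alpha=\sum_kq_k^{(n)}\alpha_ke_k$, and the same $\alpha$ yields the selection $f_\alpha$ of $D$. The difference of their expectations has norm at most $\E_P\bigl[\max_k|q_k-q_k^{(n)}|\bigr]\leq2^{-n}$, because $\sum_k\alpha_k\leq1$. The converse direction uses the same $\alpha$ in reverse.

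I expect this step to be the main obstacle, because it has to pass Hausdorff closeness from the pointwise simplices to the Aumann integrals. Parametrizing the selections by the common $\alpha$ makes it a direct coupling argument rather than an appeal to general results on the Lipschitz continuity of Aumann integrals.
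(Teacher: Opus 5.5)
Your proposal is correct and matches the paper's proof essentially step by step. You quantize the axis lengths from below and couple points of $D^{(n)}(x)$ and $D(x)$, and selections of the two Aumann integrals, through a common coefficient vector $\alpha$ to get a uniform Hausdorff bound. You then apply Proposition \ref{prop:hausdorff-continuity} for \eqref{eq:aumann-volume-limit}, and Proposition \ref{prop:mixed-volume-continuity} with dominated convergence under the constant bound $\Vol_K(D(\rho_1^{-1},\ldots,\rho_K^{-1}))$ for \eqref{eq:expected-mixed-volume-limit}. The differences are cosmetic: you use a dyadic grid $2^{-n}$ instead of the paper's grid $1/n$, and your explicit truncation at $\rho_k^{-1}$ replaces the paper's choice of versions of $q_k$ satisfying \eqref{eq:q-properties} everywhere (under that choice the truncation is inactive everywhere, so $d_H(D^{(n)}(x),D(x))\to0$ holds for every $x$).
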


\begin{proof}
For $n\geq1$, define
\begin{align}\begin{split}\label{eq:finite-simplex-approximation}
q_k^{(n)}(x)\coloneqq\frac{\lfloor nq_k(x)\rfloor}{n},
\quad
D^{(n)}(x)\coloneqq D(q_1^{(n)}(x),\ldots,q_K^{(n)}(x)).
\end{split}\end{align}
Equation \eqref{eq:q-properties} gives $0\leq nq_k(x)\leq n\rho_k^{-1}$, $0\leq\lfloor nq_k(x)\rfloor \leq\lfloor n\rho_k^{-1}\rfloor$, and 
\begin{align}\begin{split}\nonumber
q_k^{(n)}(x)&\in
\left\{\frac{j}{n}:j=0,1,\ldots,\lfloor n\rho_k^{-1}\rfloor\right\}.
\end{split}\end{align}
Thus, the vector $(q_1^{(n)}(X),\ldots,q_K^{(n)}(X))$ takes values in a set of cardinality at most
$\prod_{k=1}^K\bigl(\lfloor n\rho_k^{-1}\rfloor+1\bigr)<\infty$.
The simplex $D^{(n)}(X)$ is determined by this vector, so for each fixed $n$ it equals one of finitely many possible coordinate simplices.
The floor inequalities also imply
\begin{align}\begin{split}\nonumber
\lfloor nq_k(x)\rfloor&\leq nq_k(x)<\lfloor nq_k(x)\rfloor+1,\\
0\leq q_k(x)-q_k^{(n)}(x)
&=\frac{nq_k(x)-\lfloor nq_k(x)\rfloor}{n}<\frac1n.
\end{split}\end{align}

The inequalities $0\leq q_k^{(n)}\leq q_k$ also give $D^{(n)}(x)\subseteq D(x)$.

For each selection, set the selection and its representing coefficients to zero on the measurable $P$-null set where the selection conditions fail.
Using Lemma \ref{lem:coordinate-simplex-selections}, pair the selection $f_\alpha$ in \eqref{eq:coordinate-simplex-selection} with an approximating selection using the same measurable coefficients $\alpha_k\geq0$, where $\sum_k\alpha_k\leq1$,
\begin{align}\begin{split}\nonumber
f_\alpha(x) = \sum_{k=1}^Kq_k(x)\alpha_k(x)e_k\in D(x),
f_\alpha^{(n)}(x) = \sum_{k=1}^Kq_k^{(n)}(x)\alpha_k(x)e_k\in D^{(n)}(x).
\end{split}\end{align}
By the triangle inequality and $\lVert e_k\rVert=1$,
\begin{align}\begin{split}\label{eq:paired-selection-bound}
\lVert f_\alpha(x)-f_\alpha^{(n)}(x)\rVert
&=\left\|\sum_{k=1}^K(q_k(x)-q_k^{(n)}(x))\alpha_k(x)e_k\right\|\\
&\leq\sum_{k=1}^K(q_k(x)-q_k^{(n)}(x))\alpha_k(x) 
\leq\frac1n\sum_{k=1}^K\alpha_k(x)
\leq\frac1n.
\end{split}\end{align}
Every point of either simplex admits such a coefficient representation, so
\begin{align}\begin{split}\label{eq:simplex-hausdorff-approximation}
d_H(D^{(n)}(x),D(x))\leq\frac1n.
\end{split}\end{align}

The construction in the proof of Lemma \ref{lem:coordinate-simplex-selections} also applies with $q_k^{(n)}$ in place of $q_k$.
These functions are measurable and satisfy $0\leq q_k^{(n)}\leq q_k$, so $x\mapsto D^{(n)}(x)$ is measurable and integrably bounded, with nonempty compact convex values.
Theorem \ref{thm:aumann-compactness} therefore identifies its Aumann expectation with its Aumann integral.
In particular, any measurable selection $f$ of $D^{(n)}(x)$ is represented by
\begin{align}\begin{split}\nonumber
\alpha_k(x)
=\begin{cases}
f_k(x)/q_k^{(n)}(x),&q_k^{(n)}(x)>0,\\
0,&q_k^{(n)}(x)=0,
\end{cases}
\quad
\sum_{k=1}^K\alpha_k(x)\leq1.
\end{split}\end{align}
Hence the paired expectation vectors
\begin{align}\begin{split}\nonumber
u&\coloneqq\E_P[f_\alpha(X)]
=\bigl(\E_P[q_k(X)\alpha_k(X)]\bigr)_{k=1}^K\in\E_A[D(X)],\\
u^{(n)}&\coloneqq\E_P[f_\alpha^{(n)}(X)]
=\bigl(\E_P[q_k^{(n)}(X)\alpha_k(X)]\bigr)_{k=1}^K\in\E_A[D^{(n)}(X)]
\end{split}\end{align}
exhaust the respective Aumann expectations as $\alpha$ varies.
Using \eqref{eq:paired-selection-bound},
\begin{align}\begin{split}\nonumber
\lVert u-u^{(n)}\rVert
&=\left\|\E_P\left[\sum_{k=1}^K(q_k(X)-q_k^{(n)}(X))\alpha_k(X)e_k\right]\right\|\\
&\leq\E_P\!\left[\left\|\sum_{k=1}^K(q_k(X)-q_k^{(n)}(X))\alpha_k(X)e_k\right\|\right]\\
&\leq\E_P[1/n]=1/n.
\end{split}\end{align}
For every $u\in\E_A[D(X)]$, the same $\alpha$ supplies a corresponding $u^{(n)}$, and conversely for every $u^{(n)}\in\E_A[D^{(n)}(X)]$.
Therefore,
\begin{align}\begin{split}\nonumber
\sup_{u\in\E_A[D(X)]}
\inf_{u^{(n)}\in\E_A[D^{(n)}(X)]}
\lVert u-u^{(n)}\rVert&\leq\frac1n,\\
\sup_{u^{(n)}\in\E_A[D^{(n)}(X)]}
\inf_{u\in\E_A[D(X)]}
\lVert u-u^{(n)}\rVert&\leq\frac1n.
\end{split}\end{align}
Taking their maximum gives
\begin{align}\begin{split}\label{eq:aumann-hausdorff-approximation}
d_H\bigl(\E_A[D^{(n)}(X)],\E_A[D(X)]\bigr)\leq\frac1n.
\end{split}\end{align}

By Proposition \ref{prop:hausdorff-continuity} and \eqref{eq:aumann-hausdorff-approximation},
\begin{align}\begin{split}\nonumber
\Vol_K\bigl(\E_A[D^{(n)}(X)]\bigr) \to \Vol_K\bigl(\E_A[D(X)]\bigr).
\end{split}\end{align}
The versions of $q_k$ chosen after \eqref{eq:q-properties} satisfy the coordinate bounds everywhere, so \eqref{eq:simplex-hausdorff-approximation} holds for every $x$.
Proposition \ref{prop:mixed-volume-continuity} therefore gives, for every $(x_1,\ldots,x_K)\in\mathcal X^K$,
\begin{align}\begin{split}\label{eq:mixed-volume-pointwise-limit}
V(D^{(n)}(x_1),\ldots,D^{(n)}(x_K))
\to V(D(x_1),\ldots,D(x_K)).
\end{split}\end{align}
For each $n$, the mixed volume $V(D^{(n)}(X_1),\ldots,D^{(n)}(X_K))$ is a measurable finite-valued function.
Its pointwise limit $V(D(X_1),\ldots,D(X_K))$ is therefore measurable.
The containment in \eqref{eq:random-simplex-bound}, together with $0\leq q_k^{(n)}\leq q_k$, gives
\begin{align}\begin{split}\nonumber
D^{(n)}(X_r)\subseteq D(X_r)\subseteq D(\rho_1^{-1},\ldots,\rho_K^{-1}),
\quad r=1,\ldots,K.
\end{split}\end{align}
Monotonicity \eqref{eq:mixed-monotonicity}, normalization \eqref{eq:mixed-normalization}, and \eqref{eq:coordinate-simplex-volume} yield
\begin{align}\begin{split}\label{eq:mixed-volume-dominating-bound}
0\leq V(D^{(n)}(X_1),\ldots,D^{(n)}(X_K))
&\leq V\bigl(D(\rho_1^{-1},\ldots,\rho_K^{-1}),\ldots,D(\rho_1^{-1},\ldots,\rho_K^{-1})\bigr)\\
&=\Vol_K\bigl(D(\rho_1^{-1},\ldots,\rho_K^{-1})\bigr)\\
&=\frac{1}{K!\prod_{k=1}^K\rho_k}<\infty.
\end{split}\end{align}
Equations \eqref{eq:mixed-volume-pointwise-limit} and \eqref{eq:mixed-volume-dominating-bound} allow dominated convergence,
\begin{align}\begin{split}\nonumber
\E_{P^K}[V(D^{(n)}(X_1),\ldots,D^{(n)}(X_K))]
\to\E_{P^K}[V(D(X_1),\ldots,D(X_K))].
\end{split}\end{align}
\end{proof}

Lemma \ref{lem:finite-coordinate-simplices} expresses $\Vol_K(\E_A[D(X)])$ as an expected mixed volume when $D(X)$ takes finitely many values.
For the approximations $D^{(n)}(X)$ in Lemma \ref{lem:finite-simplex-approximation}, this identity and the limits in \eqref{eq:aumann-volume-limit} and \eqref{eq:expected-mixed-volume-limit} give
\begin{align}\begin{split}\nonumber
\Vol_K\bigl(\E_A[D(X)]\bigr)
&=\lim_{n\to\infty}\Vol_K\bigl(\E_A[D^{(n)}(X)]\bigr)\\
&=\lim_{n\to\infty}\E_{P^K}[V(D^{(n)}(X_1),\ldots,D^{(n)}(X_K))]\\
&=\E_{P^K}[V(D(X_1),\ldots,D(X_K))].
\end{split}\end{align}
Equation \eqref{eq:volume-aumann-body} then identifies $\Vol_K(\E_A[D(X)])$ with the ROC volume $\Vol_K(\cH)$.
Theorem \ref{prop:polarization} summarizes the resulting formula.

\begin{theorem}
\label{prop:polarization}
The ROC volume $\Vol_K(\cH)$ equals the expected mixed volume of $K$ independent copies of the random coordinate simplex $D(X)$ defined in \eqref{eq:random-coordinate-simplex}.
\begin{align}\begin{split}\label{eq:expected-mixed-volume}
\Vol_K(\cH)
=\Vol_K\bigl(\E_A[D(X)]\bigr)
=\E_{P^K}\!\left[V(D(X_1),\ldots,D(X_K))\right].
\end{split}\end{align}
\end{theorem}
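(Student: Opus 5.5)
The theorem assembles results already in hand, so the plan is a short chain of identifications. First, the equality $\Vol_K(\cH)=\Vol_K(\E_A[D(X)])$ follows from the Aumann representation $\cH=\int_{\mathcal X}D(x)\dd P(x)=\E_A[D(X)]$ in Proposition \ref{prop:aumann-roc}, as recorded in \eqref{eq:volume-aumann-body}. The substantive part is the second equality. I would prove it in two stages: first for finite-valued random simplices, then by approximation.

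For the finite-valued stage, take the discretized ratios $q_k^{(n)}=\lfloor nq_k\rfloor/n$ and the simplices $D^{(n)}(x)$ from \eqref{eq:finite-simplex-approximation}. For each fixed $n$, $D^{(n)}(X)$ takes finitely many values, and each value is determined by measurable level sets of the $q_k^{(n)}$. I would apply Lemma \ref{lem:finite-coordinate-simplices} with $q^{(n)}$ in place of $q$. That lemma's proof uses only nonnegativity, measurability and finiteness of the range, as its final remark notes, and values of probability zero can simply be discarded. It gives
\begin{align}\begin{split}\nonumber
\Vol_K\bigl(\E_A[D^{(n)}(X)]\bigr)=\E_{P^K}\bigl[V(D^{(n)}(X_1),\ldots,D^{(n)}(X_K))\bigr].
\end{split}\end{align}
The mechanism is that the Aumann expectation is the probability-weighted Minkowski sum. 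Minkowski's volume theorem (Theorem \ref{thm:minkowski-volume}) then expands its volume as a polynomial in the weights. Independence of $X_1,\ldots,X_K$ turns the products of weights into joint probabilities.

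For the limit stage, I would invoke Lemma \ref{lem:finite-simplex-approximation} and pass $n\to\infty$ on both sides of the identity.
\begin{itemize}
\item On the left, the $1/n$ Hausdorff bound \eqref{eq:aumann-hausdorff-approximation} between the Aumann expectations, combined with Proposition \ref{prop:hausdorff-continuity}, gives \eqref{eq:aumann-volume-limit}.
\item On the right, pointwise convergence of the mixed volumes comes from Proposition \ref{prop:mixed-volume-continuity}. These mixed volumes are uniformly bounded by $\bigl(K!\prod_k\rho_k\bigr)^{-1}$ via monotonicity and the containment \eqref{eq:random-simplex-bound}, so dominated convergence gives \eqref{eq:expected-mixed-volume-limit}.
\end{itemize}
Equating the two limits yields the formula.

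The main obstacle is the Hausdorff approximation of the Aumann expectations, not that of the pointwise simplices. Both $\E_A[D(X)]$ and $\E_A[D^{(n)}(X)]$ must be parametrized by the same coefficient functions $\alpha$. This works because Lemma \ref{lem:coordinate-simplex-selections} characterizes all selections of either multifunction in the form $f_\alpha$. Once that pairing is in place, the $L^1$ bound $\E\lVert f_\alpha-f^{(n)}_\alpha\rVert\le 1/n$ closes the argument. A secondary point is measurability of $(x_1,\ldots,x_K)\mapsto V(D(x_1),\ldots,D(x_K))$. I would obtain it as the pointwise limit of the finitely-valued measurable functions $V(D^{(n)}(x_1),\ldots,D^{(n)}(x_K))$.
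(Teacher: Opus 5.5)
Your proposal is correct and follows the paper's proof essentially step for step. The first equality comes from the Aumann representation in Proposition \ref{prop:aumann-roc}, the finite-valued identity from Lemma \ref{lem:finite-coordinate-simplices} applied to the discretized simplices $D^{(n)}(X)$, and the two limits from Lemma \ref{lem:finite-simplex-approximation} via the $\alpha$-pairing of selections, Hausdorff continuity, and dominated convergence with the bound $(K!\prod_k\rho_k)^{-1}$. Your remark that values of $D^{(n)}(X)$ with probability zero can be discarded before invoking Lemma \ref{lem:finite-coordinate-simplices}, which assumes positive probabilities, is a small detail the paper leaves implicit.
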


\begin{proof}
The first equality is \eqref{eq:volume-aumann-body}.
Apply the finite-valued calculation in Lemma \ref{lem:finite-coordinate-simplices} to the approximations $D^{(n)}(X)$ supplied by Lemma \ref{lem:finite-simplex-approximation}.
The limits in \eqref{eq:aumann-volume-limit} and \eqref{eq:expected-mixed-volume-limit} then give
\begin{align}\begin{split}\nonumber
\Vol_K\bigl(\E_A[D(X)]\bigr)
&=\lim_{n\to\infty}\Vol_K\bigl(\E_A[D^{(n)}(X)]\bigr)\\
&=\lim_{n\to\infty}\E_{P^K}[V(D^{(n)}(X_1),\ldots,D^{(n)}(X_K))]\\
&=\E_{P^K}[V(D(X_1),\ldots,D(X_K))].
\end{split}\end{align}
This proves \eqref{eq:expected-mixed-volume}.
\end{proof}

Equation \eqref{eq:expected-mixed-volume} turns the volume calculation for the generally complicated ROC body $\cH$ into an expectation of the mixed volume $V(D(X_1),\ldots,D(X_K))$.
To evaluate this mixed volume, fix feature values $x_1,\ldots,x_K\in\mathcal X$.
By Definition \ref{def:mixed-volume}, the mixed volume satisfies
\begin{align}\begin{split}\nonumber
V(D(x_1),\ldots,D(x_K))
=\frac{1}{K!}
\sum_{\varnothing\neq I\subseteq\{1,\ldots,K\}}
(-1)^{K-\lvert I\rvert}
\Vol_K\!\left(\sum_{r\in I}D(x_r)\right).
\end{split}\end{align}
Each coordinate simplex $D(x_r)$ in this expression has length $q_k(x_r)$ along coordinate axis $k$, as defined in \eqref{eq:random-coordinate-simplex}.
Writing the coordinate lengths of $D(x_r)$ as row $r$ gives the matrix
\begin{align}\begin{split}\label{eq:roc-assignment-matrix}
\bigl(q_k(x_r)\bigr)_{r,k=1}^K
=\begin{pmatrix}
q_1(x_1)&\cdots&q_k(x_1)&\cdots&q_K(x_1)\\
\vdots&\ddots&\vdots&\ddots&\vdots\\
q_1(x_K)&\cdots&q_k(x_K)&\cdots&q_K(x_K)
\end{pmatrix}.
\end{split}\end{align}
Column $k$ records the length $q_k(x_r)$ of each simplex $D(x_r)$ along coordinate axis $k$.
%
%
Theorem \ref{thm:mixed-assignment} states that the mixed volume $V(D(x_1),\ldots,D(x_K))$ is obtained by maximizing this product over all $\pi\in\Sn_K$ and dividing the result by $K!$.

\begin{theorem}
\label{thm:mixed-assignment}
\label{thm:volume-formula}
For any feature values $x_1,\ldots,x_K\in\mathcal X$, the mixed volume of the coordinate simplices $D(x_1),\ldots,D(x_K)$ defined in \eqref{eq:random-coordinate-simplex} is
\begin{align}\begin{split}\label{eq:coordinate-mixed-volume}
V(D(x_1),\ldots,D(x_K))
=\frac{1}{K!}
\max_{\pi\in\Sn_K}
\prod_{r=1}^K q_{\pi(r)}(x_r).
\end{split}\end{align}
Let $X_1,\ldots,X_K$ be independent with distribution $P$.
Substituting \eqref{eq:coordinate-mixed-volume} into \eqref{eq:expected-mixed-volume} gives the volume of the ROC body $\cH$,
\begin{align}\begin{split}\label{eq:main-volume-formula}
\Vol_K(\cH)
& = \frac{1}{K!}\E_{P^K}\!\left[
\max_{\pi\in\Sn_K}\prod_{r=1}^Kq_{\pi(r)}(X_r)\right] \\ 
& = \frac{1}{K!\prod_{k=1}^K\rho_k}\E_{P^K}\!\left[
\max_{\pi\in\Sn_K}\prod_{r=1}^Kp_{\pi(r)}(X_r)\right].
\end{split}\end{align}
\end{theorem}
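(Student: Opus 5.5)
The plan is to prove \eqref{eq:coordinate-mixed-volume} for arbitrary nonnegative axis lengths: writing $a^r=(a^r_1,\ldots,a^r_K)$ with $a^r_k=q_k(x_r)$, I will show $K!\,V(D(a^1),\ldots,D(a^K))=\max_{\pi\in\Sn_K}\prod_r a^r_{\pi(r)}$. Formula \eqref{eq:main-volume-formula} then follows directly. Substitute into Theorem \ref{prop:polarization}, and use $q_k=p_k/\rho_k$. For every permutation $\pi$, $\prod_r q_{\pi(r)}(x_r)=\prod_r p_{\pi(r)}(x_r)/\prod_k\rho_k$, because $\pi$ is a bijection, so the factor $\prod_k\rho_k$ leaves the maximum.

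\emph{Reductions.} Both sides are continuous in $(a^1,\ldots,a^K)\in\R_+^{K\times K}$. For the left side, $d_H(D(a),D(b))\le\max_k|a_k-b_k|$, as in \eqref{eq:paired-selection-bound}, and Proposition \ref{prop:mixed-volume-continuity} applies. The right side is a maximum of polynomials. It therefore suffices to treat a dense set of matrices. I will assume all $a^r_k>0$ and that the $K!$ numbers $\sum_r\log a^r_{\pi(r)}$ are pairwise distinct, so that the optimal permutation $\pi^*$ is unique. By Theorem \ref{thm:minkowski-volume} and symmetry of $V$, $K!\,V(D(a^1),\ldots,D(a^K))$ is the coefficient of $\lambda_1\cdots\lambda_K$ in the polynomial $\Vol_K(\sum_r\lambda_rD(a^r))$.

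\emph{Main step: a mixed subdivision.} I compute this coefficient with a regular fine mixed subdivision of $\sum_r D(a^r)$, via the Huber--Sturmfels (Cayley trick) formula. It states that $K!\,V$ equals the sum of the Euclidean volumes of the fully mixed cells. These are the cells of the form $\sum_r[v_r,v_r']$, with each $[v_r,v_r']$ an edge of $D(a^r)$.

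Take $w_{rk}=\log a^r_k$ and let $(u,v)$ be optimal dual potentials of the assignment problem $\max_\pi\sum_r w_{r\pi(r)}$. These satisfy $u_r+v_k\ge w_{rk}$, with equality exactly on the edges $(r,\pi^*(r))$ by genericity. Lift the vertex $a^r_ke_k$ of $D(a^r)$ to height $-\log a^r_k$, up to the dual shift, and lift $0$ to height $0$.

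A fully mixed cell then corresponds to a point $y\in\R^K$ at which, for every $r$, the tropical polynomial $\max(0,\max_k(y_k+w_{rk}))$ attains its maximum at exactly two terms. In addition, the chosen edges must span $\R^K$.
\begin{itemize}
\item Edges of the form $[a^r_ke_k,a^r_le_l]$ can be ruled out, or at least shown to contribute zero net volume, by the orientation and cycle structure of the resulting bipartite graph.
\item The spanning edges must then be the edges $[0,a^r_{\sigma(r)}e_{\sigma(r)}]$ for a permutation $\sigma$. The resulting tropical equations $y_{\sigma(r)}=-w_{r\sigma(r)}$, together with the inequalities $y_k+w_{rk}\le0$, say precisely that $-y$ is a feasible dual potential for which $\sigma$ is tight.
\end{itemize}
By complementary slackness and uniqueness, $\sigma=\pi^*$. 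The unique mixed cell is then the box $\sum_r[0,a^r_{\pi^*(r)}e_{\pi^*(r)}]$, whose volume is $\prod_ra^r_{\pi^*(r)}$. Example \ref{ex:minkowski-area} is the case $K=2$, where the gray rectangle is exactly this cell.

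\emph{Expected obstacle.} The delicate part is this combinatorial identification of mixed cells: excluding mixed cells built from non-radial edges, and checking that the lifting is generic enough to give a fine subdivision. If that analysis becomes unwieldy, I would fall back on a direct computation. By LP duality, $\sum_r\lambda_rD(a^r)$ is the down-closed set $\{y\ge0:\langle w,y\rangle\le\sum_r\lambda_r\max_kw_ka^r_k\ \forall w\ge0\}$. I would then prove the coefficient identity by induction on $K$, slicing along the last coordinate.
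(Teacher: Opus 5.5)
Your reductions are fine: density of positive generic matrices, $K!\,V(D(a^1),\ldots,D(a^K))$ as the coefficient of $\lambda_1\cdots\lambda_K$, and the passage from \eqref{eq:coordinate-mixed-volume} to \eqref{eq:main-volume-formula}. The gap is in the main step. The identification of the fully mixed cells carries the whole content of the theorem, and it is not established.

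\emph{The tropical description is the wrong one.} It does not describe $\sum_rD(a^r)$. For a functional $y$, the lifted vertex $a^r_ke_k$ with height $\eta_{rk}$ contributes $a^r_ky_k+\eta_{rk}$, so the lengths $a^r_k$ enter as slopes. By contrast, $\max(0,\max_k(y_k+w_{rk}))$ is the tropical polynomial of the standard simplex $\conv\{0,e_1,\ldots,e_K\}$. For $K$ such simplices, every fine mixed subdivision has exactly one mixed cell, of volume $1$, whatever $w$ is. You then assign that cell the volume $\prod_ra^r_{\pi^*(r)}$.

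\emph{Repairing the radial cells gives only the easy direction.} With heights $a^r_k(u_r-\log a^r_k)$, any radial mixed cell forces $(u,y)$ to be dual feasible with $\sigma$ tight, so $\sigma=\pi^*$. That shows the box is a mixed cell, i.e.\ $K!\,V\geq\prod_ra^r_{\pi^*(r)}$, which is the easy inequality.

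\emph{The upper bound is the missing part.} The reverse inequality is exactly the claim that no mixed cell uses an edge $[a^r_ke_k,a^r_le_l]$, and you give only a gesture toward it. Spanning does not exclude such edges: vectors $a^r_ke_k-a^r_le_l$ arranged in a cycle are linearly independent as soon as the product of the ratios $a^r_k/a^r_l$ around the cycle differs from one. Nor can such cells ``contribute zero net volume,'' since the Huber--Sturmfels formula is a sum of positive cell volumes. Fineness of the chosen lifting is also unchecked, and the slicing-induction fallback is too unspecified to assess.

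The missing idea is that your optimal dual potentials already give the upper bound, with no subdivision at all. From $u_r+v_k\geq\log a^r_k$ you get $a^r_k\leq e^{u_r}e^{v_k}$, hence $D(a^r)\subseteq e^{u_r}D(e^{v_1},\ldots,e^{v_K})$ for every $r$. Monotonicity, homogeneity in each argument, and $V(A,\ldots,A)=\Vol_K(A)$ then give $K!\,V\leq\exp(\sum_ru_r+\sum_kv_k)$. Strong duality identifies the right side with $\max_{\pi\in\Sn_K}\prod_ra^r_{\pi(r)}$. For the lower bound, use $[0,a^r_{\pi(r)}e_{\pi(r)}]\subseteq D(a^r)$ and monotonicity. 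In the polarization formula every term except $I=\{1,\ldots,K\}$ then vanishes, leaving the box volume. This is the paper's proof. It needs no genericity or fine subdivision, and never has to analyze non-radial edges.
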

The second equality in \eqref{eq:main-volume-formula} uses $q_k=p_k/\rho_k$ and the fact that each permutation uses every class exactly once.

\begin{proof}
We prove the formula for arbitrary nonnegative coordinate lengths $a_{rk}$, writing $D_r=D(a_{r1},\ldots,a_{rK})$.
Taking $a_{rk}=q_k(x_r)$ gives \eqref{eq:coordinate-mixed-volume}, and averaging over independent observations then gives \eqref{eq:main-volume-formula}.

\noindent\emph{Step 1. A lower bound from coordinate segments.}
Fix a permutation $\pi\in\Sn_K$ and choose the segment
\begin{align}\begin{split}\nonumber
L_r=[0,a_{r,\pi(r)}e_{\pi(r)}]\subseteq D_r,
\quad r=1,\ldots,K.
\end{split}\end{align}
Let $I$ be a nonempty subset of $\{1,\ldots,K\}$, recording which segments enter the Minkowski sum.
Since $\pi$ uses every coordinate axis exactly once, adding points from these segments gives the following coordinate box.
\begin{align}\begin{split}\nonumber
\sum_{r\in I}L_r
=\left\{u\in\R^K:
\begin{array}{ll}
0\leq u_{\pi(r)}\leq a_{r,\pi(r)} & \text{for }r\in I,\\
u_{\pi(r)}=0 & \text{for }r\notin I
\end{array}
\right\}.
\end{split}\end{align}
When $I=\{1,\ldots,K\}$, the box has side length $a_{r,\pi(r)}$ along coordinate axis $\pi(r)$, so
\begin{align}\begin{split}\nonumber
\Vol_K(L_1+\cdots+L_K)
=\prod_{r=1}^K a_{r,\pi(r)}.
\end{split}\end{align}
If $I\ne\{1,\ldots,K\}$, it omits some index $s\in\{1,\ldots,K\}$.
Every point $u\in\sum_{r\in I}L_r$ then satisfies $u_{\pi(s)}=0$, so the box has a side of length zero along axis $\pi(s)$.
Its $K$-dimensional volume is therefore
\begin{align}\begin{split}\nonumber
\Vol_K\!\left(\sum_{r\in I}L_r\right)=0
\text{ for every nonempty } I \neq \{1,\ldots,K\}.
\end{split}\end{align}
Thus, in \eqref{eq:polytopal-mixed-volume}, every term corresponding to a subset that omits an index is zero.
For $I=\{1,\ldots,K\}$, the factor $(-1)^{K-\lvert I\rvert}$ equals $(-1)^{K-K}=1$.
Substituting these volumes into that formula gives
\begin{align}\begin{split}\nonumber
V(L_1,\ldots,L_K)
&=\frac{1}{K!}
\sum_{\varnothing\neq I\subseteq\{1,\ldots,K\}}
(-1)^{K-\lvert I\rvert}
\Vol_K\!\left(\sum_{r\in I}L_r\right)\\
&=\frac{1}{K!}\Vol_K(L_1+\cdots+L_K)\\
&=\frac{1}{K!}\prod_{r=1}^K a_{r,\pi(r)}.
\end{split}\end{align}
For the fixed permutation $\pi$, the inclusions $L_r\subseteq D_r$ and monotonicity of mixed volume give
\begin{align}\begin{split}\nonumber
V(D_1,\ldots,D_K)
\geq V(L_1,\ldots,L_K)
=\frac{1}{K!}\prod_{r=1}^K a_{r,\pi(r)}.
\end{split}\end{align}
This inequality holds for every $\pi\in\Sn_K$, and $V(D_1,\ldots,D_K)$ is independent of $\pi$.
Taking the maximum of the right-hand side over all permutations therefore gives
\begin{align}\begin{split}\label{eq:assignment-lower}
V(D_1,\ldots,D_K)
\geq\frac{1}{K!}
\max_{\pi\in\Sn_K}\prod_{r=1}^K a_{r,\pi(r)}.
\end{split}\end{align}

\noindent\emph{Step 2. An upper bound when all coordinate lengths are positive.}
Assume $a_{rk}>0$ for every $r,k$.
Taking logarithms turns the product associated with a permutation into a sum.
\begin{align}\begin{split}\nonumber
\log\!\left(\prod_{r=1}^K a_{r,\pi(r)}\right)
=\sum_{r=1}^K\log a_{r,\pi(r)}.
\end{split}\end{align}
We first express the maximization of this sum as a linear program and then use its dual to construct an upper bound on mixed volume.

\noindent\emph{(a) The Birkhoff polytope.}
For a permutation $\pi\in\Sn_K$, the selected vertex of simplex $D_r$ is $a_{r,\pi(r)}e_{\pi(r)}$.
Its \emph{permutation matrix} $b^\pi=(b^\pi_{rk})$ records these choices, with row $r$ corresponding to $D_r$ and column $k$ corresponding to the coordinate direction $e_k$.
The entry $b^\pi_{rk}$ equals one when the vertex $a_{rk}e_k$ is selected and zero otherwise.
\begin{align}\begin{split}\nonumber
b^\pi_{rk}
=\begin{cases}
1,&k=\pi(r),\\
0,&k\ne\pi(r).
\end{cases}
\end{split}\end{align}
Multiplication by $b^\pi_{rk}$ therefore retains exactly the selected log lengths in the following sum.
\begin{align}\begin{split}\nonumber
\sum_{r=1}^K\sum_{k=1}^K b^\pi_{rk}\log a_{rk}
=\sum_{r=1}^K\log a_{r,\pi(r)}.
\end{split}\end{align}
The \emph{Birkhoff polytope} allows the entries of these matrices to be nonnegative real numbers while retaining the row and column sums.
\begin{align}\begin{split}\label{eq:birkhoff-polytope}
\mathcal B_K
\coloneqq\left\{b=(b_{rk})\in\R_+^{K\times K}:
\begin{array}{ll}
\displaystyle\sum_{k=1}^K b_{rk}=1&\text{for }r=1,\ldots,K,\\[4pt]
\displaystyle\sum_{r=1}^K b_{rk}=1&\text{for }k=1,\ldots,K
\end{array}
\right\}.
\end{split}\end{align}
A matrix satisfying these conditions is called \emph{doubly stochastic}.
The \emph{Birkhoff--von Neumann theorem} states that $\mathcal B_K=\conv\{b^\pi:\pi\in\Sn_K\}$ \citep[Theorem 17, p. 167]{Gerards1995}.
Thus, for every $b\in\mathcal B_K$, there are weights $\theta_\pi\geq0$ such that
\begin{align}\begin{split}\nonumber
b=\sum_{\pi\in\Sn_K}\theta_\pi b^\pi,
\quad
\sum_{\pi\in\Sn_K}\theta_\pi=1.
\end{split}\end{align}
The value of the linear objective at $b$ is the same weighted average of its values at the permutation matrices $b^\pi$.
\begin{align}\begin{split}\nonumber
\sum_{r=1}^K\sum_{k=1}^K b_{rk}\log a_{rk}
&=\sum_{\pi\in\Sn_K}\theta_\pi
\sum_{r=1}^K\log a_{r,\pi(r)}
\leq\max_{\pi\in\Sn_K}
\sum_{r=1}^K\log a_{r,\pi(r)}.
\end{split}\end{align}
Conversely, every permutation matrix belongs to $\mathcal B_K$.
The two maximization problems therefore have the same value.
\begin{align}\begin{split}\label{eq:birkhoff-assignment}
\max_{\pi\in\Sn_K}\sum_{r=1}^K\log a_{r,\pi(r)}
=\max_{b\in\mathcal B_K}\sum_{r=1}^K\sum_{k=1}^K b_{rk}\log a_{rk}.
\end{split}\end{align}

\noindent\emph{(b) Linear programming duality.}
Introduce a dual variable $u_r\in\R$ for each row constraint and $v_k\in\R$ for each column constraint in \eqref{eq:birkhoff-polytope}.
These variables range over all real numbers because the row and column constraints are equalities.
For completeness, the Lagrangian of this maximization problem is
\begin{align}\begin{split}\nonumber
\mathcal L_{\mathrm{ass}}(b;u,v)
&\coloneqq\sum_{r=1}^K\sum_{k=1}^K b_{rk}\log a_{rk}
 +\sum_{r=1}^K u_r\left(1-\sum_{k=1}^K b_{rk}\right) +\sum_{k=1}^K v_k\left(1-\sum_{r=1}^K b_{rk}\right)\\
&=\sum_{r=1}^K u_r+\sum_{k=1}^K v_k
 +\sum_{r=1}^K\sum_{k=1}^K b_{rk}(\log a_{rk}-u_r-v_k).
\end{split}\end{align}
In forming the dual, only the constraints $b_{rk}\geq0$ remain inside the supremum.
If one coefficient $\log a_{rk}-u_r-v_k$ is positive, sending that entry of $b$ to infinity makes this supremum infinite.
If every coefficient is nonpositive, $b=0$ maximizes the Lagrangian, with value $\sum_ru_r+\sum_kv_k$.
Thus the dual objective is finite exactly when the following constraints hold.
\begin{align}\begin{split}\nonumber
u_r+v_k\geq\log a_{rk},
\quad r,k=1,\ldots,K.
\end{split}\end{align}
To see why these constraints give an upper bound, take any $b\in\mathcal B_K$ and any $u,v$ satisfying them.
Nonnegativity of $b_{rk}$ and the row and column sums in \eqref{eq:birkhoff-polytope} give
\begin{align}\begin{split}\nonumber
\sum_{r=1}^K\sum_{k=1}^K b_{rk}\log a_{rk}
&\leq\sum_{r=1}^K\sum_{k=1}^K b_{rk}(u_r+v_k)\\
&=\sum_{r=1}^K u_r\left(\sum_{k=1}^K b_{rk}\right)
+\sum_{k=1}^K v_k\left(\sum_{r=1}^K b_{rk}\right)\\
&=\sum_{r=1}^K u_r+\sum_{k=1}^K v_k.
\end{split}\end{align}
Minimizing this upper bound subject to $u_r+v_k\geq\log a_{rk}$ is the dual linear program.
The \emph{linear programming strong-duality theorem} makes this bound exact and guarantees the existence of a dual minimizer.\footnote{
For a feasible finite-dimensional linear program with finite optimal value, the primal and dual both have optimal solutions and equal optimal values \citep[equation (34), p. 164]{Gerards1995}.
Here the primal maximizes $\sum_{r=1}^K\sum_{k=1}^K b_{rk}\log a_{rk}$ over $b\in\mathcal B_K$, and the dual minimizes $\sum_{r=1}^K u_r+\sum_{k=1}^K v_k$ over $u,v\in\R^K$ with $u_r+v_k\geq\log a_{rk}$.
The corresponding minimum-weight assignment formula is given in \citet[Section 6.1, equation (53), p. 173]{Gerards1995}.}
The primal is feasible because $\mathcal B_K$ contains every permutation matrix.
The set $\mathcal B_K$ is compact because its constraints are closed and imply $0\leq b_{rk}\leq1$.
The objective is continuous because all $\log a_{rk}$ are finite.
The primal therefore has an optimal solution with finite objective value, so the theorem applies.
Combining strong duality with \eqref{eq:birkhoff-assignment} gives
\begin{align}\begin{split}\label{eq:assignment-dual}
\max_{\pi\in\Sn_K}\sum_{r=1}^K\log a_{r,\pi(r)}
&=\max_{b\in\mathcal B_K}
\sum_{r=1}^K\sum_{k=1}^K b_{rk}\log a_{rk}\\
&=
\min_{\substack{u,v\in\R^K\\
u_r+v_k\geq\log a_{rk},\ r,k=1,\ldots,K}}
\left\{\sum_{r=1}^K u_r+\sum_{k=1}^K v_k\right\}.
\end{split}\end{align}

\noindent\emph{(c) Containing the simplices and bounding mixed volume.}
Choose minimizing vectors $u,v$ and define
\begin{align}\begin{split}\nonumber
\lambda_r=e^{u_r},
\quad
\mu_k=e^{v_k},
\quad
\mu=(\mu_1,\ldots,\mu_K).
\end{split}\end{align}
Exponentiating the constraints in \eqref{eq:assignment-dual} gives, for every $r,k$,
\begin{align}\begin{split}\nonumber
a_{rk}=e^{\log a_{rk}}
\leq e^{u_r+v_k}
=\lambda_r\mu_k.
\end{split}\end{align}
The vector $\mu$ specifies one coordinate simplex $D(\mu)$, and $\lambda_r$ scales it to contain $D_r$.
To verify this containment, Definition \ref{def:coordinate-simplex} gives
\begin{align}\begin{split}\nonumber
\lambda_rD(\mu)
=\conv\{0,\lambda_r\mu_1e_1,\ldots,\lambda_r\mu_Ke_K\}.
\end{split}\end{align}
Since $0<a_{rk}/(\lambda_r\mu_k)\leq1$, each vertex $a_{rk}e_k$ of $D_r$ is the following convex combination of two points of $\lambda_rD(\mu)$.
\begin{align}\begin{split}\nonumber
a_{rk}e_k
=\frac{a_{rk}}{\lambda_r\mu_k}(\lambda_r\mu_ke_k)
+\left(1-\frac{a_{rk}}{\lambda_r\mu_k}\right)0
\in\lambda_rD(\mu).
\end{split}\end{align}
The origin also belongs to $\lambda_rD(\mu)$, so convexity gives
\begin{align}\begin{split}\nonumber
D_r=\conv\{0,a_{r1}e_1,\ldots,a_{rK}e_K\}
\subseteq\lambda_rD(\mu),
\quad r=1,\ldots,K.
\end{split}\end{align}
Monotonicity of mixed volume applies to these $K$ inclusions.
Homogeneity in each argument, \eqref{eq:mixed-normalization}, and the simplex volume formula \eqref{eq:coordinate-simplex-volume} then give
\begin{align}\begin{split}\nonumber
V(D_1,\ldots,D_K)
&\leq V(\lambda_1D(\mu),\ldots,\lambda_KD(\mu))\\
&=\left(\prod_{r=1}^K\lambda_r\right)V(D(\mu),\ldots,D(\mu))\\
&=\left(\prod_{r=1}^K\lambda_r\right)\Vol_K(D(\mu)) 
=\frac{1}{K!}
\left(\prod_{r=1}^K\lambda_r\right)
\left(\prod_{k=1}^K\mu_k\right).
\end{split}\end{align}
Because $u,v$ solve the minimization problem in \eqref{eq:assignment-dual}, the product in this upper bound satisfies
\begin{align}\begin{split}\nonumber
\left(\prod_{r=1}^K\lambda_r\right)
\left(\prod_{k=1}^K\mu_k\right)
&=\exp\!\left(\sum_{r=1}^K u_r+\sum_{k=1}^K v_k\right)\\
&=\exp\!\left(\max_{\pi\in\Sn_K}
\sum_{r=1}^K\log a_{r,\pi(r)}\right) 
=\max_{\pi\in\Sn_K}\prod_{r=1}^K a_{r,\pi(r)}.
\end{split}\end{align}
This proves the upper bound matching \eqref{eq:assignment-lower}.

\noindent\emph{Step 3. Coordinate lengths equal to zero.}
For arbitrary $a_{rk}\geq0$, replace each $a_{rk}$ by $a_{rk}+\varepsilon$, where $\varepsilon>0$.
Steps 1 and 2 prove the formula for the resulting positive coordinate lengths.
Write $D_r^{(\varepsilon)}=D(a_{r1}+\varepsilon,\ldots,a_{rK}+\varepsilon)$.
Pairing points represented by the same nonnegative coefficients $\alpha_k$ with $\sum_k\alpha_k\leq1$ in \eqref{eq:coordinate-simplex-representation} gives
\begin{align}\begin{split}\nonumber
\left\|\sum_{k=1}^K\alpha_k(a_{rk}+\varepsilon)e_k
      -\sum_{k=1}^K\alpha_ka_{rk}e_k\right\|
\leq\varepsilon\sum_{k=1}^K\alpha_k\leq\varepsilon.
\end{split}\end{align}
The pairing works in both directions, so $d_H(D_r^{(\varepsilon)},D_r)\leq\varepsilon$.
The corresponding simplices therefore converge to $D_1,\ldots,D_K$ in Hausdorff distance as $\varepsilon\downarrow0$.
Their mixed volume converges by \eqref{eq:mixed-continuity}, which also covers lower-dimensional limiting simplices.
Each product $\prod_{r=1}^K(a_{r,\pi(r)}+\varepsilon)$ also converges to $\prod_{r=1}^K a_{r,\pi(r)}$.
There are finitely many permutations, so the maximum of these products converges to $\max_{\pi\in\Sn_K}\prod_{r=1}^K a_{r,\pi(r)}$.
Taking limits proves the formula for all nonnegative coordinate lengths.

\noindent\emph{Step 4. The ROC volume formula.}
Proposition \ref{prop:aumann-roc} gives the Aumann representation of $\cH$.
Theorem \ref{prop:polarization} then expresses its volume as an expected mixed volume.
Applying \eqref{eq:coordinate-mixed-volume} with $x_r=X_r$ evaluates $V(D(X_1),\ldots,D(X_K))$ for every realization of $(X_1,\ldots,X_K)$.
Combining these identities gives
\begin{align}\begin{split}\label{eq:roc-volume-calculation-chain}
\Vol_K(\cH)
&=\Vol_K\bigl(\E_A[D(X)]\bigr)\\
&=\E_{P^K}\!\left[V(D(X_1),\ldots,D(X_K))\right]\\
&=\frac{1}{K!}\E_{P^K}\!\left[
\max_{\pi\in\Sn_K}\prod_{r=1}^Kq_{\pi(r)}(X_r)
\right].
\end{split}\end{align}
The maximum is measurable because it is the maximum of finitely many products of measurable nonnegative functions.
Equation \eqref{eq:q-properties} also gives the pointwise bound
\begin{align}\begin{split}\nonumber
0\leq\max_{\pi\in\Sn_K}\prod_{r=1}^Kq_{\pi(r)}(X_r)
\leq\prod_{k=1}^K\rho_k^{-1}<\infty,
\end{split}\end{align}
because each permutation uses each class once.
Hence the expectations in \eqref{eq:roc-volume-calculation-chain} are finite, as also follows from \eqref{eq:mixed-volume-dominating-bound}.
\end{proof}


\begin{example}[Permutations and mixed volume]
\label{ex:two-permutations}
\leavevmode\par
Figure \ref{fig:matrix-volume-contributions} illustrates the two permutations for $K=2$ using the coordinate simplices $A_1,A_2$ from Example \ref{ex:minkowski-area} and Figure \ref{fig:mixed-area-decomposition}, now written as $D(x_1),D(x_2)$.
With $\lambda_1=\lambda_2=1$, the area decomposition in Figure \ref{fig:mixed-area-decomposition}(b) gives a gray rectangle of area 
\begin{align}\begin{split}\nonumber
9/4=2V(D(x_1),D(x_2)).
\end{split}\end{align}
The same rectangle appears in Figure \ref{fig:matrix-volume-contributions}(a), while the alternative assignment in panel (b) gives a rectangle of area $1/4$.
The larger rectangle area $9/4$ equals the maximum in \eqref{eq:coordinate-mixed-volume}, so
\begin{align}\begin{split}\nonumber
V(D(x_1),D(x_2))=\frac1{2!}\max\left\{\frac94,\frac14\right\}=\frac98.
\end{split}\end{align}

\begin{figure}[htbp]
\centering
\begin{tikzpicture}[x=1cm,y=1cm,>=stealth,font=\small]
\node[font=\small\bfseries] at (1.65,7.15) {Matrix entries};
\node[font=\small\bfseries] at (6.45,7.15) {Coordinate segments};
\node[font=\small\bfseries,align=center] at (11.4,7.15) {Sum of selected\\segments};
\foreach \selectioncase/\verticalshift in {1/4.0,2/0} {
  \begin{scope}[yshift=\verticalshift cm]
  \node[font=\scriptsize] at (1.25,2.18) {class 1};
  \node[font=\scriptsize] at (2.50,2.18) {class 2};
  \node[blue!65!black] at (0.15,1.50) {$x_1$};
  \node[orange!80!black] at (0.15,0.75) {$x_2$};
  \draw[black!65] (0.75,0.35) -- (0.60,0.35) -- (0.60,1.90) -- (0.75,1.90);
  \draw[black!65] (3.00,0.35) -- (3.15,0.35) -- (3.15,1.90) -- (3.00,1.90);
  \ifnum\selectioncase=1
    \node[draw=blue!65!black,fill=blue!12,line width=0.8pt,inner sep=5pt] at (1.25,1.50) {$1.5$};
    \node at (2.50,1.50) {$0.5$};
    \node at (1.25,0.75) {$0.5$};
    \node[draw=orange!80!black,fill=orange!17,line width=0.8pt,inner sep=5pt] at (2.50,0.75) {$1.5$};
  \else
    \node at (1.25,1.50) {$1.5$};
    \node[draw=blue!65!black,fill=blue!12,line width=0.8pt,inner sep=5pt] at (2.50,1.50) {$0.5$};
    \node[draw=orange!80!black,fill=orange!17,line width=0.8pt,inner sep=5pt] at (1.25,0.75) {$0.5$};
    \node at (2.50,0.75) {$1.5$};
  \fi
  \draw[->,black!60] (3.45,1.12) -- (4.30,1.12);
  \begin{scope}[xshift=5.0cm,x=1.35cm,y=1.35cm]
    \fill[blue!12] (0,0) -- (1.5,0) -- (0,0.5) -- cycle;
    \fill[orange!17] (0,0) -- (0.5,0) -- (0,1.5) -- cycle;
    \draw[blue!65!black,line width=0.8pt] (0,0) -- (1.5,0) -- (0,0.5) -- cycle;
    \draw[orange!80!black,line width=0.8pt] (0,0) -- (0.5,0) -- (0,1.5) -- cycle;
    \draw[->,black!60] (-0.06,0) -- (1.75,0) node[right] {$e_1$};
    \draw[->,black!60] (0,-0.06) -- (0,1.75) node[above] {$e_2$};
    \node[below left,font=\scriptsize] at (0,0) {$0$};
    \node[blue!65!black,font=\scriptsize] at (1.03,0.43) {$D(x_1)$};
    \node[orange!80!black,font=\scriptsize] at (0.48,1.10) {$D(x_2)$};
    \ifnum\selectioncase=1
      \draw[blue!65!black,line width=2.2pt] (0,0) -- (1.5,0);
      \draw[orange!80!black,line width=2.2pt] (0,0) -- (0,1.5);
      \fill[blue!65!black] (1.5,0) circle (1.6pt);
      \fill[orange!80!black] (0,1.5) circle (1.6pt);
      \node[below] at (1.5,0) {$1.5$};
      \node[left] at (0,1.5) {$1.5$};
    \else
      \draw[blue!65!black,line width=2.2pt] (0,0) -- (0,0.5);
      \draw[orange!80!black,line width=2.2pt] (0,0) -- (0.5,0);
      \fill[blue!65!black] (0,0.5) circle (1.6pt);
      \fill[orange!80!black] (0.5,0) circle (1.6pt);
      \node[below] at (0.5,0) {$0.5$};
      \node[left] at (0,0.5) {$0.5$};
    \fi
  \end{scope}
  \draw[->,black!60] (8.05,1.12) -- (9.10,1.12);
  \begin{scope}[xshift=10.0cm,x=1.35cm,y=1.35cm]
    \ifnum\selectioncase=1
      \fill[black!7] (0,0) rectangle (1.5,1.5);
      \draw[black!70,line width=0.8pt] (0,0) rectangle (1.5,1.5);
    \else
      \fill[black!7] (0,0) rectangle (0.5,0.5);
      \draw[black!70,line width=0.8pt] (0,0) rectangle (0.5,0.5);
    \fi
    \draw[->,black!60] (-0.06,0) -- (1.75,0) node[right] {$e_1$};
    \draw[->,black!60] (0,-0.06) -- (0,1.75) node[above] {$e_2$};
    \node[below left,font=\scriptsize] at (0,0) {$0$};
    \ifnum\selectioncase=1
      \draw[blue!65!black,line width=2.2pt] (0,0) -- (1.5,0);
      \draw[orange!80!black,line width=2.2pt] (0,0) -- (0,1.5);
      \node at (0.75,0.75) {$9/4$};
      \node[below] at (1.5,0) {$1.5$};
      \node[left] at (0,1.5) {$1.5$};
    \else
      \draw[blue!65!black,line width=2.2pt] (0,0) -- (0,0.5);
      \draw[orange!80!black,line width=2.2pt] (0,0) -- (0.5,0);
      \node[font=\scriptsize] at (0.25,0.25) {$1/4$};
      \node[below] at (0.5,0) {$0.5$};
      \node[left] at (0,0.5) {$0.5$};
    \fi
  \end{scope}
  \end{scope}
}

\coordinate (panel-name-center) at (current bounding box.center);
\node[align=center] at (panel-name-center |- 0,3.20) {(a) $\pi(1)=1,\ \pi(2)=2$};
\node[align=center] at (panel-name-center |- 0,-0.80) {(b) $\pi(1)=2,\ \pi(2)=1$};
\end{tikzpicture}
\caption{From matrix entries to mixed volume}
\label{fig:matrix-volume-contributions}
\begin{minipage}{0.94\linewidth}
\footnotesize
\textit{Note.} We illustrate the mixed volume for $K=2$, with a simplex whose matrix has rows $(3/2,1/2)$ and $(1/2,3/2)$.
These rows give the same coordinate simplices as $A_1,A_2$ in Figure \ref{fig:mixed-area-decomposition}.
In panels (a) and (b), the boxed entries determine the highlighted segments, one from each simplex, whose Minkowski sum is the shaded rectangle.
Blue identifies the contribution from $x_1$, and orange identifies the contribution from $x_2$.
The rectangle areas $9/4$ and $1/4$ are the two permutation products in \eqref{eq:coordinate-mixed-volume}.
\end{minipage}
\end{figure}
\end{example}


\begin{remark}[Advantages of the method]
\label{rem:roc-volume-method}
The ROC volume formula \eqref{eq:main-volume-formula} in Theorem \ref{thm:volume-formula} follows from three steps.
Equation \eqref{eq:roc-aumann-representation} in Proposition \ref{prop:aumann-roc} expresses the ROC body $\cH$ as the Aumann expectation $\E_A[D(X)]$ of a random coordinate simplex.
Equation \eqref{eq:expected-mixed-volume} in Theorem \ref{prop:polarization} expresses the volume of this expectation as the expected mixed volume of independent copies of $D(X)$.
Equation \eqref{eq:coordinate-mixed-volume} in Theorem \ref{thm:mixed-assignment} evaluates that mixed volume as the largest permutation product divided by $K!$.
This last step introduces the maximum through the geometry of coordinate simplices.
Each permutation selects coordinate segments whose sum is a box, and the largest such box has volume $K!V(D(x_1),\ldots,D(x_K))$, as illustrated in Figures \ref{fig:mixed-area-decomposition} and \ref{fig:matrix-volume-contributions}.

In contrast to our approach, the Lagrangian method first determines the frontier height $\beta_1(t)$ in \eqref{eq:constrained-frontier}.
For $t\in\Lambda$, introduce nonnegative multipliers $\xi=(\xi_2,\ldots,\xi_K)$ for the requirements $T_k(\phi)\geq t_k$, $k=2,\ldots,K$, and define the Lagrangian
\begin{align}\begin{split}\nonumber
\mathcal L(\phi,\xi,t)
\coloneqq T_1(\phi)+\sum_{k=2}^K\xi_k\bigl(T_k(\phi)-t_k\bigr).
\end{split}\end{align}
The frontier height is then given by
\begin{align}\begin{split}\nonumber
\beta_1(t)
=\inf_{\xi\in\R_+^{K-1}}
\left\{\E_P\!\left[\max\{q_1(X),\xi_2q_2(X),\ldots,\xi_Kq_K(X)\}\right]
-\sum_{k=2}^K\xi_kt_k\right\}.
\end{split}\end{align}
For fixed $\xi$, maximizing $\mathcal L(\phi,\xi,t)$ over allocation rules amounts to choosing, at each $x$, a class with the largest weighted density ratio among $q_1(x),\xi_2q_2(x),\ldots,\xi_Kq_K(x)$.
The Lagrangian depends on $\phi$ only through $T(\phi)\in\Gamma$, and $\Gamma$ is compact and convex by Proposition \ref{prop:tpr-region}.
The minimax theorem of \citet{Sion1958} therefore gives
\begin{align}\begin{split}\nonumber
\beta_1(t)=\sup_\phi\inf_{\xi\in\R_+^{K-1}}\mathcal L(\phi,\xi,t)=\inf_{\xi\in\R_+^{K-1}}\sup_\phi\mathcal L(\phi,\xi,t).
\end{split}\end{align}

Computing $\Vol_K(\cH)=\int_\Lambda\beta_1(t)\dd t$ from this representation requires solving the multiplier problem for different $t\in\Lambda$.
In contrast, our method shifts attention from the frontier values $\beta_1(t)$ to the entire ROC body $\cH$.
The Aumann representation of $\cH$ allows us to compute its volume without first reconstructing the frontier.
The remaining optimization is a finite assignment problem using the $K^2$ density-ratio values $q_k(x_r)$ for each tuple $(x_1,\ldots,x_K)$, followed by an ordinary expectation under $P^K$.
\end{remark}

\begin{remark}[Intuitive interpretation]
\label{rem:roc-volume-interpretation}
The ROC volume $\Vol_K(\cH)$ summarizes the information that the observed features $X$ provide about the class label $C$.
Geometrically, it is the probability that uniformly drawn accuracy requirements can be met by an allocation rule based on $X$.
For a requirement vector $U$ drawn uniformly from $[0,1]^K$, \eqref{eq:roc-body-feasibility} shows that $U\in\cH$ precisely when some allocation rule $\phi$ satisfies $T_k(\phi)\geq U_k$ for every $k$.
Definition \ref{def:roc-body} therefore gives $\Vol_K(\cH)=\Pp(U\in\cH)=\int_\Lambda\beta_1(t)\dd t$.
Features that allow more demanding requirements to be met enlarge $\cH$ and increase its volume.

The maximum in \eqref{eq:main-volume-formula} has a statistical interpretation in terms of using the information in the observed features $X$ to recover hidden class labels.
Draw one observation independently from each class, place the observations in uniformly random order, and hide their class labels.
For the observed tuple $(x_1,\ldots,x_K)$, a candidate assignment $\pi$ represents the hypothesis that observation $r$ comes from class $\pi(r)$.
Under this hypothesis, observation $r$ has feature distribution $P_{\pi(r)}$, whose density relative to $P$ is $q_{\pi(r)}$.
Independence of the observations therefore gives $\prod_{r=1}^Kq_{\pi(r)}(x_r)$ as the likelihood of the candidate $\pi$ relative to $P^K$.
The uniform random ordering makes all assignments equally likely before the features are observed, so choosing the largest likelihood maximizes the probability of recovering all labels correctly.
This optimal probability is $1/K!$ times the $P^K$-expectation of the largest likelihood, and therefore equals $\Vol_K(\cH)$ by \eqref{eq:main-volume-formula} in Theorem \ref{thm:volume-formula}.

This interpretation also explains the volume bounds in \eqref{eq:geometric-roc-volume-bounds}.
If the observed features $X$ contain no information about the class label $C$, then $P_1=\cdots=P_K$, and all $K!$ assignments remain equally likely after the features are observed.
No rule can then improve on guessing a fixed assignment, so the optimal success probability and $\Vol_K(\cH)$ both equal $1/K!$.
At the other extreme, if $X$ identifies $C$ perfectly, every class label can be recovered correctly and $\Vol_K(\cH)=1$.
These cases illustrate the endpoints of
\begin{align}\begin{split}\label{eq:assignment-volume-bounds}
\frac1{K!}\leq\Vol_K(\cH)\leq1.
\end{split}\end{align}
The difference $\Vol_K(\cH)-1/K!$ measures how much the observed features $X$ improve the probability of recovering all labels over guessing without using the features.
\end{remark}

\begin{example}[Binary ROC AUC]
\label{ex:binary-roc-volume}
\leavevmode\par
For $K=2$, write $p=p_2$ and $\rho=\rho_2$.
Let $X_1,X_2$ be independent with distribution $P$.
Formula \eqref{eq:main-volume-formula} gives
\begin{align}\begin{split}\nonumber
\Vol_2(\cH)
=\frac{1}{2\rho(1-\rho)}
\E_{P\otimes P}\!\left[
\max\{p(X_1)(1-p(X_2)),p(X_2)(1-p(X_1))\}\right].
\end{split}\end{align}
The larger product is determined by the ordering of the posterior probabilities, since
\begin{align}\begin{split}\nonumber
p(X_1)(1-p(X_2))-p(X_2)(1-p(X_1))=p(X_1)-p(X_2).
\end{split}\end{align}
Splitting the expectation according to $p(X_1)>p(X_2)$, $p(X_1)<p(X_2)$, and $p(X_1)=p(X_2)$ gives
\begin{align}\begin{split}\nonumber
2\rho(1-\rho)\Vol_2(\cH)
&=\E_{P\otimes P}\!\left[p(X_1)(1-p(X_2))\1\{p(X_1)>p(X_2)\}\right]\\
&\quad+\E_{P\otimes P}\!\left[p(X_2)(1-p(X_1))\1\{p(X_1)<p(X_2)\}\right]\\
&\quad+\E_{P\otimes P}\!\left[p(X_1)(1-p(X_2))\1\{p(X_1)=p(X_2)\}\right].
\end{split}\end{align}
Since $X_{1}, X_{2}$ are i.i.d., $(X_1,X_2)$ and $(X_2,X_1)$ have the same distribution, exchanging them in the second term yields
\begin{align}\begin{split}\nonumber
\E_{P\otimes P}\!\left[p(X_2)(1-p(X_1))\1\{p(X_1)<p(X_2)\}\right]
=\E_{P\otimes P}\!\left[p(X_1)(1-p(X_2))\1\{p(X_1)>p(X_2)\}\right].
\end{split}\end{align}
Combining the first two terms and dividing by $2\rho(1-\rho)$ therefore gives
\begin{align}\begin{split}\nonumber
\Vol_2(\cH)=\frac{1}{\rho(1-\rho)}\E_{P\otimes P}\!\Bigl[p(X_1)(1-p(X_2))\Bigl(\1\{p(X_1)>p(X_2)\}+\tfrac12\1\{p(X_1)=p(X_2)\}\Bigr)\Bigr],
\end{split}\end{align}
which is the conventional AUC formula.\footnote{The AUC formula can also be derived from a quantile equivalent form of the ROC curve. See \cite{shapiro2021lectures} section 6.2 for a general discussion in the background of Average Value-at-Risk. 
The tie can not be omitted when there is some discrete component in the distribution of $p\left(X\right)$.}
\end{example}

For the subsequent estimation and inference analysis, write
$x_{1:K}\coloneqq(x_1,\ldots,x_K)$ and $X_{1:K}\coloneqq(X_1,\ldots,X_K)$, where $X_1,\ldots,X_K$ are independent with distribution $P$.
For a given feature tuple $x_{1:K}$, let $A_\pi(q;x_{1:K})$ denote the product selected by $\pi$ from the density-ratio matrix in \eqref{eq:roc-assignment-matrix}, and let $A^*(q;x_{1:K})$ denote the largest such product.
\begin{align}\begin{split}\nonumber
A_\pi(q;x_{1:K})\coloneqq\prod_{r=1}^Kq_{\pi(r)}(x_r),
\quad
A^*(q;x_{1:K})\coloneqq\max_{\pi\in\Sn_K}A_\pi(q;x_{1:K}).
\end{split}\end{align}
Here $q=(q_1,\ldots,q_K)$, and replacing $q$ by $p=(p_1,\ldots,p_K)$ defines $A_\pi(p;x_{1:K})$ and $A^*(p;x_{1:K})$.
The same definitions apply to candidate and estimated probability vectors.
Since each permutation uses every class exactly once,
\begin{align}\begin{split}\nonumber
A_\pi(q;x_{1:K})=\frac{A_\pi(p;x_{1:K})}{\prod_{k=1}^K\rho_k},
\quad
A^*(q;x_{1:K})=\frac{A^*(p;x_{1:K})}{\prod_{k=1}^K\rho_k}.
\end{split}\end{align}
Theorem \ref{thm:volume-formula} therefore gives the population VUS parameter used in the next section,
\begin{align}\begin{split}\label{eq:theta}
\theta_0\coloneqq \text{VUS} =\Vol_K(\cH)
&=\frac1{K!}\E_{P^K}\!\left[A^*(q;X_{1:K})\right]\\
&=\frac{\E_{P^K}[A^*(p;X_{1:K})]}{K!\prod_{k=1}^K\rho_k}.
\end{split}\end{align}
The geometric bounds in \eqref{eq:geometric-roc-volume-bounds} give $1/K!\leq\theta_0\leq1$.
When comparing models or features on the same population, omitting the common positive factor $\prod_{k=1}^K\rho_k$ does not change their ranking.

\section{Double/Debiased Estimation and Inference of ROC-VUS}\label{sample VUS section}


\subsection{Neyman orthogonality}\label{neyman-orthogonality}

The population VUS parameter $\theta_0$ can in principle be estimated with sample analogs by replacing $p(\cdot)$  with 
parametric or machine learning estimate of $\hat p(\cdot)$, $\rho_r$ with their empirical counterparts, and $\E_{P^K}$ with a $K$th order U-statistic based on the observed data: 
\begin{align}\begin{split}\nonumber
  \binom{n}{K}^{-1}\sum_{1\le i_1<\cdots<i_K\le n} 
  \frac1{K!}\sum_{\pi\in\Sn_K}
  \frac1{K!} A^*(\hat p;X_{i_{\pi(1)}},\ldots,X_{i_{\pi(K)}}) \Big/ \prod_{r=1}^K \hat \rho_r
\end{split}\end{align}
However, providing theoretic guarantee for such a model based estimator is challenging due to the complex structure when $\hat p(\cdot)$ is obtained using machine learning methods. Instead,
this section develops double/debiased estimation and inference procedures for the population VUS parameter $\theta_0$. Section \ref{neyman-orthogonality} verifies a Neyman orthogonal estimation form, and section \ref{sample-splitting} combines sample splitting with Neyman orthogonality to construct an asymptotically efficient double/debiased estimator for the VUS parameter $\theta_0$. 

When the distribution of $p(x)=\{p_1(x),\ldots,p_K(x)\}$ contains discrete components, the maximizing permutation may not be unique both in population and in sample. In other words, 
both 
\begin{align}\begin{split}\nonumber
\arg\max_{\pi\in\Sn_K}A_\pi(p;x_{1:K}) \quad \text{and} \quad
\arg\max_{\pi\in\Sn_K}A_\pi(\hat p;x_{1:K})
\end{split}\end{align}
can be non-singleton sets. 
It is possible to define a double debiased estimator without specifying a consistent tie-breaking rule to select a unique maximizing permutation from these sets. 
However, this comes at the cost of substantial additional notation complexity in both the theoretical analysis and the practical implementation.
Therefore we begin with a discussion of tie breaking and assignment representation of the optimizing permutation which we used to define a double/debiased machine learning estimator. 

\subsubsection{Double/debiased estimator}

Let $M \coloneqq K!$ and fix once and for all an ordering
\begin{align}\begin{split}\nonumber
\pi_1\prec\pi_2\prec\cdots\prec\pi_M
\end{split}\end{align}
of the permutations. For a candidate score vector $\bar p=(\bar p_1,\ldots,\bar p_K)$, abbreviate
\begin{align}\begin{split}\nonumber
A_m(\bar p;x) \coloneqq A_{\pi_m}(\bar p;x),
\quad m=1,\ldots,M.
\end{split}\end{align}
For every $m \prec \ell$, define the weak pairwise comparison
\begin{align}\begin{split}\nonumber
J_{m\ell}(\bar p;x)
\coloneqq \1\{A_m(\bar p;x)\geq A_\ell(\bar p;x)\}.
\end{split}\end{align}
The indicator that $\pi_m$ is the smallest-index maximizer is
\begin{align}\begin{split}\label{eq:chi}
\chi_m(\bar p;x)
\coloneqq 
\prod_{\ell \prec m}\{1-J_{\ell m}(\bar p;x)\}
\prod_{\ell \succ m}J_{m\ell}(\bar p;x).
\end{split}\end{align}
Exactly one of $\chi_1(\bar p;x),\ldots,\chi_M(\bar p;x)$ equals one. Hence
\begin{align}\begin{split}\nonumber
\pi_{\bar p}^{\dagger}(x)
\coloneqq \sum_{m=1}^M\pi_m\chi_m(\bar p;x)
=\min_{\prec}\argmax_{\pi\in\Sn_K}A_\pi(\bar p;x)
\end{split}\end{align}
is single-valued for every $x$ and every candidate score. 
Equation \eqref{eq:chi} is the assignment counterpart of the strict/weak tie-breaking rule in \cite{feng2026statistical}: an alternative with a smaller index must be beaten strictly, while an alternative with a larger index need only be beaten weakly. 
The maximum value remains unchanged under the tie breaking assignment rule.


Next we define the Neyman orthogonal score for each $[K]$-tuple of observations in the U-statistic summation form of the double/debiased estimator.
For the $r$th observation in each $[K]$-tuple, define the one-hot encoding class indicator
$Z_{rk} \coloneqq 1\{C_r=k\}$,
$k\in[K]$.
Then
\begin{align}\begin{split}\label{eq:class-signal-mean}
\E[Z_{rk}\mid X_r]=p_k(X_r).
\end{split}\end{align}
For an arbitrary candidate score $\bar p$, define the ordered kernel
\begin{align}\begin{split}\nonumber
\psi_{\bar p}(O_{1:K})
\coloneqq \frac1{K!}
\sum_{m=1}^M\chi_m(\bar p;X_{1:K})
\prod_{r=1}^K Z_{r,\pi_m(r)}.
\end{split}\end{align}
Equivalently, the sum contains only the term selected by $\pi_{\bar p}^{\dagger}$. Conditional independence and \eqref{eq:class-signal-mean} give
\begin{align}\begin{split}\label{eq:conditional-evaluation}
\E[\psi_{\bar p}(O_{1:K})\mid X_{1:K}]
=\frac1{K!}
 A_{\pi_{\bar p}^{\dagger}(X_{1:K})}(p;X_{1:K}).
\end{split}\end{align}
Thus $\bar p$ is used only to choose the permutation. 
The observed class labels evaluate that selected permutation under the true conditional distribution.

The smallest-index rule need not be equivariant with respect to a relabeling of the rows on an exact tie. As conventional in U-statistics estimation, consider a symmetrized version of the ordered kernel:
\begin{align}\begin{split}\nonumber
(\Sym f)(o_{1:K})
\coloneqq \frac1{K!}\sum_{\sigma\in\Sn_K}
 f(o_{\sigma(1)},\ldots,o_{\sigma(K)}),
\quad
h_{\bar p} \coloneqq \Sym\psi_{\bar p}.
\end{split}\end{align}
Symmetrization leaves expectations unchanged. With $h_0 \coloneqq h_p$,
$\E[h_0(O_{1:K})]=\theta_0\prod_{k=1}^K\rho_k$.

Let $\Tcal_n$ be the sigma-field generated by a training sample 
independent of the evaluation observations $\left\{O_i\right\}_{i = 1}^{n}$, and let $\widehat p$ be $\Tcal_n$-measurable. 
Also let
$\zeta(O_{1:K}) = \prod_{r=1}^K Z_{r,r}$ and $g = \Sym \zeta$, that is, 
\begin{align}\begin{split}\nonumber
g(O_{1:K}) = \frac{1}{K!}\sum_{\sigma\in\Sn_K} \prod_{r = 1}^K Z_{\sigma(r),r}.
\end{split}\end{align}
The sample estimator is defined by the estimating equation
\begin{align}\begin{split}\label{basic sample splitting sample VUS}
\Un h_{\widehat p} - \widehat\theta_n \Un g = 0, \quad \text{or} \quad
\widehat\theta_n = \frac{\Un h_{\widehat p}}{\Un g},
\end{split}\end{align}
where
\begin{align}\begin{split}\nonumber
\Un h_{\widehat p}
& \coloneqq \binom{n}{K}^{-1}
  \sum_{1\leq i_1<\cdots<i_K\leq n}
  h_{\widehat p}(O_{i_1},\ldots,O_{i_K}),\\
\Un g
& \coloneqq \binom{n}{K}^{-1}
  \sum_{1\leq i_1<\cdots<i_K\leq n}
  g(O_{i_1},\ldots,O_{i_K}).
\end{split}\end{align}
Conditionally on $\Tcal_n$, both $\Un h_{\widehat p}$ and $\Un g$ that are formed to define the estimating equation are ordinary fixed-kernel U-statistics of degree $K$.  The estimating equation for $\hat\theta_n$ can be rewritten as
\begin{align}\begin{split}\nonumber
\(\Un g\)  \sqrt{n}\(\widehat\theta_n - \theta_0\) = \sqrt{n}\(\Un h_{\widehat p} - \theta_0 \Un g\) = \sqrt{n}\(\Un \(h_{\widehat p} - \theta_0 g\)\).
\end{split}\end{align}
The rest of this section will show that $\Un g \overset{p}{\rightarrow} \E[g(O_{1:K})] = \E \prod_{r=1}^K Z_{r,r} = \prod_{r=1}^K \rho_r$, and that 
\begin{align}\begin{split}\nonumber
   \sqrt{n}\(\Un \(h_{\widehat p} - \theta_0 g\)\) \rightsquigarrow \mathcal{N}(0, \sigma_0^2),
   \quad \text{for some } \sigma_0^2 > 0.
\end{split}\end{align}
It then follows from Slutsky's lemma that
\begin{align}\begin{split}\nonumber
  \sqrt{n}\(\widehat\theta_n - \theta_0\) \rightsquigarrow \mathcal{N}\(0, \frac{\sigma_0^2}{\prod_{r=1}^K \rho_r^2}\).
\end{split}\end{align}
Asymptotic normality of $\sqrt{n}\(\Un \(h_{\widehat p} - \theta_0 g\)\)$  is obtained by separately verifying an asymptotic linear representation in terms of the influence function of the U-statistic for $\Un h_{\widehat p}$ and 
$\Un g$. We break the analysis into several steps.

\subsubsection{Assignment margin assumptions}\label{refined margin assumption}

For $m<\ell$, define the true and estimated pairwise assignment contrasts
\begin{align}\begin{split}\nonumber
\Delta_{m\ell}(x)
\coloneqq A_m(p;x)-A_\ell(p;x),
\quad
\widehat\Delta_{m\ell}(x)
\coloneqq A_m(\widehat p;x)-A_\ell(\widehat p;x).
\end{split}\end{align}
Then
\begin{align}\begin{split}\nonumber
J_{m\ell}(p;x)=\1\{\Delta_{m\ell}(x)\geq0\},
\quad
J_{m\ell}(\widehat p;x)=\1\{\widehat\Delta_{m\ell}(x)\geq0\}.
\end{split}\end{align}
The following assumption parallels the three parts of Assumption 9 in \cite{escanciano2026debiased}, with pairwise assignment contrasts replacing their pairwise score difference.

\begin{assumption}[Assignment comparison and margin]
\label{ass:boundary}
Let
\begin{align}\begin{split}\nonumber
e_n(x) \coloneqq \max_{k\in[K]}\abs{\widehat p_k(x)-p_k(x)},
\quad
\delta_{s,n} \coloneqq \norm{e_n}_{L^s(P)}.
\end{split}\end{align}
The following conditions hold.

\begin{enumerate}[label=(\roman*)]
\item \emph{First-step consistency:}
\begin{align}\begin{split}\nonumber
\delta_{1,n}=o_{\Pp}(1).
\end{split}\end{align}

\item \emph{Stability of exact ties:} with
\begin{align}\begin{split}\label{eq:tie-stability-index}
\zeta_n
\coloneqq \sum_{1\leq m<\ell\leq M}
\E\!\left[
 \1\{\Delta_{m\ell}(X_{1:K})=0\}
 \abs{J_{m\ell}(\widehat p;X_{1:K})-J_{m\ell}(p;X_{1:K})}
 \middle|\Tcal_n
\right],
\end{split}\end{align}
one has $\zeta_n=o_{\Pp}(1)$.

\item \emph{Pairwise assignment margin:} there are constants $\beta>0$, $C_\Delta<\infty$, and $t_0>0$ such that, for every $m<\ell$ and every $0<t\leq t_0$,
\begin{align}\begin{split}\label{eq:pairwise-margin}
\Pp^K\!\left(
0<\abs{\Delta_{m\ell}(X_{1:K})}\leq t
\right)
\leq C_\Delta t^\beta.
\end{split}\end{align}
\end{enumerate}
\end{assumption}

\begin{remark}
Assumption \ref{ass:boundary}(i) is concerned with the convergence of the first step estimator for the conditional probabilities $p_k(x)$.  Parametric, nonparametric and machine learning techniques can be employed to estimate these probabilities. Verifying the convergence property required by 
Assumption \ref{ass:boundary}(i)  depends on the specific form of the estimator for $\hat p_k\(x\)$'s.

Condition \eqref{eq:tie-stability-index} is not a no-tie assumption. The set $\{\Delta_{m\ell}=0\}$ may have positive probability. It only requires the estimated weak comparison to 
preserve the deterministic tie convention asymptotically on that set. 
It holds automatically when exact ties are structural identities that persist after replacing $q$ by $\widehat q$. 
This is the assignment analogue of the treatment of observations with identical covariates in \cite{escanciano2026debiased}.
Their true score difference and fitted score difference are 
both exactly zero on the diagonal. 
If all nontrivial assignment ties have probability zero, condition (ii) is automatic.

On the one hand, \eqref{eq:tie-stability-index} is identically zero under Assumption 9(ii) on page 20 of \cite{escanciano2026debiased}, which holds in turn when all $p_k(X)$'s either have an absolutely continuous distribution or are finitely discrete and injective. 
The finitely discrete propensity scores are injective when each support point corresponds to a unique value of the covariates $X$, ensuring that no two different covariate values map to the same propensity score. 
If there exist distinct covariate values that lead to the same propensity score, injectivity fails and Condition 9(ii) of \cite{escanciano2026debiased} may not hold. 

On the other hand, Condition \eqref{eq:tie-stability-index} is more general than Assumption 9(ii) of \cite{escanciano2026debiased} and can hold even when injectivity fails. 
Note that in \eqref{eq:tie-stability-index}, $J_{m\ell}(p;X_{1:K})=1$ when $\Delta_{m\ell}=0$ just by definition. 
Equation \eqref{eq:tie-stability-index} can then be rewritten as
\begin{align}\begin{split}\nonumber
\zeta_n
\coloneqq \sum_{1\leq m<\ell\leq M}
\E\!\left[
 \1\{\Delta_{m\ell}(X_{1:K})=0\}
 \abs{J_{m\ell}(\widehat p;X_{1:K})-1}
 \middle|\Tcal_n
\right].
\end{split}\end{align}
For example, when the distribution of the propensity score functions $p_k(X)$ has only discrete components, a straightforward Oracle model selection procedure that respects 
the deterministic tie-breaking rule with probability converging to one can produce a post-model selection estimator of 
$\hat p_k(X)$ 's to ensure that Condition \eqref{eq:tie-stability-index} can still hold even if injectivity fails. When the distribution of $p_k(X)$ has both 
an absolutely continuous component and finitely discrete components, 
a density based clustering approach that will be discussed later can also generate a post-selection estimator of 
$\hat p_k(X)$ 's that satisfy  Condition \eqref{eq:tie-stability-index}. 

Assumption \ref{ass:boundary}(iii) can be expected to hold generally. It is difficult to come up with an example where Assumption \ref{ass:boundary}(iii)  can fail. Assumption \ref{ass:boundary}(iii) holds, 
for example, when the distribution of the propensity 
score functions $p_k(X)$ as random variables is a mixture between a smooth absolutely continuous component and a finitely discrete component. It might fail to hold when 
the distribution of $p_k(X)$ has a component that admits a discrete distribution with countably infinite support points. An example of such a component is the inverse of a Poisson random variable. 
\end{remark}

It is possible to adapt a more high-level margin assumption, define the set of true maximizers
\begin{align}\begin{split}\nonumber
\Mcal^*(x) \coloneqq \argmax_{\pi\in\Sn_K}A_\pi(p;x)
\end{split}\end{align}
and the strict suboptimality gap
\begin{align}\begin{split}\nonumber
G_A(x)
\coloneqq \min_{\pi\notin\Mcal^*(x)}
\{A^*(p;x)-A_\pi(p;x)\},
\quad \min\varnothing \coloneqq +\infty.
\end{split}\end{align}
Exact co-optimal ties are omitted from $G_A$. Since there are only finitely many permutations,
\begin{align}\begin{split}\nonumber
\{0<G_A\leq t\}
\subseteq
\bigcup_{m<\ell}
\{0<\abs{\Delta_{m\ell}}\leq t\}.
\end{split}\end{align}
Hence \eqref{eq:pairwise-margin} implies the reader-friendly permutation margin
\begin{align}\begin{split}\nonumber
\Pp^K(0<G_A\leq t)
\leq \binom{M}{2}C_\Delta t^\beta.
\end{split}\end{align}
Alternatively, we can also directly assume that 
\begin{align}\begin{split}\nonumber
\Pp^K(0<G_A\leq t) \leq C_{\Delta}t^{\beta}.
\end{split}\end{align} 
The pairwise form is used below because it permits an almost literal adaptation of Appendix I of \cite{escanciano2026debiased}.


Assume throughout that, for some deterministic $B<\infty$,
\begin{align}\begin{split}\label{eq:bounded-scores}
0\leq p_k(x),\widehat p_k(x)\leq B,
\quad k\in[K].
\end{split}\end{align}
The true propensity scores satisfy this condition with $B=1$. The estimated ratios may be clipped to the same bounded range.

For each permutation set ratios, write 
\begin{align}\begin{split}\nonumber
E_m(x)
\coloneqq \abs{A_m(\widehat p;x)-A_m(p;x)}.
\end{split}\end{align}
For $m \prec \ell$, define the contrast perturbation
\begin{align}\begin{split}\nonumber
B_{m\ell,n}(x)
\coloneqq \abs{\widehat\Delta_{m\ell}(x)-\Delta_{m\ell}(x)}.
\end{split}\end{align}

\begin{lemma}[Product and contrast perturbations]
\label{lem:product}
Under \eqref{eq:bounded-scores},
\begin{align}\begin{split}\label{eq:contrast-perturbation}
E_m(x_{1:K})
\leq B^{K-1}\sum_{r=1}^K e_n(x_r), \quad
B_{m\ell,n}(x_{1:K})
\leq2B^{K-1}\sum_{r=1}^K e_n(x_r).
\end{split}\end{align}
Consequently, for every $s\in[1,\infty]$,
\begin{align}\begin{split}\label{eq:Bml-norm}
\norm{B_{m\ell,n}}_{L^s(P^K)}
\leq L_K\delta_{s,n},
\quad
L_K \coloneqq 2KB^{K-1}.
\end{split}\end{align}
\end{lemma}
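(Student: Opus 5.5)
The plan is a telescoping bound for the difference of two products with bounded factors, followed by the triangle and Minkowski inequalities.

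\emph{Product bound.} Fix $m$ and write $a_r\coloneqq\widehat p_{\pi_m(r)}(x_r)$ and $b_r\coloneqq p_{\pi_m(r)}(x_r)$, so that $0\le a_r,b_r\le B$ by \eqref{eq:bounded-scores}. Use the telescoping identity
\begin{align}\begin{split}\nonumber
\prod_{r=1}^K a_r-\prod_{r=1}^K b_r=\sum_{r=1}^K\Bigl(\prod_{s<r}b_s\Bigr)(a_r-b_r)\Bigl(\prod_{s>r}a_s\Bigr).
\end{split}\end{align}
Each term has $K-1$ factors bounded by $B$, multiplied by $\abs{a_r-b_r}$. Moreover,
\begin{align}\begin{split}\nonumber
\abs{a_r-b_r}=\abs{\widehat p_{\pi_m(r)}(x_r)-p_{\pi_m(r)}(x_r)}\le e_n(x_r),
\end{split}\end{align}
because $e_n$ is the maximum over classes. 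This gives $E_m(x_{1:K})\le B^{K-1}\sum_{r}e_n(x_r)$. If $B<1$ one might worry about the convention, but the bound $B^{K-1}$ is exactly what the telescoping produces, so no issue arises.

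\emph{Contrast bound.} Since $\widehat\Delta_{m\ell}-\Delta_{m\ell}=(A_m(\widehat p)-A_m(p))-(A_\ell(\widehat p)-A_\ell(p))$, the triangle inequality gives $B_{m\ell,n}\le E_m+E_\ell$. Applying the product bound to both terms yields the factor $2B^{K-1}$ in \eqref{eq:contrast-perturbation}.

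\emph{$L^s$ norm.} Take the $L^s(P^K)$ norm of the pointwise bound and apply Minkowski's inequality to the sum over $r$:
\begin{align}\begin{split}\nonumber
\norm{B_{m\ell,n}}_{L^s(P^K)}\le 2B^{K-1}\sum_{r=1}^K\norm{e_n(X_r)}_{L^s(P^K)}.
\end{split}\end{align}
Under $P^K$ each coordinate $X_r$ has marginal law $P$, so $\norm{e_n(X_r)}_{L^s(P^K)}=\norm{e_n}_{L^s(P)}=\delta_{s,n}$. The same holds for $s=\infty$, since the essential supremum under $P^K$ of a function of one coordinate equals its essential supremum under $P$. Summing the $K$ identical terms gives $L_K\delta_{s,n}$ with $L_K=2KB^{K-1}$.

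The only point needing care is measurability. The norms are taken conditionally on $\Tcal_n$, with $\widehat p$ held fixed, so all the functions involved are measurable. There is no real obstacle beyond this.
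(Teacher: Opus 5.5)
Your proposal is correct and follows the paper's proof essentially step for step. Both use the telescoping identity for a difference of products with factors in $[0,B]$; you place $b$ in the prefix and $a$ in the suffix, which is immaterial. Both then use the triangle inequality $B_{m\ell,n}\le E_m+E_\ell$, and Minkowski's inequality together with the fact that each $X_r$ has marginal law $P$ under $P^K$. You also correctly work with products of $p$ and $\widehat p$, matching the definitions of $E_m$ and $\Delta_{m\ell}$; the paper's proof writes $\widehat q,q$ at that point, which is a slip.
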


\begin{proof}
For $a_r,b_r\in[0,B]$, the telescoping identity
\begin{align}\begin{split}\nonumber
\prod_{r=1}^Ka_r-\prod_{r=1}^Kb_r
=
\sum_{j=1}^K(a_j-b_j)
\left(\prod_{r<j}a_r\right)
\left(\prod_{r>j}b_r\right)
\end{split}\end{align}
implies
\begin{align}\begin{split}\nonumber
\abs{\prod_ra_r-\prod_rb_r}\leq B^{K-1}\sum_j\abs{a_j-b_j}.
\end{split}\end{align}
Apply this to $A_m(\widehat q)-A_m(q)$, 
then the triangle inequality gives \eqref{eq:contrast-perturbation}, and the Minkowski inequality gives \eqref{eq:Bml-norm}.
\end{proof}


\begin{proposition}
\label{prop:selector-consistency}
Under Assumption \ref{ass:boundary} and \eqref{eq:bounded-scores},
\begin{align}\begin{split}\label{eq:comparison-consistency}
\sum_{m<\ell}
\E\!\left[
 \abs{J_{m\ell}(\widehat p;X_{1:K})-J_{m\ell}(p;X_{1:K})}
 \middle|\Tcal_n
\right]
=o_{\Pp}(1).
\end{split}\end{align}
Consequently,
\begin{align}\begin{split}\label{eq:selector-consistency}
\Pp^K\!\left(
\pi_{\widehat p}^{\dagger}(X_{1:K})
\neq
\pi_p^{\dagger}(X_{1:K})
\middle|\Tcal_n
\right)=o_{\Pp}(1).
\end{split}\end{align}
\end{proposition}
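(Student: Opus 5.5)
Fix a pair $m<\ell$ and split the conditional expectation of $\abs{J_{m\ell}(\widehat p;X_{1:K})-J_{m\ell}(p;X_{1:K})}$ according to whether $\Delta_{m\ell}(X_{1:K})=0$, $0<\abs{\Delta_{m\ell}}\le t$, or $\abs{\Delta_{m\ell}}>t$, where $t\in(0,t_0]$ is a threshold chosen later. The tie piece, summed over pairs, is exactly $\zeta_n$, which is $o_{\Pp}(1)$ by Assumption \ref{ass:boundary}(ii). The small-margin piece is bounded by $\Pp^K(0<\abs{\Delta_{m\ell}}\le t)\le C_\Delta t^\beta$ by the margin condition \eqref{eq:pairwise-margin}.

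For the large-margin piece, observe that a disagreement of the weak indicators requires $\widehat\Delta_{m\ell}$ and $\Delta_{m\ell}$ to lie on opposite sides of zero (in the weak sense). When $\abs{\Delta_{m\ell}}>t$, this forces $B_{m\ell,n}\ge\abs{\Delta_{m\ell}}>t$. Markov's inequality conditional on $\Tcal_n$, combined with Lemma \ref{lem:product} at $s=1$, then gives
\begin{align}\begin{split}\nonumber
\Pp^K\!\left(B_{m\ell,n}(X_{1:K})>t\,\middle|\,\Tcal_n\right)\le \frac{L_K\delta_{1,n}}{t}.
\end{split}\end{align}
The conditioning is harmless here: $\widehat p$ is $\Tcal_n$-measurable and $X_{1:K}$ is independent of $\Tcal_n$, so $\E[e_n(X_r)\mid\Tcal_n]=\delta_{1,n}$.

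Summing over the $\binom M2$ pairs, the left side of \eqref{eq:comparison-consistency} is at most
\begin{align}\begin{split}\nonumber
\zeta_n+\binom M2\left(C_\Delta t^\beta+\frac{L_K\delta_{1,n}}{t}\right).
\end{split}\end{align}
Choose $t=t_n=\min\{t_0,\delta_{1,n}^{1/2}\}$, or more simply argue with a fixed $\varepsilon$: first take $t$ small so that the margin term is below $\varepsilon$, then use $\delta_{1,n}=o_{\Pp}(1)$ to make $L_K\delta_{1,n}/t<\varepsilon$ with probability tending to one. With either choice the bound is $o_{\Pp}(1)$, which proves \eqref{eq:comparison-consistency}. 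One degenerate case needs care: $\delta_{1,n}=0$ with $t=0$. The fixed-$\varepsilon$ version of the argument avoids it.

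For \eqref{eq:selector-consistency}, note that $\pi^\dagger_{\bar p}$ is a deterministic function of the vector $(J_{m\ell}(\bar p;\cdot))_{m<\ell}$ through the indicators $\chi_m$ in \eqref{eq:chi}. Hence $\{\pi^\dagger_{\widehat p}\ne\pi^\dagger_p\}\subseteq\bigcup_{m<\ell}\{J_{m\ell}(\widehat p)\ne J_{m\ell}(p)\}$. A union bound then controls the conditional probability by the sum in \eqref{eq:comparison-consistency}, since each indicator difference takes values in $\{0,1\}$.

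I expect the main obstacle to be the sign-disagreement step. One must check carefully that a weak-indicator flip with $\Delta_{m\ell}\ne0$ implies $B_{m\ell,n}\ge\abs{\Delta_{m\ell}}$. For example, if $\Delta_{m\ell}>0$ then the flip gives $\widehat\Delta_{m\ell}<0$; if $\Delta_{m\ell}<0$ then it gives $\widehat\Delta_{m\ell}\ge0$. The remaining difficulty is handling the data-dependent threshold rigorously in the $o_{\Pp}$ sense.
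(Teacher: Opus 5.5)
Your proposal is correct and follows the paper's proof essentially step for step. You use the same three-way split (exact ties summing to $\zeta_n$, the small-margin piece bounded by $C_\Delta t^\beta$, and the large-margin piece handled via $B_{m\ell,n}\ge|\Delta_{m\ell}|$ plus conditional Markov with Lemma \ref{lem:product} at $s=1$), followed by the same union bound over the pairwise comparisons that determine $\pi^\dagger$.

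The only difference is the threshold. The paper plugs in the balancing choice $t=(L_K\delta_{1,n})^{1/(1+\beta)}$, which yields an explicit $\delta_{1,n}^{\beta/(1+\beta)}$ rate. Your fixed-$\varepsilon$ argument gives no rate, but it sidesteps the data-dependent threshold and the degenerate case $\delta_{1,n}=0$ cleanly.
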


\begin{proof}
Fix $m \prec \ell$. If the two weak comparisons disagree and $\Delta_{m\ell}\neq0$, then the true and estimated contrasts lie on opposite sides of zero, or one is zero. Hence
$0<\abs{\Delta_{m\ell}}\leq B_{m\ell,n}$.
For any deterministic proof threshold $t\in(0,t_0]$,
\begin{align}\begin{split}\label{eq:comparison-bound}
&\E\!\left[
 \abs{J_{m\ell}(\widehat p)-J_{m\ell}(p)}
 \middle|\Tcal_n
\right] \\
&\quad\leq
\Pp^K(0<\abs{\Delta_{m\ell}}\leq t)
+\Pp^K(B_{m\ell,n}>t\mid\Tcal_n) 
+ \E\!\left[
 \1\{\Delta_{m\ell}=0\}
 \abs{J_{m\ell}(\widehat p)-J_{m\ell}(p)}
 \middle|\Tcal_n
\right] \\
&\quad\leq
C_\Delta t^\beta
+\frac{L_K\delta_{1,n}}{t}
+\zeta_{m\ell,n},
\end{split}\end{align}
where $\zeta_{m\ell,n}$ is the corresponding summand in \eqref{eq:tie-stability-index}. Choosing, for example,
$t=(L_K\delta_{1,n})^{1/(1+\beta)}$
when this quantity is below $t_0$ gives a bound of order
$\delta_{1,n}^{\beta/(1+\beta)}+\zeta_{m\ell,n}=o_{\Pp}(1)$.
Summing over the finitely many pairs proves \eqref{eq:comparison-consistency}.

If every pairwise comparison is unchanged, the smallest-index maximizer is unchanged. Therefore
\begin{align}\begin{split}\nonumber
\{\pi_{\widehat p}^{\dagger}\neq\pi_p^{\dagger}\}
\subseteq
\bigcup_{m<\ell}
\{J_{m\ell}(\widehat p)\neq J_{m\ell}(p)\}.
\end{split}\end{align}
The union bound and \eqref{eq:comparison-consistency} prove \eqref{eq:selector-consistency}.
\end{proof}

The threshold $t$ in \eqref{eq:comparison-bound} is used only to split the proof. It does not enter the estimator or the assignment algorithm. This is exactly the role played by the sequence $t_n\downarrow0$ in the sign-consistency proof in Appendix I of \cite{escanciano2026debiased}.

\subsubsection{Assignment regret}

Write
\begin{align}\begin{split}\nonumber
\pi_0(x) \coloneqq \pi_p^\dagger(x), 
\quad \widehat\pi(x) \coloneqq \pi_{\widehat p}^\dagger(x),
\end{split}\end{align}
and define the population regret of the estimated assignment by
\begin{align}\begin{split}\nonumber
R_n(x)
\coloneqq A_{\pi_0(x)}(p;x)-A_{\widehat\pi(x)}(p;x)
\geq0.
\end{split}\end{align}
Switching from the tie-broken oracle to another true co-maximizer gives $R_n=0$, so exact ties do not contribute to the population residual.

\begin{proposition}
\label{prop:margin-bound}
Suppose Assumption \ref{ass:boundary}(iii) and \eqref{eq:bounded-scores} hold. For $s\in(1,\infty)$,
\begin{align}\begin{split}\label{eq:regret-rate}
\E[R_n(X_{1:K})\mid\Tcal_n]
\leq
C\,\delta_{s,n}^{\,s(1+\beta)/(s+\beta)}.
\end{split}\end{align}
For the supremum norm,
\begin{align}\begin{split}\label{eq:regret-sup}
\E[R_n(X_{1:K})\mid\Tcal_n]
\leq
C\,\delta_{\infty,n}^{\,1+\beta}.
\end{split}\end{align}
The constants depend only on $K,B,C_\Delta,\beta$, and $s$.
\end{proposition}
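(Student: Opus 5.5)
The argument is the standard margin/regret argument for plug-in classifiers, carried out one pair of permutations at a time. Fix a realization $x=x_{1:K}$ with $R_n(x)>0$. Write $\pi_0(x)=\pi_m$ and $\widehat\pi(x)=\pi_\ell$, so $m\neq\ell$.

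\emph{Pointwise bound.} By optimality of $\pi_\ell$ under $\widehat p$ we have $A_\ell(\widehat p;x)\geq A_m(\widehat p;x)$. Write $\Delta(x)=A_m(p;x)-A_\ell(p;x)=R_n(x)>0$ and $\widehat\Delta(x)=A_m(\widehat p;x)-A_\ell(\widehat p;x)\leq0$. Then
\begin{align*}
R_n(x)=\Delta(x)\leq\Delta(x)-\widehat\Delta(x)\leq B_{(m\ell),n}(x),
\end{align*}
where the notation is symmetric in the order of the pair. Lemma \ref{lem:product} then gives
\begin{align*}
0<R_n(x)=\abs{\Delta_{m\ell}(x)}\leq B_{m\ell,n}(x)\leq 2B^{K-1}\sum_{r=1}^K e_n(x_r)\eqqcolon W_n(x).
\end{align*}
Summing over the finitely many pairs,
\begin{align*}
R_n\leq\sum_{m<\ell}\abs{\Delta_{m\ell}}\,\1\{0<\abs{\Delta_{m\ell}}\leq W_n\}.
\end{align*}
Exact ties contribute nothing because $R_n=0$ there.

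\emph{Supremum-norm case.} Here $W_n\leq L_K\delta_{\infty,n}$ almost surely, with $L_K=2KB^{K-1}$. Each summand is therefore at most $L_K\delta_{\infty,n}\,\1\{0<\abs{\Delta_{m\ell}}\leq L_K\delta_{\infty,n}\}$. Taking the conditional expectation given $\Tcal_n$ and applying \eqref{eq:pairwise-margin} gives the bound $C\delta_{\infty,n}^{1+\beta}$ whenever $L_K\delta_{\infty,n}\leq t_0$. When $L_K\delta_{\infty,n}>t_0$, we use $R_n\leq B^K$, so $R_n\leq B^K(L_K\delta_{\infty,n}/t_0)^{1+\beta}$, which is absorbed into the constant. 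This proves \eqref{eq:regret-sup}.

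\emph{Finite $s$.} Fix a threshold $t\in(0,t_0]$ and split each summand according to whether $\abs{\Delta_{m\ell}}\leq t$.
\begin{enumerate}
\item On $\{0<\abs{\Delta_{m\ell}}\leq t\}$, bound the summand by $t$ times the indicator. Its expectation is at most $C_\Delta t^{1+\beta}$.
\item On $\{\abs{\Delta_{m\ell}}>t\}$, we have $\abs{\Delta_{m\ell}}\leq W_n$ and $W_n>t$. Bound the summand by $W_n\1\{W_n>t\}\leq W_n^s/t^{s-1}$. Its expectation is at most $\norm{W_n}_{L^s(P^K)}^s t^{1-s}\leq(L_K\delta_{s,n})^s t^{1-s}$, using the Minkowski step from Lemma \ref{lem:product}.
\end{enumerate}
Balancing $t^{1+\beta}=\delta_{s,n}^s t^{1-s}$ gives $t=\delta_{s,n}^{s/(s+\beta)}$. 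Both terms then equal $\delta_{s,n}^{s(1+\beta)/(s+\beta)}$, which is the rate in \eqref{eq:regret-rate}. The case where this $t$ exceeds $t_0$ is handled as before, with $R_n\leq B^K$ and a constant depending on $t_0$.

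The main obstacle is the second term. The crude Markov-type bound must use only the $L^s$ control on $W_n$, while the margin condition is applied in the first term. Choosing $W_n\1\{W_n>t\}\leq W_n^s t^{1-s}$ rather than a Hölder split on the margin set is what produces exactly the exponent $s(1+\beta)/(s+\beta)$.
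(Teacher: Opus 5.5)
Your argument is correct and takes essentially the same route as the paper's proof. Both reduce to the single active pair with $0<\abs{\Delta_{m\ell}}\leq B_{m\ell,n}$, split at a threshold $t\leq t_0$, apply the margin condition on $\{0<\abs{\Delta_{m\ell}}\leq t\}$ and the tail-moment bound $\E[W\1\{W>t\}]\leq\E[W^s]\,t^{1-s}$ on the rest, and balance at $t\asymp\delta_{s,n}^{s/(s+\beta)}$. You differ in two small ways. First, you weight each pair by the exact regret $\abs{\Delta_{m\ell}}$ rather than by $B_{m\ell,n}$, so the small-margin term is directly $C_\Delta t^{1+\beta}$ without the paper's H\"older step; the two bounds are of the same order at the optimal $t$, so your closing remark that avoiding H\"older is what produces the exponent overstates the point. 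Second, you explicitly handle the regime where the optimal $t$ exceeds $t_0$ via $R_n\leq B^K$. The paper's proof omits this case, and handling it makes the constant depend on $t_0$ as well.
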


\begin{proof}
Condition on $\Tcal_n$, so $\widehat p$ is fixed. Suppose that the oracle selects $\pi_m$, the estimated rule selects $\pi_\ell$, and the regret is positive. Then
\begin{align}\begin{split}\nonumber
A_m(p)>A_\ell(p),
\quad
A_\ell(\widehat p)\geq A_m(\widehat p).
\end{split}\end{align}
Consequently,
$0<\abs{\Delta_{m\ell}} \leq B_{m\ell,n}$,
where for $m \succ \ell$ we interpret $\Delta_{m\ell} \coloneqq -\Delta_{\ell m}$ and $B_{m\ell,n} \coloneqq B_{\ell m,n}$. 
Since only one oracle and one estimated permutation are selected,
\begin{align}\begin{split}\label{eq:regret-pairwise-sum}
R_n
\leq
\sum_{m<\ell}
B_{m\ell,n}
\1\{0<\abs{\Delta_{m\ell}}\leq B_{m\ell,n}\}.
\end{split}\end{align}

Fix one pair and abbreviate $B_n \coloneqq B_{m\ell,n}$ and $\Delta \coloneqq \Delta_{m\ell}$. For an arbitrary $t\in(0,t_0]$, use the same small-margin/large-margin decomposition as in Appendix I of \cite{escanciano2026debiased}:
\begin{align}\begin{split}\nonumber
\E\!\left[
B_n\1\{0<\abs{\Delta}\leq B_n\}
\middle|\Tcal_n
\right] 
\leq
\E[B_n\1\{0<\abs{\Delta}\leq t\}\mid\Tcal_n]
+
\E[B_n\1\{B_n>t\}\mid\Tcal_n].
\end{split}\end{align}
For the first term, H\"older inequality and \eqref{eq:pairwise-margin} give
\begin{align}\begin{split}\label{eq:small-margin}
\E[B_n\1\{0<\abs{\Delta}\leq t\}\mid\Tcal_n]
&\leq
\norm{B_n}_{L^s(P^K)}
\Pp^K(0<\abs{\Delta}\leq t)^{(s-1)/s} \\
&\leq
C\norm{B_n}_{L^s(P^K)}
 t^{\beta(s-1)/s}.
\end{split}\end{align}
For the second term, the tail-moment inequality yields
\begin{align}\begin{split}\label{eq:large-margin}
\E[B_n\1\{B_n>t\}\mid\Tcal_n]
\leq
\frac{\norm{B_n}_{L^s(P^K)}^s}{t^{s-1}}.
\end{split}\end{align}
Combining \eqref{eq:small-margin} and \eqref{eq:large-margin}, then choosing
$t\asymp\norm{B_n}_{L^s(P^K)}^{s/(s+\beta)}$
gives
\begin{align}\begin{split}\nonumber
\E\!\left[
B_n\1\{0<\abs{\Delta}\leq B_n\}
\middle|\Tcal_n
\right]
\leq
C\norm{B_n}_{L^s(P^K)}^{s(1+\beta)/(s+\beta)}.
\end{split}\end{align}
Use \eqref{eq:Bml-norm}, sum over the finitely many pairs in \eqref{eq:regret-pairwise-sum}, and absorb their number into the constant. 
This proves \eqref{eq:regret-rate}.

For $s=\infty$, \eqref{eq:Bml-norm} gives $B_{m\ell,n}\leq L_K\delta_{\infty,n}$. Hence
\begin{align}\begin{split}\nonumber
\E\!\left[
B_{m\ell,n}
\1\{0<\abs{\Delta_{m\ell}}\leq B_{m\ell,n}\}
\middle|\Tcal_n
\right] & \leq
L_K\delta_{\infty,n}
\Pp^K\!\left(
0<\abs{\Delta_{m\ell}} \leq L_K\delta_{\infty,n}
\right) \\ 
& \leq C\delta_{\infty,n}^{1+\beta}.
\end{split}\end{align}
Summing over pairs proves \eqref{eq:regret-sup}.
\end{proof}

The exponent in \eqref{eq:regret-rate},
$\frac{s(1+\beta)}{s+\beta}$,
is exactly the smoothness exponent obtained by \cite{escanciano2026debiased} for a pairwise sign residual. 
The only change is that their score difference is replaced by each pairwise assignment contrast and then summed over the finite set of permutation pairs.

\begin{corollary}
\label{cor:rootn-regret}
For some $s\in(1,\infty)$, suppose
\begin{align}\begin{split}\label{eq:first-step-rate}
\delta_{s,n}
=o_{\Pp}\!\left(
 n^{-(s+\beta)/\{2s(1+\beta)\}}
\right).
\end{split}\end{align}
Then
\begin{align}\begin{split}\nonumber
\sqrt n\,\E[R_n(X_{1:K})\mid\Tcal_n]
=o_{\Pp}(1).
\end{split}\end{align}
For $s=\infty$, it is enough that
\begin{align}\begin{split}\label{eq:first-step-sup}
\delta_{\infty,n}
=o_{\Pp}\!\left(n^{-1/\{2(1+\beta)\}}\right).
\end{split}\end{align}
\end{corollary}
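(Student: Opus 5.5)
The corollary follows by substituting the rate conditions into the regret bounds of Proposition \ref{prop:margin-bound}; no new probabilistic argument should be needed. Conditional on $\Tcal_n$, the estimated score $\widehat p$ is fixed. Proposition \ref{prop:margin-bound} then gives the deterministic inequality
\begin{align}\begin{split}\nonumber
\E[R_n(X_{1:K})\mid\Tcal_n]\leq C\,\delta_{s,n}^{\,s(1+\beta)/(s+\beta)}.
\end{split}\end{align}
This holds for every realization of the training sample. The constant $C$ depends only on $K,B,C_\Delta,\beta,s$, and in particular not on $n$.

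For finite $s\in(1,\infty)$, set $\gamma\coloneqq s(1+\beta)/(s+\beta)>0$. Multiplying the bound by $\sqrt n$ gives
\begin{align}\begin{split}\nonumber
\sqrt n\,\E[R_n\mid\Tcal_n]\leq C\bigl(n^{1/(2\gamma)}\delta_{s,n}\bigr)^{\gamma}.
\end{split}\end{align}
Here $1/(2\gamma)=(s+\beta)/\{2s(1+\beta)\}$, which is exactly the exponent in \eqref{eq:first-step-rate}. By assumption, $n^{1/(2\gamma)}\delta_{s,n}=o_{\Pp}(1)$. The map $y\mapsto Cy^\gamma$ is continuous at zero, so the continuous mapping theorem yields $o_{\Pp}(1)$. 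Nonnegativity of $R_n$ gives the lower bound $0$, and the claim follows.

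For $s=\infty$, use \eqref{eq:regret-sup} with $\gamma=1+\beta$ and the same rescaling; \eqref{eq:first-step-sup} supplies $n^{1/\{2(1+\beta)\}}\delta_{\infty,n}=o_{\Pp}(1)$.

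One technical point needs attention. The proof of Proposition \ref{prop:margin-bound} chooses a threshold $t\asymp\norm{B_n}_{L^s}^{s/(s+\beta)}$, and this threshold must lie in $(0,t_0]$ for the margin condition \eqref{eq:pairwise-margin} to apply. On the event where $\delta_{s,n}$ is too large for this, the bound should be read as follows. Since $R_n\leq B^K$ by \eqref{eq:bounded-scores}, one has $\E[R_n\mid\Tcal_n]\leq B^K$. Because $\delta_{s,n}$ is then bounded below by a fixed constant, enlarging $C$ keeps the stated bound valid on that event as well. In any case, the rate condition forces $\delta_{s,n}\to0$ in probability, so the threshold restriction holds with probability tending to one. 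This bookkeeping is the only potential obstacle; the rest is a direct exponent calculation.
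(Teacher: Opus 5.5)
Your proposal is correct and follows the route the paper intends: the paper gives no separate proof, and the corollary comes from substituting the rate conditions into \eqref{eq:regret-rate} and \eqref{eq:regret-sup} of Proposition \ref{prop:margin-bound}, where your rescaling $\sqrt n\,\delta_{s,n}^{\gamma}=(n^{1/(2\gamma)}\delta_{s,n})^{\gamma}$ with $\gamma=s(1+\beta)/(s+\beta)$ is exactly the exponent check required. Your handling of the case where the optimizing threshold exceeds $t_0$ is also valid and fills a detail the paper leaves implicit: there $R_n$ is bounded and $\delta_{s,n}$ is bounded below, or alternatively $\delta_{s,n}\to_{\Pp}0$ makes that event asymptotically negligible.
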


For example, $s=2$ and $\beta=1$ require $\delta_{2,n}=o_{\Pp}(n^{-3/8})$, while a hard margin, formally $\beta=\infty$, yields the familiar $o_{\Pp}(n^{-1/4})$ $L^2$ requirement.

\subsubsection{Asymptotic properties and variance estimation}\label{sample VUS aymptotic properties}

Define the estimated-kernel residual
\begin{align}\begin{split}\nonumber
r_n(O_{1:K})
\coloneqq h_{\widehat p}(O_{1:K})-h_0(O_{1:K}).
\end{split}\end{align}

\begin{lemma}
\label{lem:kernel-residual}
Under the preceding definitions,
\begin{align}\begin{split}\label{eq:mean-residual}
\E[r_n(O_{1:K})\mid\Tcal_n]
=-\frac1{K!}\E[R_n(X_{1:K})\mid\Tcal_n].
\end{split}\end{align}
Under Assumption \ref{ass:boundary} and \eqref{eq:bounded-scores},
\begin{align}\begin{split}\label{eq:l2-residual}
\E[r_n(O_{1:K})^2\mid\Tcal_n]
\leq
\Pp^K(\widehat\pi\neq\pi_0\mid\Tcal_n)
=o_{\Pp}(1).
\end{split}\end{align}
\end{lemma}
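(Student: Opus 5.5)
The plan is to work conditionally on $\Tcal_n$ throughout. Because the training sample is independent of the evaluation observations, $\widehat p$ acts as a fixed candidate score, and the labels of the evaluation observations still satisfy \eqref{eq:class-signal-mean} given $X_{1:K}$. Both claims then come from comparing the ordered kernels $\psi_{\widehat p}$ and $\psi_p$ and transferring the comparison to the symmetrized kernels $h_{\widehat p}=\Sym\psi_{\widehat p}$ and $h_0=\Sym\psi_p$.

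\emph{Mean identity.} The observations $O_1,\ldots,O_K$ are i.i.d. and independent of $\Tcal_n$, so every reordered tuple $(O_{\sigma(1)},\ldots,O_{\sigma(K)})$ has the same conditional law as $O_{1:K}$. Hence symmetrization preserves conditional expectations:
\begin{align}\begin{split}\nonumber
\E[h_{\bar p}(O_{1:K})\mid\Tcal_n]=\E[\psi_{\bar p}(O_{1:K})\mid\Tcal_n]
\end{split}\end{align}
for $\bar p\in\{\widehat p,p\}$. Next I apply the tower property with the inner conditioning on $(X_{1:K},\Tcal_n)$, using \eqref{eq:conditional-evaluation}. Since $\widehat p$ is $\Tcal_n$-measurable, this gives
\begin{align}\begin{split}\nonumber
\E[\psi_{\widehat p}\mid X_{1:K},\Tcal_n]=\tfrac{1}{K!}A_{\widehat\pi(X_{1:K})}(p;X_{1:K}),
\quad
\E[\psi_{p}\mid X_{1:K},\Tcal_n]=\tfrac{1}{K!}A_{\pi_0(X_{1:K})}(p;X_{1:K}).
\end{split}\end{align}
Subtracting the two and taking the conditional expectation given $\Tcal_n$ yields $-\tfrac1{K!}\E[R_n\mid\Tcal_n]$, which is \eqref{eq:mean-residual}. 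The only care needed is to justify that \eqref{eq:conditional-evaluation} continues to hold after conditioning on $\Tcal_n$. This follows from the independence of $\Tcal_n$ from $(X_{1:K},C_{1:K})$ together with the conditional independence of the labels $C_r$ given $X_{1:K}$.

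\emph{$L^2$ bound.} I write $r_n=\Sym(\psi_{\widehat p}-\psi_p)$. Since $\Sym$ is an average, Jensen's inequality gives $(\Sym f)^2\leq\Sym(f^2)$ pointwise. Applying exchangeability again,
\begin{align}\begin{split}\nonumber
\E[r_n^2\mid\Tcal_n]\leq\E[(\psi_{\widehat p}-\psi_p)^2\mid\Tcal_n].
\end{split}\end{align}
Each ordered kernel equals $\tfrac1{K!}\prod_rZ_{r,\pi(r)}$ for its selected permutation $\pi$, so it takes values in $\{0,1/K!\}$. If $\widehat\pi(X_{1:K})=\pi_0(X_{1:K})$, then $\chi_m(\widehat p;X_{1:K})=\chi_m(p;X_{1:K})$ for all $m$, and the difference vanishes identically. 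On the complementary event, the difference is bounded by $1/K!\leq1$ in absolute value. Hence
\begin{align}\begin{split}\nonumber
(\psi_{\widehat p}-\psi_p)^2\leq\1\{\widehat\pi(X_{1:K})\neq\pi_0(X_{1:K})\},
\end{split}\end{align}
and taking conditional expectations gives the first inequality in \eqref{eq:l2-residual}. The $o_{\Pp}(1)$ conclusion is exactly \eqref{eq:selector-consistency} of Proposition \ref{prop:selector-consistency}, which applies under Assumption \ref{ass:boundary} and \eqref{eq:bounded-scores}.

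\emph{Main obstacle.} I do not expect a real obstacle; the argument is essentially bookkeeping. The step most likely to need care is the exchangeability argument under conditioning on $\Tcal_n$, since the bound after symmetrization involves the events $\{\widehat\pi\neq\pi_0\}$ evaluated at reordered tuples. Each such reordered tuple has the same conditional distribution as $X_{1:K}$, so these events all have the same conditional probability, and the averaging over $\sigma$ collapses to a single term.
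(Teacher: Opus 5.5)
Your proposal is correct and follows the paper's proof essentially step for step. For the mean identity, both use the tower property with \eqref{eq:conditional-evaluation} and exchangeability to remove the symmetrization. For the second-moment bound, both use Jensen's inequality for $\Sym$, the pointwise bound $(\psi_{\widehat p}-\psi_p)^2\leq\1\{\widehat\pi\neq\pi_0\}$, exchangeability, and then Proposition \ref{prop:selector-consistency}.
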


\begin{proof}
For the ordered kernel, \eqref{eq:conditional-evaluation} gives
\begin{align}\begin{split}\nonumber
\E[\psi_{\widehat p}-\psi_p\mid X_{1:K},\Tcal_n]
=-\frac1{K!}R_n(X_{1:K}).
\end{split}\end{align}
Taking expectations and using exchangeability to remove the symmetrization proves \eqref{eq:mean-residual}.

If $\widehat\pi=\pi_0$, the ordered kernels agree. Otherwise their absolute difference is at most $1$. Hence
\begin{align}\begin{split}\nonumber
(\psi_{\widehat p}-\psi_p)^2
\leq 
\1\{\widehat\pi\neq\pi_0\}.
\end{split}\end{align}
Jensen inequality for $\Sym$, exchangeability, and Proposition \ref{prop:selector-consistency} prove \eqref{eq:l2-residual}.
\end{proof}


\begin{lemma}
\label{lem:uvar}
Let $f_n$ be a square-integrable symmetric kernel of fixed degree $K$, measurable with respect to $\Tcal_n$ and independent of the evaluation sample. 
Then
\begin{align}\begin{split}\nonumber
\E\!\left[
 n\{\Un f_n-\E[f_n\mid\Tcal_n]\}^2
 \middle|\Tcal_n
\right]
\leq C_K\E[f_n^2\mid\Tcal_n],
\end{split}\end{align}
where $C_K<\infty$ depends only on $K$.
\end{lemma}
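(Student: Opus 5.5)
Conditionally on $\Tcal_n$ the kernel $f_n$ is fixed and the evaluation sample $O_1,\ldots,O_n$ is i.i.d. and independent of $\Tcal_n$. Lemma \ref{lem:uvar} therefore reduces to the classical variance bound for a U-statistic of degree $K$ with a fixed square-integrable kernel. The plan is to work throughout under the conditional law given $\Tcal_n$, so that every expectation below is $\E[\cdot\mid\Tcal_n]$, and then apply Hoeffding's exact variance formula.

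First, center the kernel: $\tilde f_n\coloneqq f_n-\E[f_n\mid\Tcal_n]$, so that $\Un f_n-\E[f_n\mid\Tcal_n]=\Un\tilde f_n$. For $c=0,1,\ldots,K$ define
\begin{align}\begin{split}\nonumber
\sigma_c^2\coloneqq\Var\bigl(\E[\tilde f_n(O_{1:K})\mid O_{1:c},\Tcal_n]\mid\Tcal_n\bigr),
\end{split}\end{align}
with $\sigma_0^2=0$. Two index tuples $i_1<\cdots<i_K$ and $j_1<\cdots<j_K$ sharing exactly $c$ indices contribute covariance $\sigma_c^2$; this uses symmetry of $f_n$ and independence of distinct observations. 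Counting such pairs gives Hoeffding's identity
\begin{align}\begin{split}\nonumber
\Var(\Un f_n\mid\Tcal_n)=\binom{n}{K}^{-1}\sum_{c=1}^K\binom{K}{c}\binom{n-K}{K-c}\sigma_c^2 .
\end{split}\end{align}

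Next, bound each $\sigma_c^2$. By conditional Jensen, $\sigma_c^2\le\sigma_K^2=\Var(f_n\mid\Tcal_n)\le\E[f_n^2\mid\Tcal_n]$. It then remains to show that the combinatorial weights satisfy
\begin{align}\begin{split}\nonumber
n\binom{n}{K}^{-1}\sum_{c=1}^K\binom{K}{c}\binom{n-K}{K-c}\le C_K .
\end{split}\end{align}
For $n\ge 2K$ this holds because $\binom{n-K}{K-c}\le n^{K-c}$ and $\binom{n}{K}\ge (n/2)^K/K!$, so the $c$-th term is $O(n^{1-c})\le O(1)$. For $K\le n<2K$ the left side is trivially at most $n\,2^K\le 2K\,2^K$, using $\binom{K}{c}\binom{n-K}{K-c}\le\binom{n}{K}$ summed over $c$ by Vandermonde. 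If the convention for $n<K$ matters, the statement is vacuous or defined separately. Taking the maximum of these constants gives $C_K$, which depends only on $K$.

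The only delicate point is justifying the covariance computation conditionally. This requires the sample to be independent of $\Tcal_n$, so that conditioning on $\Tcal_n$ leaves the $O_i$ i.i.d. with law $P$. Given that, the rest is routine counting. An alternative route is to use the Hoeffding decomposition into orthogonal degenerate projections, which gives the same bound, but the counting argument above is more direct.
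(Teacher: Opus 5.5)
Your argument is correct and is essentially the paper's proof: both condition on $\Tcal_n$, expand $\Var(\Un f_n\mid\Tcal_n)$ over pairs of $K$-subsets, note that disjoint pairs contribute nothing, and bound each of the $O_K(n^{2K-1})$ overlapping-pair covariances by $\E[f_n^2\mid\Tcal_n]$. The differences are minor: you use Hoeffding's exact formula with $\sigma_c^2\le\sigma_K^2$ where the paper uses Cauchy--Schwarz, and you treat $K\le n<2K$ explicitly so that $C_K$ is uniform in $n$, a detail the paper's asymptotic counting leaves implicit.
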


\begin{proof}
Condition on $\Tcal_n$. In the covariance expansion of $\Var(\Un f_n)$, terms indexed by disjoint $K$-subsets are independent. The number of ordered pairs of $K$-subsets having a nonempty intersection is $O_K(n^{2K-1})$, while $\binom nK^2$ is of order $n^{2K}$. Cauchy--Schwarz bounds each remaining covariance by a constant multiple of $\E[f_n^2\mid\Tcal_n]$.
\end{proof}

\begin{proposition}
\label{prop:oracle-equivalence}
Suppose Assumption \ref{ass:boundary}, \eqref{eq:bounded-scores}, and the root-$n$ rate \eqref{eq:first-step-rate} hold for some $s\in(1,\infty)$, or suppose \eqref{eq:first-step-sup} holds. Then
\begin{align}\begin{split}\nonumber
\sqrt n\{\Un h_{\widehat p}-\Un h_0\}
=o_{\Pp}(1).
\end{split}\end{align}
\end{proposition}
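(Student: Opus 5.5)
The plan is to decompose the difference into a conditional-mean part and a centered fluctuation part, using the kernel residual $r_n=h_{\widehat p}-h_0$. Since $\Un$ is linear in the kernel,
\begin{align}\begin{split}\nonumber
\sqrt n\{\Un h_{\widehat p}-\Un h_0\}
=\sqrt n\,\Un r_n
=\sqrt n\bigl\{\Un r_n-\E[r_n(O_{1:K})\mid\Tcal_n]\bigr\}
+\sqrt n\,\E[r_n(O_{1:K})\mid\Tcal_n].
\end{split}\end{align}
The kernel $r_n$ is symmetric, since both $h_{\widehat p}$ and $h_0$ are symmetrized. It is bounded in absolute value by one, and it is $\Tcal_n$-measurable as a function. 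Conditionally on $\Tcal_n$, it is therefore a fixed-kernel U-statistic of degree $K$ on the evaluation sample, and both pieces can be handled conditionally.

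For the bias term, Lemma \ref{lem:kernel-residual}, equation \eqref{eq:mean-residual}, gives $\sqrt n\,\E[r_n\mid\Tcal_n]=-\sqrt n\,\E[R_n(X_{1:K})\mid\Tcal_n]/K!$. Corollary \ref{cor:rootn-regret} applies under either \eqref{eq:first-step-rate} or \eqref{eq:first-step-sup}. It shows this is $o_{\Pp}(1)$, because Proposition \ref{prop:margin-bound} only needs Assumption \ref{ass:boundary}(iii) together with \eqref{eq:bounded-scores}.

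For the fluctuation term, apply Lemma \ref{lem:uvar} with $f_n=r_n$. This yields
\begin{align}\begin{split}\nonumber
\E\!\left[n\{\Un r_n-\E[r_n\mid\Tcal_n]\}^2\middle|\Tcal_n\right]
\leq C_K\,\E[r_n^2\mid\Tcal_n]
\leq C_K\,\Pp^K(\widehat\pi\neq\pi_0\mid\Tcal_n),
\end{split}\end{align}
where the second inequality is \eqref{eq:l2-residual}. The right-hand side is $o_{\Pp}(1)$ by Proposition \ref{prop:selector-consistency}. That proposition uses Assumption \ref{ass:boundary}(i)--(iii), and part (i) follows from either rate condition because $\delta_{1,n}\leq\delta_{s,n}$. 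The conditional Chebyshev inequality then gives, for every $\varepsilon>0$,
\begin{align}\begin{split}\nonumber
\Pp\bigl(|\sqrt n\{\Un r_n-\E[r_n\mid\Tcal_n]\}|>\varepsilon\mid\Tcal_n\bigr)\leq\min\{1,\varepsilon^{-2}C_K\E[r_n^2\mid\Tcal_n]\}.
\end{split}\end{align}
The right-hand side is bounded by one and converges to zero in probability, so its expectation tends to zero by dominated convergence. Taking expectations therefore shows the unconditional probability vanishes. The same conditional-to-unconditional step, using boundedness, turns the $o_{\Pp}(1)$ bias bound into the required statement.

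The main obstacle is not the algebra but the conditioning. Lemma \ref{lem:uvar} and Lemma \ref{lem:kernel-residual} both require that $\widehat p$ be independent of the evaluation observations, and that $r_n$ be symmetric and square integrable given $\Tcal_n$. Both hold because $r_n$ is bounded and $\widehat p$ comes from an independent training sample. Once these are confirmed, the proof reduces to combining the two displayed bounds.
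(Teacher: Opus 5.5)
Your proposal is correct and matches the paper's proof. It uses the same split of $\sqrt n\,\Un r_n$ into the conditional bias $\sqrt n\,\E[r_n\mid\Tcal_n]$, controlled by \eqref{eq:mean-residual} and Corollary \ref{cor:rootn-regret}, and the centered fluctuation, controlled by Lemma \ref{lem:uvar}, \eqref{eq:l2-residual}, and a conditional Markov/Chebyshev step; your bounded-convergence passage from conditional to unconditional probabilities simply spells out the paper's last line (the bias term, being $\Tcal_n$-measurable and already $o_{\Pp}(1)$, needs no such step).
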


\begin{proof}
Decompose
\begin{align}\begin{split}\nonumber
\sqrt n\Un r_n = \sqrt n\,\E[r_n\mid\Tcal_n]
+ \sqrt n\{\Un r_n-\E[r_n\mid\Tcal_n]\}.
\end{split}\end{align}
The first term is $o_{\Pp}(1)$ by \eqref{eq:mean-residual} and Corollary \ref{cor:rootn-regret}. 
Lemmas \ref{lem:uvar} and \ref{lem:kernel-residual} imply that the conditional second moment of the second term converges to zero in probability. 
The conditional Markov inequality completes the proof.
\end{proof}

This proof has the same two components as Appendix I of \cite{escanciano2026debiased}. 
Comparison consistency makes the conditional second moment of the estimated-kernel difference vanish. 
The margin bound makes the conditional population residual negligible on the root-$n$ scale.


Define the first Hoeffding projection of the oracle kernel by
\begin{align}\begin{split}\nonumber
g_0(o)
\coloneqq \E[h_0(o,O_2,\ldots,O_K) - g(o,O_2,\ldots,O_K) \theta_0],
\quad
\E[g_0(O)]= 0, 
\end{split}\end{align}
and let
\begin{align}\begin{split}\nonumber
\sigma_0^2 \coloneqq K^2\Var\{g_0(O)\}.
\end{split}\end{align}

\begin{assumption}[Nondegenerate first projection]
\label{ass:nondegenerate}
$0<\Var\{g_0(O)\}<\infty$.
\end{assumption}
The upper bound is automatic because $h_0$ is bounded. Strict positivity excludes a first-order degenerate oracle U-statistic.

\begin{theorem}
\label{thm:clt}
Suppose the conditions of Proposition \ref{prop:oracle-equivalence} and Assumption \ref{ass:nondegenerate} hold. Then
\begin{align}\begin{split}\nonumber
\sqrt n(\widehat\theta_n-\theta_0)
=
\frac{1}{\prod_{r=1}^K \rho_r}
\frac K{\sqrt n}\sum_{i=1}^n
 g_0(O_i)
+o_{\Pp}(1).
\end{split}\end{align}
Consequently,
\begin{align}\begin{split}\label{eq:clt}
\sqrt n(\widehat\theta_n-\theta_0)
\rightsquigarrow N\left(0,
\frac{\sigma_0^2}{\prod_{r=1}^K \rho_r^2}
\right).
\end{split}\end{align}
\end{theorem}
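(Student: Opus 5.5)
The proof starts from the estimating-equation identity already written after \eqref{basic sample splitting sample VUS}:
\begin{align}\begin{split}\nonumber
(\Un g)\sqrt n(\widehat\theta_n-\theta_0)=\sqrt n\,\Un(h_{\widehat p}-\theta_0 g)
=\sqrt n\,\Un(h_0-\theta_0 g)+\sqrt n\{\Un h_{\widehat p}-\Un h_0\}.
\end{split}\end{align}
Proposition \ref{prop:oracle-equivalence} shows that the second term is $o_{\Pp}(1)$. This is the only place where the estimated first step enters. What remains is the oracle U-statistic $\Un f_0$ with the fixed, bounded, symmetric kernel $f_0\coloneqq h_0-\theta_0 g$.

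The mean of this kernel is zero. By $\E[h_0(O_{1:K})]=\theta_0\prod_k\rho_k$ and $\E[g(O_{1:K})]=\prod_k\rho_k$, we get $\E[f_0]=0$. Its first Hoeffding projection is exactly $g_0$.

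The classical Hoeffding decomposition then gives
\begin{align}\begin{split}\nonumber
\Un f_0=\frac Kn\sum_{i=1}^n g_0(O_i)+\mathcal R_n.
\end{split}\end{align}
Here $\mathcal R_n$ collects the projections of order two and higher, which are degenerate and mutually orthogonal. Its variance is $\Var(\mathcal R_n)=O(n^{-2})\E[f_0^2]$. Since $f_0$ is bounded (both $h_0$ and $g$ take values in $[0,1]$, and $\theta_0\le 1$), we get $\sqrt n\,\mathcal R_n=o_{\Pp}(1)$ by Chebyshev. Hence
\begin{align}\begin{split}\nonumber
\sqrt n\,\Un(h_{\widehat p}-\theta_0 g)=\frac K{\sqrt n}\sum_{i=1}^n g_0(O_i)+o_{\Pp}(1).
\end{split}\end{align}

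Next, the denominator. By Lemma \ref{lem:uvar} applied to the fixed kernel $g$ (or simply the U-statistic law of large numbers), $\Un g-\prod_k\rho_k=O_{\Pp}(n^{-1/2})$. Consequently $\Un g\to\prod_k\rho_k>0$ in probability.

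By the classical central limit theorem together with Assumption \ref{ass:nondegenerate}, the leading term is $O_{\Pp}(1)$. Therefore dividing by $\Un g$ and replacing it with $\prod_k\rho_k$ changes the expression only by $O_{\Pp}(1)\cdot o_{\Pp}(1)=o_{\Pp}(1)$. This gives the asymptotically linear representation.

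The central limit theorem for i.i.d. bounded $g_0(O_i)$ with $0<\Var(g_0)<\infty$ gives $\frac{K}{\sqrt n}\sum_i g_0(O_i)\rightsquigarrow N(0,\sigma_0^2)$. Slutsky's lemma then yields \eqref{eq:clt}.

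The only delicate point is the bookkeeping. We must make sure that the oracle-equivalence step is applied to $h$ alone, since $g$ does not depend on $\widehat p$. We must also confirm that the projection of $f_0$ equals $g_0$ as defined, including the $\theta_0 g$ part. The substantive difficulty was already absorbed into Proposition \ref{prop:oracle-equivalence}.
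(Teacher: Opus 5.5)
Your proposal is correct and follows essentially the same route as the paper: the estimating-equation identity, Proposition \ref{prop:oracle-equivalence} to replace $\Un h_{\widehat p}$ by $\Un h_0$, the Hoeffding decomposition of the bounded oracle kernel $h_0-\theta_0 g$ with first projection $g_0$, consistency of $\Un g$ for $\prod_k\rho_k$, and then Slutsky's lemma and the central limit theorem. The differences are cosmetic. You make the $O(n^{-2})$ variance of the higher-order remainder explicit, and you get $\Un g\to\prod_k\rho_k$ from a variance bound (at rate $O_{\Pp}(n^{-1/2})$), whereas the paper invokes Hoeffding's strong law for U-statistics.
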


\begin{proof}
The Hoeffding decomposition for the bounded fixed-order oracle kernel gives
\begin{align}\begin{split}\nonumber
\sqrt n(\Un (h_0-g\theta_0))
=\frac K{\sqrt n}\sum_{i=1}^n g_0(O_i)
+o_{\Pp}(1).
\end{split}\end{align}
The terms of Hoeffding order two and higher are $o_{\Pp}(1)$ after multiplication by $\sqrt n$. Proposition \ref{prop:oracle-equivalence} permits replacing $\Un h_0$ by $\Un h_{\widehat p}$, so that 
\begin{align}\begin{split}\nonumber
\sqrt n(\Un (h_{\widehat p}-\theta_0 g))
=\frac K{\sqrt n}\sum_{i=1}^n g_0(O_i)
+o_{\Pp}(1).
\end{split}\end{align}
Next, $g=\Sym\zeta$ is a measurable symmetric kernel of fixed degree $K$
with $0\leq g\leq1$, so $\E|g(O_{1:K})|<\infty$. Since the evaluation
observations are i.i.d., the Hoeffding strong law of large numbers for
U-statistics \citep{hoeffding1961strong} gives
$\Un g \to \E[g(O_{1:K})]$ almost surely.
Symmetrization preserves expectation, and independence of the observations yields
\begin{align}\begin{split}\nonumber
\E[g(O_{1:K})]
=\E\left[\prod_{r=1}^K Z_{r,r}\right]
=\prod_{r=1}^K\E[Z_{r,r}]
=\prod_{r=1}^K\rho_r.
\end{split}\end{align}
Consequently,
\begin{align}\begin{split}\nonumber
\Un g=\prod_{r=1}^K\rho_r+o_{\Pp}(1), \quad 
\frac{1}{\Un g} = \frac{1}{\prod_{r=1}^K\rho_r} + o_{\Pp}(1).
\end{split}\end{align}
Recall that 
$\(\Un g\)  \sqrt{n}\(\widehat\theta_n - \theta_0\) 
= \sqrt{n}\(\Un \(h_{\widehat p} - \theta_0 g\)\)$.
Then we can write
\begin{align}\begin{split}\nonumber
\sqrt{n}\(\widehat\theta_n - \theta_0\) 
& = \frac{\sqrt{n}\(\Un \(h_{\widehat p} - \theta_0 g\)\)}{\Un g} \\
& = \left(\frac{1}{\prod_{r=1}^K\rho_r} + o_{\Pp}(1)\right) \sqrt{n}\(\Un \(h_{\widehat p} - \theta_0 g\)\)\\
& = 
\left(\frac{1}{\prod_{r=1}^K\rho_r} + o_{\Pp}(1)\right) 
\left(\frac K{\sqrt n}\sum_{i=1}^n g_0(O_i)+o_{\Pp}(1)\right)\\
& = \frac{1}{\sqrt n}\sum_{i=1}^n 
\frac{1}{\prod_{r=1}^K\rho_r} 
K g_0(O_i) + o_{\Pp}(1).
\end{split}\end{align}
The central limit theorem 
then gives the asymptotic normality result \eqref{eq:clt}.
\end{proof}

For variance estimation, define the leave-one-out projection estimate
\begin{align}\begin{split}\nonumber
\widehat g_i
\coloneqq \binom{n-1}{K-1}^{-1}
\sum_{\substack{J\subseteq[n]\setminus\{i\}\\|J|=K-1}}
 h_{\widehat p}(O_i,O_J) - \widehat\theta_n g(O_i, O_J)
\end{split}\end{align}
and
\begin{align}\begin{split}\nonumber
\widehat\sigma_n^2
\coloneqq \frac{K^2}{n}
\sum_{i=1}^n(\widehat g_i)^2.
\end{split}\end{align}

\begin{proposition}
\label{prop:variance}
Under Assumption \ref{ass:boundary}, \eqref{eq:bounded-scores}, $\delta_{1,n}=o_{\Pp}(1)$, and Assumption \ref{ass:nondegenerate},
\begin{align}\begin{split}\label{eq:variance-consistency}
\widehat\sigma_n^2 \to_{\Pp}\sigma_0^2.
\end{split}\end{align}
If the root-$n$ rate condition of Theorem \ref{thm:clt} also holds, then
\begin{align}\begin{split}\label{eq:studentized}
\left(\prod_{r=1}^K\hat\rho_r\right)
\frac{\sqrt n(\widehat\theta_n-\theta_0)}{\widehat\sigma_n}
\rightsquigarrow N(0,1).
\end{split}\end{align}
\end{proposition}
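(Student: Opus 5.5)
The plan is to compare $\widehat g_i$ with an oracle counterpart and then apply a law of large numbers. Write $k_{\theta}^{\bar p}\coloneqq h_{\bar p}-\theta g$, and define the oracle leave-one-out projections
\begin{align}\begin{split}\nonumber
\widetilde g_i\coloneqq \binom{n-1}{K-1}^{-1}\sum_{\substack{J\subseteq[n]\setminus\{i\}\\|J|=K-1}} k_{\theta_0}^{p}(O_i,O_J).
\end{split}\end{align}
The argument has three steps.

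\emph{Step 1: oracle consistency.} I show that $\frac1n\sum_i\widetilde g_i^2\to_{\Pp}\Var\{g_0(O)\}$. Since $k^p_{\theta_0}$ is bounded, $\widetilde g_i-g_0(O_i)$ is, conditionally on $O_i$, a degree-$(K-1)$ U-statistic with mean zero. Its conditional second moment is $O(1/n)$ by the variance bound of Lemma \ref{lem:uvar}, applied conditionally on $O_i$. Hence $\frac1n\sum_i(\widetilde g_i-g_0(O_i))^2=O_{\Pp}(1/n)$. The ordinary LLN gives $\frac1n\sum_i g_0(O_i)^2\to\E g_0^2=\Var g_0$, because $\E g_0=0$. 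Combining the two through $(a+b)^2$ expansions and Cauchy--Schwarz yields the claim.

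\emph{Step 2: replacing oracle ingredients by estimated ones.} I write
\begin{align}\begin{split}\nonumber
\widehat g_i-\widetilde g_i=\text{avg}_J\, r_n(O_i,O_J)-(\widehat\theta_n-\theta_0)\,\text{avg}_J\, g(O_i,O_J).
\end{split}\end{align}
The second term is bounded by $|\widehat\theta_n-\theta_0|$, since $0\le g\le1$. Consistency of $\widehat\theta_n$ requires only $\Un h_{\widehat p}-\Un h_0=o_{\Pp}(1)$, not a root-$n$ rate. This follows from the decomposition in Proposition \ref{prop:oracle-equivalence} without the $\sqrt n$ factor: $\E[r_n\mid\Tcal_n]=O(\E[R_n\mid\Tcal_n])\le \E[\1\{\widehat\pi\neq\pi_0\}\mid\Tcal_n]\cdot\text{const}$, which is $o_{\Pp}(1)$ by Proposition \ref{prop:selector-consistency}. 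Lemma \ref{lem:uvar} controls the fluctuation. Together with $\Un g\to\prod\rho_r$ and the oracle SLLN, this gives $\widehat\theta_n\to_{\Pp}\theta_0$.

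For the first term, Jensen's inequality gives
\begin{align}\begin{split}\nonumber
\frac1n\sum_i(\text{avg}_J r_n)^2\le \frac1n\sum_i\text{avg}_J r_n(O_i,O_J)^2=\Un(\Sym r_n^2).
\end{split}\end{align}
The right-hand side is a U-statistic whose conditional mean is $\E[r_n^2\mid\Tcal_n]=o_{\Pp}(1)$ by \eqref{eq:l2-residual}, so it is $o_{\Pp}(1)$ by conditional Markov. Hence $\frac1n\sum_i(\widehat g_i-\widetilde g_i)^2=o_{\Pp}(1)$. Then $\bigl|\|\widehat g\|_n-\|\widetilde g\|_n\bigr|\le\|\widehat g-\widetilde g\|_n$ in the empirical $L^2$ norm, and combining with Step 1 gives \eqref{eq:variance-consistency}.

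\emph{Step 3: studentization.} Under the root-$n$ condition, Theorem \ref{thm:clt} gives $\sqrt n(\widehat\theta_n-\theta_0)\rightsquigarrow N(0,\sigma_0^2/\prod\rho_r^2)$. With $\widehat\rho_r\to_{\Pp}\rho_r$ by the LLN, $\widehat\sigma_n\to_{\Pp}\sigma_0>0$ by Assumption \ref{ass:nondegenerate}, and continuous mapping, Slutsky's lemma yields \eqref{eq:studentized}.

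I expect the main obstacle to be Step 1's control of $\widetilde g_i-g_0(O_i)$ uniformly enough to average over $i$. The leave-one-out sums share observations, so the summands are dependent. Taking expectations of the squares directly, however, avoids any need for independence across $i$, since only first moments of the averaged squares are needed. The other point to verify carefully is that consistency of $\widehat\theta_n$ holds under only $\delta_{1,n}=o_{\Pp}(1)$.
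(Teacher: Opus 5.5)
Your proposal is correct and follows essentially the same route as the paper. Both compare $\widehat g_i$ with the oracle leave-one-out $\widetilde g_i$, bound $\frac1n\sum_i(\widehat g_i-\widetilde g_i)^2$ by $\Un(r_n^2)$ plus a $(\widehat\theta_n-\theta_0)^2$ term controlled through \eqref{eq:l2-residual}, apply the fixed-order U-statistic variance bound to $\widetilde g_i-g_0(O_i)$, and finish with the LLN and Slutsky. Your explicit argument that $\widehat\theta_n\to_{\Pp}\theta_0$ holds under only $\delta_{1,n}=o_{\Pp}(1)$, using $\E[R_n\mid\Tcal_n]\le\Pp^K(\widehat\pi\neq\pi_0\mid\Tcal_n)$ together with Lemma \ref{lem:uvar}, fills in a step that the paper simply asserts.
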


\begin{proof}
Define the oracle leave-one-out estimate
\begin{align}\begin{split}\nonumber
\widetilde g_i
\coloneqq \binom{n-1}{K-1}^{-1}
\sum_{\substack{J\subseteq[n]\setminus\{i\}\\|J|=K-1}}
 (h_0(O_i,O_J) - \theta_0 g(O_i, O_J)). 
\end{split}\end{align}
Cauchy-Schwarz inequality gives
\begin{align}\begin{split}\label{eq:leave-one-kernel-bound}
\frac1n\sum_{i=1}^n(\widehat g_i-\widetilde g_i)^2
& \leq \Un \left(\left(h_{\widehat p} - h_0 - (\hat\theta_n - \theta_0) g\right)^2\right) \\ 
& \leq 4 \Un(r_n^2) + 4 (\hat\theta_n - \theta_0)^2 \Un(g^2).
\end{split}\end{align}
The conditional expectation of the right-hand side is $\E[r_n^2\mid\Tcal_n]=o_{\Pp}(1)$ by Lemma \ref{lem:kernel-residual}. 
Furthermore, $\Un g^2 = \E[g^2] + o_{\Pp}(1)$ and  $(\hat\theta_n - \theta_0) = o_{\Pp}(1)$.  Hence \eqref{eq:leave-one-kernel-bound} is $o_{\Pp}(1)$.

Conditional on $O_i$, $\widetilde g_i$ is a degree-$(K-1)$ U-statistic estimating $g_0(O_i)$. 
Boundedness and the fixed-order variance bound imply
\begin{align}\begin{split}\nonumber
\E[(\widetilde g_i-g_0(O_i))^2]\leq C_K/n.
\end{split}\end{align}
Therefore
\begin{align}\begin{split}\label{eq:projection-l2}
\frac1n\sum_{i=1}^n
(\widehat g_i-g_0(O_i))^2=o_{\Pp}(1).
\end{split}\end{align}
The law of large numbers and \eqref{eq:projection-l2}
now yield
\begin{align}\begin{split}\nonumber
\frac1n\sum_{i=1}^n
(\widehat g_i)^2
\to_{\Pp}
\E[(g_0(O))^2].
\end{split}\end{align}
Multiplication by $K^2$ proves \eqref{eq:variance-consistency}. 
Then \eqref{eq:studentized} follows by 
\begin{align}\begin{split}\nonumber
\left(\prod_{r=1}^K\hat\rho_r\right) = \left(\prod_{r=1}^K\rho_r\right) + o_{\Pp}(1),
\end{split}\end{align}
and Slutsky's lemma. 
\end{proof}

To summarize the main results of this section,
the exact-tie-broken robust estimator is
\begin{align}\begin{split}\nonumber
\widehat\theta_n=\frac{\Un h_{\widehat p}}{\Un g},
\quad
h_{\widehat p}
=\Sym\left[
\frac1{K!}
\prod_{r=1}^K
  \1\{C_r=\pi_{\widehat p}^{\dagger}(X_{1:K})(r)\}
\right], \quad
g = S \prod_{r=1}^K Z_{r,r}.
\end{split}\end{align}
There is no assignment tolerance. 
The first-step score enters only through the exact maximum-weight permutation.

The pairwise margin assumption
\begin{align}\begin{split}\nonumber
\Pp^K(0<\abs{\Delta_{m\ell}}\leq t)
\leq Ct^\beta
\end{split}\end{align}
and the product perturbation bound
\begin{align}\begin{split}\nonumber
\norm{\widehat\Delta_{m\ell}-\Delta_{m\ell}}_s
\lesssim
\norm{\max_k\abs{\widehat p_k-p_k}}_{L^s(P)}
\end{split}\end{align}
lead to the Appendix-I-type residual estimate
\begin{align}\begin{split}\nonumber
\E[R_n\mid\Tcal_n]
\lesssim
\delta_{s,n}^{s(1+\beta)/(s+\beta)}.
\end{split}\end{align}
Exact co-optimal ties contribute zero regret. 
They are permitted in the sample theory provided the deterministic strict/weak tie convention is asymptotically stable on the exact-tie set. 
This replaces a no-tie assumption and is the direct analogue of preserving the diagonal ties in \cite{escanciano2026debiased}.

The residual bound controls the conditional mean of the estimated kernel, while comparison consistency controls its conditional $L^2$ distance from the oracle kernel. 
Together they imply
\begin{align}\begin{split}\nonumber
\sqrt n\{\Un h_{\widehat p}-\Un h_0\}=o_{\Pp}(1),
\end{split}\end{align}
so standard fixed-degree U-statistic theory gives
\begin{align}\begin{split}\nonumber
\sqrt n(\widehat\theta_n-\theta_0) = 
\frac{K}{\prod_{r=1}^K \rho_r} \frac{1}{\sqrt n}\sum_{i=1}^n
\{g_0(O_i)\}+o_{\Pp}(1).
\end{split}\end{align}

\subsection{Sample splitting}\label{sample-splitting}

This subsection provides a simple $K$th-order U-statistics sample splitting scheme. 
This scheme is not balanced. 
Let $\left(O_1,\ldots,O_n\right)$ be the sample in its original index order, with $[n]=\{1,\ldots,n\}$. 
Choose $2 \leq K < T \leq n$. 
First suppose $n=Tm$ for an integer $m$ and partition the indices into ordered intervals
\begin{align}\begin{split}\nonumber
G_s \coloneqq (b_s, \ldots, b_s+m-1), \quad b_s \coloneqq 1+(s-1)m, \quad s=1,\ldots,T.
\end{split}\end{align}
An upper-triangular cell has block labels $\vs=(s_1, \ldots, s_K)$ with $s_1 \leq \cdots \leq s_K$. Its starts, lengths, and endpoints are
\begin{align}\begin{split}\nonumber
\va \coloneqq (b_{s_r})_{r=1}^K, \quad
\vl \coloneqq (|G_{s_r}|)_{r=1}^K, \quad e_r \coloneqq a_r+\ell_r-1.
\end{split}\end{align}
Its test tuples, occupied union, and training complement are
\begin{align}
\mathcal{T}(\va,\vl) & \coloneqq \{(i_1,\ldots,i_K):a_r\le i_r\le e_r,\ i_1<\cdots<i_K\},\label{eq:cell}\\
\mathcal{V}(\va,\vl) & \coloneqq \bigcup_{r=1}^K\{a_r,\ldots,e_r\}, \quad \mathcal{R}(\va,\vl) \coloneqq [n]\setminus V(\va,\vl).\label{eq:train}
\end{align}

Algorithm \ref{alg:loops} below first selects a cell, then collects its increasing index tuples. 
The outer for-loops permit repeated block labels, while the inner for-loops exclude repeated sample indices. 
Each layer has $K$ nested loops. 
For $K=2$, omit the intermediate loops. 
Write $\vi \coloneqq (i_{1}, \ldots, i_{K})$. 
For each cell, use $(O_{\vi})$, $\vi \in \mathcal{T}$ for testing and $(O_{\vi})$, $\vi \in \mathcal{R}$ for training. 

Generally, write $n=Tm+q$, with $m=\lfloor n/T\rfloor$ and $0 \leq q<T$. 
Use
\begin{align}\begin{split}\nonumber
m_s=m+\mathbf1\{s\le q\},\quad
b_s=1+(s-1)m+\min(s-1,q),\quad
G_s=(b_s, \ldots, b_s+m_s-1).
\end{split}\end{align}
The Algorithm \ref{alg:loops} keeps unchanged. 

\begin{algorithm}[htbp]
\caption{For-loops version sample splitting scheme}\label{alg:loops}
\small
\begin{algorithmic}[1]
\Require Ordered intervals $G_1,\ldots,G_T$ with starts $b_s$, kernel order $K$.
\For{$s_1=1,\ldots,T$} 
 \Statex $\vdots$\quad $\mathbf{for}\ s_r=s_{r-1},\ldots,T$, nested for $r=2,\ldots,K-1$
 \For{$s_K=s_{K-1},\ldots,T$}
  \State Set $\va \gets (b_{s_r})_{r=1}^K$, $\vl\gets(|G_{s_r}|)_{r=1}^K$, and $e_r\gets a_r+\ell_r-1$.
  \State Fix $\mathcal{R} \gets [n]\setminus\bigcup_{r=1}^K G_{s_r}$, and initialize $\mathcal{T} \gets [\,]$.
  \For{$i_1=a_1,\ldots,e_1$} 
   \Statex \hspace{3em}$\vdots$\quad $\mathbf{for}\ i_r=\max\{a_r, i_{r-1}+1\}, \ldots, e_r$, nested for $r=2, \ldots, K-1$
   \For{$i_K=\max\{a_K,i_{K-1}+1\},\ldots,e_K$}
    \State Append $(i_1, \ldots, i_K)$ to $\mathcal{T}$.
   \EndFor
   \Statex \hspace{3em}$\vdots$\quad close the intermediate index loops
  \EndFor
  \If{$\mathcal{T} \ne [\,]$}
   \State Output this cell's test indices $\mathcal{T}$ and training indices $\mathcal{R}$.
  \EndIf
 \EndFor
 \Statex $\vdots$\quad close the intermediate cell loops
\EndFor
\end{algorithmic}
\end{algorithm}

Algorithm \ref{alg:cell} replaces the outer and inner $K$ nested for-loops by enumerations. 
It visits exactly the same nonempty cells and test tuples, in the same order, with the same training complements. 
For each cell and its parameter $\va = (a_{1}, \ldots, a_{K})$, precompute the coordinate bounds
\begin{align}\begin{split}\label{eq:bounds}
v_r=\max_{1\le q\le r}\{a_q+r-q\}, \quad
u_r=\min_{r\le q\le K}\{e_q-(q-r)\}, \quad 
r = 1, \ldots, K.
\end{split}\end{align}
Alternatively, we can also compute recursively by 
\begin{align}\begin{split}\nonumber
v_1 & = a_1,\quad v_r=\max\{a_r, v_{r-1}+1\} \\ 
u_K & = e_K,\quad u_r=\min\{e_r, u_{r+1}-1\}.
\end{split}\end{align}

Here $\vv=(v_r)_{r=1}^K$ and $\vu=(u_r)_{r=1}^K$ account for strict increase. 
The cell is empty if some $v_r>u_r$, otherwise $\vv$ and $\vu$ are its first and last tuples.
\begin{algorithm}[htbp]
\caption{Successor-rule version sample splitting scheme}\label{alg:cell}
\small
\begin{algorithmic}[1]
\Require Ordered intervals $G_1,\ldots,G_T$ with starts $b_s$, kernel order $K$.
\State Initialize $\vs\gets(1, \ldots, 1)$.
\While{true} 
 \State Set $\va\gets(b_{s_r})_{r=1}^K$, $\vl\gets(|G_{s_r}|)_{r=1}^K$, and compute $\vv,\vu$ by \eqref{eq:bounds}.
 \If{$v_r\le u_r$ for every $r$}
  \State Fix $\mathcal R\gets[n]\setminus\bigcup_{r=1}^K G_{s_r}$, set $\mathcal T\gets[\,]$ and $\vi\gets\vv$.
  \While{true} 
   \State Append $\vi$ to $\mathcal T$.
   \State \textbf{if} $\vi=\vu$, \textbf{break} the inner loop.
   \State Set $j\gets\max\{r:i_r<u_r\}$, then $c\gets i_j+1$.
   \State Replace $(i_j,\ldots,i_K)$ by $(\max\{v_r,c+r-j\})_{r=j}^K$.
  \EndWhile
  \State Output this cell's test indices $\mathcal{T}$ and training indices $\mathcal{R}$.
 \EndIf
 \State \textbf{if} $\vs=(T,\ldots,T)$, \textbf{stop}.
 \State Set $j\gets\max\{r:s_r<T\}$, then $c\gets s_j+1$.
 \State Replace $(s_j,\ldots,s_K)$ by $(c,\ldots,c)$.
\EndWhile
\end{algorithmic}
\end{algorithm}

In the inner successor loop, $j$ is the rightmost coordinate that can increase and $c=i_j+1$. 
For $r\ge j$, the new value must be at least $c+r-j$ to preserve strict increase and at least $v_r$ to respect the cell. 
Their maximum gives the smallest admissible suffix. 
In the outer loop, the rightmost coordinate is computed in the same way as above. 
The suffix is then replaced accordingly keeping the nondecreasing order cell index.

Using the notations adapted in this subsection, we obtain an efficient cross validation VUS estimator by taking a weighted average over individual estimates defined on each fold of testing cells:
\begin{align}\begin{split}\nonumber
  \widehat{\theta}_{cv, n} \coloneqq \frac{1}{\sum_{(\va, \vl) \in \mathcal{F}} 
\vert\mathcal T(\va,\vl)\vert
  }\sum_{(\va, \vl) \in \mathcal{F}} \widehat{\theta}_{(\va, \vl)}\vert\mathcal T(\va,\vl)\vert, 
\end{split}\end{align}
where $\mathcal{F}$ consists of $(\va, \vl)$ such that $\mathcal{T}(\va, \vl) \neq \emptyset$ and $\mathcal{R}(\va, \vl) \neq \emptyset$, and 
$\widehat{\theta}_{(\va, \vl)}$ is calculated by \eqref{basic sample splitting sample VUS} with test set $\mathcal{T}(\va, \vl)$ and training set $\mathcal{R}(\va, \vl)$,  
where the sample size $n$ in \eqref{basic sample splitting sample VUS}  is now replaced by the size of each test cell $\vert\mathcal T(\va,\vl)\vert$.
By following the same arguments as in \cite{chernozhukov2018double}, it is not hard to see that all properties derived in 
section \ref{sample VUS aymptotic properties} apply to $\widehat{\theta}_{cv, n}$ with a larger effective sample size $n$. Therefore, Theorem \ref{thm:clt} holds 
with $\widehat{\theta}_{cv, n}$ replacing $\widehat\theta_n$. 
Taking weighted average over cell estimates is needed to account for the different
number of $K$-tuples over different cells. 

As also suggested in \cite{chernozhukov2018double}, instead of averaging individual estimates defined over each 
fold, an alternative double/debiased cross-validation estimator can be obtained by constructing a single moment condition
throughout the $K$th-order U-statistic using the entire sample. In particular, analogous to \eqref{basic sample splitting sample VUS}, we can define the double/debiased cross-validation estimator as
\begin{align}\begin{split}\nonumber
\Un h_{\widehat p} - \widehat\theta_{cv,n} \Un g = 0, \quad \text{or} \quad
\widehat\theta_{cv,n} = \frac{\Un h_{\widehat p}}{\Un g},
\end{split}\end{align}
where we now define instead
\begin{align}\begin{split}\nonumber
\Un h_{\widehat p}
& \coloneqq \binom{n}{K}^{-1}
\sum_{(\va, \vl)\in \mathcal F} \sum_{(i_1,\ldots,i_K)\in \mathcal T(\va, \vl)}
  h_{\widehat p_{\mathcal R(\va, \vl)}}(O_{i_1},\ldots,O_{i_K}),\\
\Un g
& \coloneqq \binom{n}{K}^{-1}
  \sum_{1\leq i_1<\cdots<i_K\leq n}
  g(O_{i_1},\ldots,O_{i_K}).
\end{split}\end{align}
In the above, $\mathcal R(\va, \vl)$ denotes the training set corresponding to the fold $(\va, \vl)$, and $\mathcal T(\va, \vl)$ denotes the test set corresponding to the same fold. The function $h_{\widehat p_{\mathcal R(\va, \vl)}}$ is evaluated using the estimator $\widehat p$ trained on $\mathcal R(\va, \vl)$.

\begin{remark}
Note that in Algorithm \ref{alg:loops} and Algorithm \ref{alg:cell}, the training step depends only 
on $\mathcal{R} \gets [n]\setminus\bigcup_{r=1}^K G_{s_r}$. It is possible that different cell interval collections 
of $\{G_{s_r}, r=1,\ldots,K\}$ may lead to the same union $\bigcup_{r=1}^K G_{s_r}$ and the same
training set $\mathcal{R}$. Computation time can be reduced by avoiding redundant training on the same $\mathcal{R}$. This can be implemented by storing previously trained models, and checking against them.   A previously trained model can be
retrieved from the storage if the same training set $\mathcal{R}$ has been encountered before. Otherwise a newly trained model is stored for future reference.
\end{remark}

\section{Further Applications}\label{more examples section}

\subsection{Aumann expectation of random segments}\label{subsec:zonoid-determinants}
Suppose $X\sim P$ and $a=(a_1,\ldots,a_K)^\top:\mathcal X\to\R^K$ is measurable with $\E_P[\|a(X)\|]<\infty$.
Define the random segment
\begin{align}\begin{split}\nonumber
D_a(X)\coloneqq[0,a(X)]=\{t a(X):0\leq t\leq1\}.
\end{split}\end{align}
Let $X_1,\ldots,X_K$ be independent copies of $X$.
\citet[Theorem 4.1]{vitale1991expectation} and \citet[Theorem 3.2]{vitale1991determinants} establish the following identity:
\begin{align}\begin{split}\label{eq:zonoid-preview}
\Vol_K\bigl(\E_A[D_a(X)]\bigr)
=\frac{1}{K!}\E_{P^K}\left[
\left|\det\bigl(a(X_1),\ldots,a(X_K)\bigr)\right|
\right].
\end{split}\end{align}
On the left, $\E_A[D_a(X)]$ is the Aumann expectation of the random segment $D_a(X)$, as defined in Definition~\ref{def:aumann-expectation}.
On the right, the matrix $\bigl(a_k(X_r)\bigr)_{k,r=1}^K$ has columns $a(X_1),\ldots,a(X_K)$, which are independent copies of $a(X)$.
The absolute determinant is the $K$-dimensional volume of the parallelepiped generated by these columns.
Equation \eqref{eq:zonoid-preview} is the segment counterpart of the ROC volume formula \eqref{eq:main-volume-formula} in Theorem~\ref{thm:volume-formula}. 

\begin{definition}[Zonotope and zonoid]
A set $Z \subseteq \mathbb{R}^d$ is called a \emph{zonotope}
if it is a finite Minkowski sum of line segments. Equivalently,
there exist $m \geq 1$ and vectors $c,v_1,\ldots,v_m \in \mathbb{R}^d$
such that
\begin{align}\begin{split}\nonumber
Z = c + \sum_{i=1}^{m}[-v_i,v_i]
  = \left\{
      c + \sum_{i=1}^{m} t_i v_i :
      t_i \in [-1,1],\ i=1,\ldots,m
    \right\}.
\end{split}\end{align}
A nonempty compact convex set $K \subseteq \mathbb{R}^d$
is called a \emph{zonoid} if there exists a sequence of zonotopes
$(Z_n)_{n \geq 1}$ such that $d_H(Z_n,K) \to 0$. 
\end{definition}

\citet[Theorem 3.2, p. 295]{vitale1991determinants} derives this identity from the zonotope volume formula and strong laws for random sets and $U$-statistics.
We next prove this identity using the Aumann-expectation and mixed-volume method developed in Section~\ref{sec:roc-geometric-representation}.

Here the set whose volume we seek is already given as $\E_A[D_a(X)]$.
We first establish the selection characterization for segments, following Lemma~\ref{lem:coordinate-simplex-selections}, and a stability bound using the paired selections from the proof of Lemma~\ref{lem:finite-simplex-approximation}.
The allocation rule $\phi=(\phi_1,\phi_2)$ assigns probabilities to the endpoints $0$ and $a(x)$.

\begin{lemma}
\label{lem:zonoid-aumann}
The integrable measurable selections of $x\mapsto D_a(x)$ are precisely the functions $f_\phi(x)=\phi_2(x)a(x)$, up to $P$-almost-sure equality, where $\phi:\mathcal X\to\Delta^1$ is a measurable allocation rule.
The Aumann expectation $\E_A[D_a(X)]$ is compact and convex, and
\begin{align}\begin{split}\label{eq:zonoid-aumann}
\E_A[D_a(X)]
&=\int_{\mathcal X}D_a(x)\dd P(x)\\
&=\left\{\E_P[\phi_2(X)a(X)]:\phi:\mathcal X\to\Delta^1
\text{ is measurable}\right\}.
\end{split}\end{align}
For any measurable map $b:\mathcal X\to\R^K$ with $\E_P[\|b(X)\|]<\infty$, write $D_b(x)=[0,b(x)]$. Then
\begin{align}\begin{split}\label{eq:zonoid-stability}
d_H(\E_A[D_a(X)],\E_A[D_b(X)])
\leq\E_P[\|a(X)-b(X)\|].
\end{split}\end{align}
\end{lemma}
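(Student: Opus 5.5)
The plan mirrors the coordinate-simplex arguments in Lemmas \ref{lem:random-simplex-regularity}, \ref{lem:coordinate-simplex-selections} and \ref{lem:finite-simplex-approximation}, with the segment $[0,a(x)]$ in place of $D(x)$.

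\emph{Selections.} First, for any measurable $\phi:\mathcal X\to\Delta^1$, the function $f_\phi=\phi_2a$ is measurable. It satisfies $f_\phi(x)\in D_a(x)$ and $\|f_\phi\|\leq\|a\|$, so it is integrable. Conversely, let $f$ be an integrable measurable selection, redefined to be $0$ on the null set where $f(x)\notin D_a(x)$. On $\{a(x)\neq0\}$ write $f(x)=t(x)a(x)$ with $t(x)\in[0,1]$. The coefficient has the measurable formula $t(x)=\langle f(x),a(x)\rangle/\|a(x)\|^2$. On $\{a(x)=0\}$ set $t(x)=0$. Then $\phi=(1-t,t)$ is a measurable allocation rule with $f=\phi_2a$.

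\emph{Regularity.} Measurability of $x\mapsto D_a(x)$ follows as in Lemma \ref{lem:random-simplex-regularity}. We have $\{x:D_a(x)\cap O\neq\varnothing\}=\bigcup_{t\in[0,1]\cap\mathbb Q}\{x:ta(x)\in O\}$, using continuity in $t$ and openness of $O$. Integrable boundedness follows from $\sup_{y\in D_a(x)}\|y\|=\|a(x)\|$ and $\E_P\|a(X)\|<\infty$. The values are nonempty, compact and convex. Theorem \ref{thm:aumann-compactness} then gives compactness and convexity, together with the first equality in \eqref{eq:zonoid-aumann}. The second equality is the selection characterization just proved, combined with Definition \ref{def:aumann-expectation}.

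\emph{Stability.} This is the paired-selection argument from the proof of Lemma \ref{lem:finite-simplex-approximation}. First apply the two previous steps to $b$, which is allowed because $\E_P\|b(X)\|<\infty$; so every point of $\E_A[D_b(X)]$ has the form $\E_P[\phi_2(X)b(X)]$. Take $u=\E_P[\phi_2(X)a(X)]\in\E_A[D_a(X)]$ and pair it with $u'=\E_P[\phi_2(X)b(X)]\in\E_A[D_b(X)]$, built from the same rule $\phi$. Jensen's inequality for the norm and $0\leq\phi_2\leq1$ give
\[
\|u-u'\|\leq\E_P[\phi_2(X)\|a(X)-b(X)\|]\leq\E_P\|a(X)-b(X)\|.
\]
The pairing is symmetric because the same $\phi$ parametrizes both sets. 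So both one-sided suprema in Definition \ref{def:hausdorff-distance} are bounded by $\E_P\|a-b\|$, which proves \eqref{eq:zonoid-stability}.

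\emph{Main obstacle.} The only delicate point is the converse selection step. We must produce $t(x)$ measurably and handle the degenerate set $\{a=0\}$ and the exceptional null set, and both are resolved by the explicit inner-product formula above. Everything else is routine transcription of the earlier arguments.
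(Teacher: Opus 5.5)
Your proposal is correct and follows the paper's proof essentially step for step. It uses the same inner-product formula $\phi_2=\langle f,a\rangle/\|a\|^2$ (set to zero on $\{a=0\}$) for the converse selection, the same appeal to Theorem \ref{thm:aumann-compactness}, and the same paired-rule argument for the Hausdorff bound. The only cosmetic difference is that you check measurability of $x\mapsto D_a(x)$ through a countable union over rational $t$, while the paper derives it from the Hausdorff continuity $d_H([0,v],[0,w])\leq\|v-w\|$; both work.
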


\begin{proof}\ 
The map $v\mapsto[0,v]$ is continuous in Hausdorff distance, since
$d_H([0,v],[0,w])\leq\|v-w\|$.
Thus $D_a$ is measurable, and it is integrably bounded because
$\sup_{z\in D_a(x)}\|z\|=\|a(x)\|$.
Its values are nonempty, compact, and convex.
Every such allocation rule defines a measurable selection $f_\phi=\phi_2a$, and $\|f_\phi(x)\|\leq\|a(x)\|$ makes it integrable.
Conversely, an integrable measurable selection $f(x)\in[0,a(x)]$ has the measurable representation
\begin{align}\begin{split}\nonumber
\phi_2(x)=
\begin{cases}
\langle f(x),a(x)\rangle/\|a(x)\|^2,&a(x)\neq0,\\
0,&a(x)=0,
\end{cases}
\quad f(x)=\phi_2(x)a(x),\quad0\leq\phi_2(x)\leq1,
\end{split}\end{align}
after assigning zero values on any exceptional null set.
Setting $\phi_1=1-\phi_2$ gives the corresponding measurable allocation rule $\phi=(\phi_1,\phi_2)$.
Definition \ref{def:aumann-expectation} and Theorem \ref{thm:aumann-compactness} give \eqref{eq:zonoid-aumann}, compactness, and convexity.
For stability, pairing the same allocation rule for $a$ and $b$ gives
\begin{align}\begin{split}\nonumber
\left\|\E_P[\phi_2(X)a(X)]-\E_P[\phi_2(X)b(X)]\right\|
\leq\E_P[\|a(X)-b(X)\|].
\end{split}\end{align}
Every point in either Aumann expectation has such a paired point in the other, which proves the Hausdorff bound.
\end{proof}

Using the selection representation in Lemma~\ref{lem:zonoid-aumann}, we next compute the volume of $\E_A[D_a(X)]$ when $a(X)$ takes finitely many values.
As in Lemma~\ref{lem:finite-coordinate-simplices}, we write this expectation as a weighted Minkowski sum and apply the volume polynomial in Theorem~\ref{thm:minkowski-volume} to express its volume as an expected mixed volume.

\begin{lemma}
\label{lem:zonoid-finite}
If $a(X)$ takes the distinct values $a_1,\ldots,a_M$ with probabilities $\omega_1,\ldots,\omega_M>0$.
Then
\begin{align}\begin{split}\label{eq:zonoid-finite-minkowski}
\E_A[D_a(X)]=\sum_{m=1}^M\omega_m[0,a_m],
\end{split}\end{align}
and
\begin{align}\begin{split}\nonumber
\Vol_K(\E_A[D_a(X)])
&=\E_{P^K}[V(D_a(X_1),\ldots,D_a(X_K))].
\end{split}\end{align}
\end{lemma}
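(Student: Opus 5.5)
The plan is to mirror the proof of Lemma \ref{lem:finite-coordinate-simplices}, replacing coordinate simplices by segments and the simplex coefficient description by the allocation-rule representation of Lemma \ref{lem:zonoid-aumann}. Write $E_m\coloneqq\{x:a(x)=a_m\}$. These events are measurable, disjoint, have probabilities $\omega_m$, and cover a set of $P$-measure one; equality of the segments $[0,a_m]$ and $[0,a_{m'}]$ forces $a_m=a_{m'}$, so distinct values of $a$ give distinct segments.

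\emph{Step 1: the inclusion $\subseteq$ in \eqref{eq:zonoid-finite-minkowski}.} By Lemma \ref{lem:zonoid-aumann}, every point of $\E_A[D_a(X)]$ is $\E_P[\phi_2(X)a(X)]$ for a measurable allocation rule $\phi$. The law of total expectation over the events $E_m$ gives
\begin{align}\begin{split}\nonumber
\E_P[\phi_2(X)a(X)]=\sum_{m=1}^M\omega_m\,\E_P[\phi_2(X)\mid E_m]\,a_m .
\end{split}\end{align}
Since $0\le\E_P[\phi_2(X)\mid E_m]\le1$, each summand $\E_P[\phi_2(X)\mid E_m]\,a_m$ lies in $[0,a_m]$, so the point belongs to $\sum_m\omega_m[0,a_m]$.

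\emph{Step 2: the inclusion $\supseteq$.} Take $u=\sum_m\omega_m t_ma_m$ with $t_m\in[0,1]$. Define $\phi_2(x)=t_m$ on $E_m$, set $\phi_2=0$ off $\bigcup_m E_m$, and put $\phi_1=1-\phi_2$. This $\phi$ is a measurable allocation rule and $\E_P[\phi_2(X)a(X)]=u$. Lemma \ref{lem:zonoid-aumann} then places $u$ in $\E_A[D_a(X)]$, which proves \eqref{eq:zonoid-finite-minkowski}.

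\emph{Step 3: the volume.} Apply Theorem \ref{thm:minkowski-volume} to the weighted Minkowski sum in \eqref{eq:zonoid-finite-minkowski} with $\lambda_m=\omega_m$ and $A_m=[0,a_m]$:
\begin{align}\begin{split}\nonumber
\Vol_K(\E_A[D_a(X)])=\sum_{i_1,\ldots,i_K=1}^M\Bigl(\prod_{r=1}^K\omega_{i_r}\Bigr)V([0,a_{i_1}],\ldots,[0,a_{i_K}]).
\end{split}\end{align}
By independence of $X_1,\ldots,X_K$, the product $\prod_r\omega_{i_r}$ equals $\Pp(a(X_r)=a_{i_r}\text{ for all }r)$. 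On that event,
\begin{align}\begin{split}\nonumber
V(D_a(X_1),\ldots,D_a(X_K))=V([0,a_{i_1}],\ldots,[0,a_{i_K}]).
\end{split}\end{align}
Summing over the disjoint events, exactly as in Lemma \ref{lem:finite-coordinate-simplices}, yields $\E_{P^K}[V(D_a(X_1),\ldots,D_a(X_K))]$.

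None of these steps is a real obstacle. The only points to check are the measurability of the piecewise-constant rule $\phi$ and the null set outside $\bigcup_m E_m$; assigning $\phi_2=0$ there contributes nothing to the expectation, so both are handled.
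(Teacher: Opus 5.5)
Your proposal is correct and follows the paper's proof essentially step for step. It conditions on the level sets $\{a(X)=a_m\}$ for the inclusion $\subseteq$, uses the piecewise-constant rule $\phi_2=\sum_m t_m\1\{a(x)=a_m\}$ for $\supseteq$, and then combines the Minkowski volume polynomial with independence to obtain the expected mixed volume.
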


\begin{proof}\ 
Take any $z\in\E_A[D_a(X)]$.
By Lemma~\ref{lem:zonoid-aumann}, there is a measurable allocation rule $\phi$ such that $z=\E_P[\phi_2(X)a(X)]$.
For this rule, put 
\begin{align}\begin{split}\nonumber
t_m=\E_P[\phi_2(X)\mid a(X)=a_m]\in[0,1]. 
\end{split}\end{align}
Conditioning on the values of $a(X)$ gives
\begin{align}\begin{split}\nonumber
z=\E_P[\phi_2(X)a(X)]=\sum_{m=1}^M\omega_m t_m a_m
\in\sum_{m=1}^M\omega_m[0,a_m].
\end{split}\end{align}
Conversely, take any $z\in\sum_{m=1}^M\omega_m[0,a_m]$.
There exist $t_1,\ldots,t_M\in[0,1]$ such that
\begin{align}\begin{split}\nonumber
z=\sum_{m=1}^M\omega_m t_m a_m.
\end{split}\end{align}
Define
\begin{align}\begin{split}\nonumber
\phi_2(x)=\sum_{m=1}^M t_m\1\{a(x)=a_m\},
\quad \phi_1(x)=1-\phi_2(x).
\end{split}\end{align}
The events $\{a(x)=a_m\}$ are measurable and disjoint, so $\phi=(\phi_1,\phi_2)$ is a measurable allocation rule.
Its expected selection satisfies
\begin{align}\begin{split}\nonumber
\E_P[\phi_2(X)a(X)]
=\sum_{m=1}^M\omega_m t_m a_m=z.
\end{split}\end{align}
Thus $z\in\E_A[D_a(X)]$ by Lemma~\ref{lem:zonoid-aumann}.
This proves the reverse inclusion and hence \eqref{eq:zonoid-finite-minkowski}. 
Taking volumes in \eqref{eq:zonoid-finite-minkowski} gives 
\begin{align}\begin{split}\nonumber
\Vol_K(\E_A[D_a(X)])
&=\Vol_K\left(\sum_{m=1}^M\omega_m[0,a_m]\right)\\
&=\sum_{i_1,\ldots,i_K=1}^M
\left(\prod_{r=1}^K\omega_{i_r}\right)
V([0,a_{i_1}],\ldots,[0,a_{i_K}])\\
&=\E_{P^K}[V(D_a(X_1),\ldots,D_a(X_K))].
\end{split}\end{align}
Note that applying the volume polynomial \eqref{eq:mixed-volume-polynomial} to this weighted sum gives the second equality.
Independence makes the probability of each ordered tuple $(a(X_1),\ldots,a(X_K))=(a_{i_1},\ldots,a_{i_K})$ equal to $\prod_r\omega_{i_r}$, giving the third equality.
\end{proof}

We next repeat the approximation step of Lemma~\ref{lem:finite-simplex-approximation}.
Here $a$ may be unbounded and have signed coordinates, so we truncate and round toward zero.
The same Hausdorff-continuity and dominated-convergence argument then applies with an integrable bound.

\begin{lemma}
\label{lem:zonoid-approximation}
There exist finite-valued measurable maps $a^{(n)}:\mathcal X\to\R^K$ such that
\begin{align}\begin{split}\label{eq:zonoid-l1-approximation}
\|a^{(n)}(x)\|\leq\|a(x)\|,
\quad
\E_P[\|a^{(n)}(X)-a(X)\|] \to 0.
\end{split}\end{align}
Writing $D_{a^{(n)}}(x)=[0,a^{(n)}(x)]$, we have
\begin{align}
\Vol_K(\E_A[D_{a^{(n)}}(X)])& \to \Vol_K(\E_A[D_a(X)]),
\label{eq:zonoid-volume-limit}\\
\E_{P^K}[V(D_{a^{(n)}}(X_1),\ldots,D_{a^{(n)}}(X_K))]
& \to \E_{P^K}[V(D_a(X_1),\ldots,D_a(X_K))].
\label{eq:zonoid-mixed-limit}
\end{align}
\end{lemma}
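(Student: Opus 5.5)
We construct $a^{(n)}$ by truncation followed by coordinatewise rounding toward zero, and then repeat the two limit arguments of Lemma~\ref{lem:finite-simplex-approximation}. Concretely, set $\tilde a^{(n)}(x)\coloneqq a(x)\1\{\|a(x)\|\leq n\}$ and, coordinatewise,
\begin{align}\begin{split}\nonumber
a^{(n)}_k(x)\coloneqq \operatorname{sign}\bigl(\tilde a^{(n)}_k(x)\bigr)\,\frac{\lfloor n\lvert\tilde a^{(n)}_k(x)\rvert\rfloor}{n}.
\end{split}\end{align}
Each coordinate takes values in the finite grid $\{j/n:\lvert j\rvert\leq n^2\}$, so $a^{(n)}$ is measurable and finite-valued. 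Rounding toward zero gives $\lvert a^{(n)}_k\rvert\leq\lvert\tilde a^{(n)}_k\rvert\leq\lvert a_k\rvert$, which yields the domination $\|a^{(n)}(x)\|\leq\|a(x)\|$. Moreover,
\begin{align}\begin{split}\nonumber
\|a^{(n)}(x)-a(x)\|\leq \frac{K}{n}+\|a(x)\|\1\{\|a(x)\|>n\},
\end{split}\end{align}
so $a^{(n)}\to a$ pointwise. Dominated convergence with dominating function $2\|a(X)\|$, which is integrable, gives the $L^1$ statement in \eqref{eq:zonoid-l1-approximation}.

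For \eqref{eq:zonoid-volume-limit}, we apply the stability bound \eqref{eq:zonoid-stability} of Lemma~\ref{lem:zonoid-aumann} with $b=a^{(n)}$. This gives $d_H(\E_A[D_{a^{(n)}}(X)],\E_A[D_a(X)])\leq\E_P\|a^{(n)}(X)-a(X)\|\to0$. Both sets are compact and convex by that same lemma, so Proposition~\ref{prop:hausdorff-continuity} gives convergence of the volumes.

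For \eqref{eq:zonoid-mixed-limit}, note that $d_H([0,v],[0,w])\leq\|v-w\|$. Hence $d_H(D_{a^{(n)}}(x),D_a(x))\to0$ for every $x$, and Proposition~\ref{prop:mixed-volume-continuity} gives pointwise convergence of $V(D_{a^{(n)}}(x_1),\ldots,D_{a^{(n)}}(x_K))$ for every tuple. This also shows that the limit is measurable.

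The main obstacle is domination, because $a$ is unbounded and so no fixed containing body is available as in \eqref{eq:mixed-volume-dominating-bound}. We will instead use $[0,v]\subseteq\|v\|\,\mathbb B$, where $\mathbb B$ is the closed unit ball. Monotonicity \eqref{eq:mixed-monotonicity}, multilinearity (which gives homogeneity) and \eqref{eq:mixed-normalization} then yield
\begin{align}\begin{split}\nonumber
0\leq V(D_{a^{(n)}}(x_1),\ldots,D_{a^{(n)}}(x_K))\leq \Vol_K(\mathbb B)\prod_{r=1}^K\|a^{(n)}(x_r)\|\leq \Vol_K(\mathbb B)\prod_{r=1}^K\|a(x_r)\|.
\end{split}\end{align}
By independence, the right-hand side has $P^K$-expectation $\Vol_K(\mathbb B)\,(\E_P\|a(X)\|)^K<\infty$. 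This product-form integrable envelope is exactly what is needed, and it is why the truncation is designed to preserve $\|a^{(n)}\|\leq\|a\|$. Dominated convergence then gives \eqref{eq:zonoid-mixed-limit}.
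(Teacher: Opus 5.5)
Your proposal is correct and follows the paper's proof essentially step by step. Both arguments round toward zero on a finite grid after truncation, combine the stability bound \eqref{eq:zonoid-stability} with Proposition~\ref{prop:hausdorff-continuity} for the Aumann volumes, and obtain the mixed-volume limit from pointwise continuity plus the product envelope $\Vol_K(\mathbb B)\prod_r\|a(x_r)\|$, which independence makes integrable. The only cosmetic difference is the truncation: you set $a$ to zero on $\{\|a\|>n\}$ before rounding, whereas the paper clips each coordinate via $\min\{|a_k|,n\}$; both choices give $\|a^{(n)}\|\leq\|a\|$ and the same grid.
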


\begin{proof} 
Truncate each coordinate and round toward zero: for $k=1,\ldots,K$, set
\begin{align}\begin{split}\nonumber
a_k^{(n)}(x)
\coloneqq\frac{\operatorname{sgn}(a_k(x))}{n}
\left\lfloor n\min\{|a_k(x)|,n\}\right\rfloor.
\end{split}\end{align}
Each coordinate belongs to the finite grid $\{j/n:j=-n^2,\ldots,n^2\}$.
Moreover, $a^{(n)}(x)\to a(x)$ and $|a_k^{(n)}(x)|\leq|a_k(x)|$, with the same sign whenever nonzero.
Thus $\|a^{(n)}(x)\|\leq\|a(x)\|$ and $\|a^{(n)}(x)-a(x)\|\leq\|a(x)\|$.
Dominated convergence proves \eqref{eq:zonoid-l1-approximation}.
Taking $b=a^{(n)}$ in the stability bound \eqref{eq:zonoid-stability} gives
\begin{align}\begin{split}\nonumber
d_H(\E_A[D_{a^{(n)}}(X)],\E_A[D_a(X)])
\leq\E_P[\|a^{(n)}(X)-a(X)\|] \to 0,
\end{split}\end{align}
so Proposition \ref{prop:hausdorff-continuity} proves \eqref{eq:zonoid-volume-limit}.

For each $x$, 
\begin{align}\begin{split}\nonumber
d_H(D_{a^{(n)}}(x),D_a(x))\leq\|a^{(n)}(x)-a(x)\| \to 0. 
\end{split}\end{align}
Proposition \ref{prop:mixed-volume-continuity} therefore gives pointwise convergence of the mixed volumes in \eqref{eq:zonoid-mixed-limit}.
The approximating mixed volumes are measurable and finite-valued, so their pointwise limit is measurable.
Let $B_K$ be the closed Euclidean unit ball in $\R^K$.
Both $D_{a^{(n)}}(x)$ and $D_a(x)$ lie in $\|a(x)\|B_K$.
Monotonicity, homogeneity, and normalization of mixed volume give the common bound
\begin{align}\begin{split}\label{eq:zonoid-mixed-bound}
0\leq V(D_{a^{(n)}}(X_1),\ldots,D_{a^{(n)}}(X_K))
\leq\Vol_K(B_K)\prod_{r=1}^K\|a(X_r)\|.
\end{split}\end{align}
Independence and $\E_P[\|a(X)\|]<\infty$ imply
\begin{align}\begin{split}\nonumber
\E_{P^K}\left[\prod_{r=1}^K\|a(X_r)\|\right]
=\left(\E_P[\|a(X)\|]\right)^K<\infty.
\end{split}\end{align}
Dominated convergence now proves \eqref{eq:zonoid-mixed-limit}.
\end{proof}

Combining Lemmas~\ref{lem:zonoid-finite} and \ref{lem:zonoid-approximation} now repeats the limiting step in the proof of Theorem~\ref{prop:polarization}:
\begin{align}\begin{split}\label{eq:zonoid-expected-mixed}
\Vol_K\bigl(\E_A[D_a(X)]\bigr)
&=\lim_{n\to\infty}\Vol_K\bigl(\E_A[D_{a^{(n)}}(X)]\bigr)\\
&=\lim_{n\to\infty}\E_{P^K}\!\left[V(D_{a^{(n)}}(X_1),\ldots,D_{a^{(n)}}(X_K))\right]\\
&=\E_{P^K}\!\left[V(D_a(X_1),\ldots,D_a(X_K))\right].
\end{split}\end{align}
Thus the volume calculation has reached the same expected-mixed-volume representation as in Section~\ref{sec:roc-geometric-representation}.
Each approximating set $\E_A[D_{a^{(n)}}(X)]$ is a finite Minkowski sum of segments, or a zonotope, so its Hausdorff limit $\E_A[D_a(X)]$ is a zonoid.

To evaluate the integrand in \eqref{eq:zonoid-expected-mixed}, fix $a_1,\ldots,a_K\in\R^K$.
Every Minkowski sum of fewer than $K$ of the segments $[0,a_r]$ has zero $K$-dimensional volume.
Consequently, only the term $I=\{1,\ldots,K\}$ remains in the polarization formula \eqref{eq:polytopal-mixed-volume} of Definition~\ref{def:mixed-volume}, giving
\begin{align}\begin{split}\label{eq:zonoid-segment-mixed}
V([0,a_1],\ldots,[0,a_K])
&=\frac{1}{K!}\Vol_K\left(\sum_{r=1}^K[0,a_r]\right)\\
&=\frac{1}{K!}|\det(a_1,\ldots,a_K)|.
\end{split}\end{align}
For the last equality, write $A=(a_1,\ldots,a_K)$ with these vectors as its columns.
The full sum is $A[0,1]^K$, whose volume is $|\det A|$ by the linear change-of-variables formula.\footnote{
For an invertible $K\times K$ matrix $A$ and a measurable set $B\subseteq\R^K$, the formula gives $\Vol_K(AB)=\int_B|\det A|\dd x=|\det A|\Vol_K(B)$, where $AB=\{Ax:x\in B\}$.
Taking $B=[0,1]^K$, whose volume is one, yields the stated equality.
If $A$ is singular, $A[0,1]^K$ lies in a proper linear subspace, so both its $K$-dimensional volume and $|\det A|$ are zero.}
Applying \eqref{eq:zonoid-segment-mixed} to $a(X_1),\ldots,a(X_K)$ and taking expectations gives
\begin{align}\begin{split}\nonumber
\E_{P^K}\!\left[V(D_a(X_1),\ldots,D_a(X_K))\right]
=\frac{1}{K!}\E_{P^K}\!\left[|\det(a(X_1),\ldots,a(X_K))|\right].
\end{split}\end{align}
The bound in \eqref{eq:zonoid-mixed-bound} ensures that these expectations are finite.
Together with \eqref{eq:zonoid-expected-mixed}, this gives the following proposition.

\begin{proposition}
\label{prop:zonoid-vitale}
Under the assumptions above,
\begin{align}\begin{split}\label{eq:zonoid-vitale}
\Vol_K\bigl(\E_A[D_a(X)]\bigr)
&=\E_{P^K}[V(D_a(X_1),\ldots,D_a(X_K))]\\
&=\frac{1}{K!}\E_{P^K}\left[|\det(a(X_1),\ldots,a(X_K))|\right].
\end{split}\end{align}
All expectations are finite, including when $\E_A[D_a(X)]$ has zero $K$-dimensional volume.
\end{proposition}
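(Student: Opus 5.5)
The proof will assemble the pieces established just before the statement; no new estimate is needed.

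For the first equality in \eqref{eq:zonoid-vitale}, we follow the limiting chain \eqref{eq:zonoid-expected-mixed}. Lemma~\ref{lem:zonoid-approximation} supplies finite-valued maps $a^{(n)}$ with $\E_P[\|a^{(n)}(X)-a(X)\|]\to0$. For each $n$, Lemma~\ref{lem:zonoid-finite} applies to $a^{(n)}$, since it takes finitely many values. After discarding values of probability zero, it gives
\begin{align}\begin{split}\nonumber
\Vol_K(\E_A[D_{a^{(n)}}(X)])=\E_{P^K}[V(D_{a^{(n)}}(X_1),\ldots,D_{a^{(n)}}(X_K))].
\end{split}\end{align}
Passing to the limit on both sides with \eqref{eq:zonoid-volume-limit} and \eqref{eq:zonoid-mixed-limit} yields the first equality.

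For the second equality, we evaluate the mixed volume pointwise using \eqref{eq:zonoid-segment-mixed}. Fix $a_1,\ldots,a_K$. For any proper nonempty $I\subsetneq\{1,\ldots,K\}$, the sum $\sum_{r\in I}[0,a_r]$ lies in the linear span of at most $K-1$ vectors, so its $K$-dimensional volume is zero. The polarization formula \eqref{eq:polytopal-mixed-volume} therefore reduces to $\frac1{K!}\Vol_K(A[0,1]^K)$, where $A=(a_1,\ldots,a_K)$. By the change of variables formula this equals $\frac1{K!}|\det A|$, and the singular case gives zero on both sides. Substituting $a_r=a(X_r)$ and taking expectations gives the determinant expression. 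Measurability is clear because $\det$ is a polynomial in measurable coordinates.

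For finiteness, the domination \eqref{eq:zonoid-mixed-bound} together with independence gives
\begin{align}\begin{split}\nonumber
\E_{P^K}\!\left[|\det(a(X_1),\ldots,a(X_K))|\right]\leq K!\,\Vol_K(B_K)\,\bigl(\E_P[\|a(X)\|]\bigr)^K<\infty.
\end{split}\end{align}
Alternatively, Hadamard's inequality gives $|\det|\leq\prod_r\|a(X_r)\|$ directly.

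When $\E_A[D_a(X)]$ has zero volume, the identity still holds. Proposition~\ref{prop:hausdorff-continuity} covers lower-dimensional convex limits, so the whole chain remains valid and both sides equal zero.

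The main obstacle was the approximation step. It must handle unbounded, signed $a$ while keeping an integrable dominating bound. Lemma~\ref{lem:zonoid-approximation} already resolves this, so the remaining work is bookkeeping.
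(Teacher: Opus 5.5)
Your proposal is correct and follows the paper's own argument essentially step for step. Like the paper, you pass to the limit along the chain \eqref{eq:zonoid-expected-mixed} using Lemmas~\ref{lem:zonoid-finite} and~\ref{lem:zonoid-approximation}, collapse the polarization formula to the single full-index term $|\det|/K!$, and get finiteness from the bound \eqref{eq:zonoid-mixed-bound}; your Hadamard-inequality remark is a slightly more direct alternative for the finiteness step.
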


\citet[p.~877]{koshevoy1996lorenz} use \eqref{eq:zonoid-vitale} to compute Lorenz-zonoid volumes in their study of multivariate inequality.
\citet[Sections 4 and 5.5]{koshevoy1998lift} apply \eqref{eq:zonoid-vitale} to lift zonoids to bound expected random convex hull volumes and compare expected absolute determinants under the lift-zonoid order.

\subsection{Feasible error set in fairness-accuracy algorithm design}
\label{subsec:fa-application}

\citet[Sections 2.2--2.3]{liang2026algorithm} study preferences over the group errors produced by an algorithm.
Here an algorithm is a randomized allocation rule that uses the observed features to assign probabilities to the available decisions, as in Definition~\ref{def:allocation-rule}.
Each rule produces a group-error vector, whose coordinates are the expected losses within the respective groups.
Accuracy concerns reducing these losses, while fairness concerns reducing their disparity across groups.
They define fairness--accuracy (FA) preferences through the following comparison of two algorithms.
If the first has no higher error for any group and no greater disparity than the second, and at least one group error or the disparity is strictly lower, it must be strictly preferred.
The designer chooses a rule whose group-error vector is ranked highest under the given FA preference.

In this part, we consider binary decisions and use the associated losses to characterize the attainable group-loss set and compute its volume. 

Retain $X\sim P$ from Section~\ref{subsec:allocation-roc-body}, and let $Y$ denote the outcome in a measurable space.
With $K=2$ decisions, the allocation rule $\phi=(\phi_1,\phi_2)$ has measurable components satisfying $\phi_1(X),\phi_2(X)\geq0$ and $\phi_1(X)+\phi_2(X)=1$ almost surely.
Here $\phi_j(X)$ is the probability of decision $j\in\{1,2\}$ conditional on $X$.
Let $G\geq2$ denote the number of groups, each with positive population share.
As in Section~\ref{subsec:allocation-roc-body}, for each $g\in[G]$, let $P_g$ denote the feature distribution within group $g$ and define the group density ratio $q_g\coloneqq\dd P_g/\dd P$.
Let $\ell$ be a bounded measurable loss function, with $\ell(j,Y)$ denoting the loss from decision $j$ when the outcome is $Y$.
Write $\ell_g(j,x)$ for the conditional expected loss from decision $j$ for an individual in group $g$ with features $X=x$.\footnote{Set $\ell_g(j,x)=0$ when $q_g(x)=0$, since these feature values do not affect group $g$'s expected loss.}
The expected loss in group $g$ is
\begin{align}\begin{split}
T_g(\phi)
\coloneqq\E_{P_g}\!\left[\phi_1(X)\ell_g(1,X)+\phi_2(X)\ell_g(2,X)\right].
\end{split}\end{align}
The resulting group-error vector is $T(\phi)=(T_1(\phi),\ldots,T_G(\phi))^\top\in\R^G$, and its attainable values form the feasible error set
\begin{align}\begin{split}
\Gamma\coloneqq\{T(\phi):\phi\text{ is an allocation rule}\}.
\end{split}\end{align}

\noindent\textit{Notation.}
In this subsection, $T(\phi)\in\R^G$ records expected losses for $G$ groups, so smaller coordinates mean better accuracy for the corresponding groups.
In Section~\ref{subsec:allocation-roc-body}, $T(\phi)\in\R^K$ records true-positive rates for $K$ classes, so larger coordinates mean better accuracy.
The set $\Gamma$ retains its role as the collection of vectors attained by allocation rules, as in Definition~\ref{def:true-positive-rate-region}.
Here we compute $\Vol_G(\Gamma)$.
The VUS in Definition~\ref{def:roc-body} is $\Vol_K(\cH)$, where $\cH$ is the downward closure of the attainable true-positive-rate region in the nonnegative orthant.



Using the change of measure in \eqref{eq:class-change-of-measure} gives
\begin{align}\begin{split}\label{eq:fa-error-integral}
T_g(\phi)
&=\int\bigl[\phi_1(x)\ell_g(1,x)+\phi_2(x)\ell_g(2,x)\bigr]\dd P_g(x)\\
&=\E_P\!\left[q_g(X)\phi_1(X)\ell_g(1,X)+q_g(X)\phi_2(X)\ell_g(2,X)\right].
\end{split}\end{align}
Intuitively, the rule averages the conditional losses using its decision probabilities.
Collecting these expressions across groups gives the counterpart of \eqref{eq:weighted-allocation-selection}.
For each group, define the loss difference $\gamma_g(x)\coloneqq q_g(x)[\ell_g(1,x)-\ell_g(2,x)]$ and write $\gamma(x)=(\gamma_1(x),\ldots,\gamma_G(x))^\top$.
Let $T_0\coloneqq T(1,0)$ be the error vector from always choosing decision 1.
Substituting $\phi_1=1-\phi_2$ in \eqref{eq:fa-error-integral} and collecting the group coordinates gives
\begin{align}\begin{split}\label{eq:fa-binary-representation}
T_g(\phi)
&=\E_P\!\left[q_g(X)\ell_g(1,X)-\phi_2(X)q_g(X)\{\ell_g(1,X)-\ell_g(2,X)\}\right],\\
T(\phi)&=T_0-\E_P[\phi_2(X)\gamma(X)].
\end{split}\end{align}
For each $x$, varying $\phi_2(x)$ over $[0,1]$ makes $\phi_2(x)\gamma(x)$ range over the segment $[0,\gamma(x)]$.
This is the segment $D_a(x)$ from Section~\ref{subsec:zonoid-determinants}, with $a=\gamma$ and $K=G$.
Equation \eqref{eq:zonoid-aumann} in Lemma~\ref{lem:zonoid-aumann} therefore gives $T_0-\Gamma=\E_A[D_a(X)]=\E_A[[0,\gamma(X)]]$.
Applying \eqref{eq:zonoid-vitale} in Proposition~\ref{prop:zonoid-vitale} yields the following volume formula.

\begin{proposition}
\label{prop:fa-aumann}\label{prop:fa-area}
The feasible error set can be represented as 
\begin{align}\begin{split}\label{eq:fa-aumann}
T_0-\Gamma=\E_A[[0,\gamma(X)]].
\end{split}\end{align}
For independent $X_1,\ldots,X_G\sim P$,
\begin{align}\begin{split}\label{eq:fa-segment-volume}
\Vol_G(\Gamma)
&=\Vol_G\bigl(\E_A[[0,\gamma(X)]]\bigr)\\
&=\E_{P^G}\!\left[V([0,\gamma(X_1)],\ldots,[0,\gamma(X_G)])\right]\\
&=\frac{1}{G!}\E_{P^G}\!\left[
\left|\det\bigl(\gamma(X_1),\ldots,\gamma(X_G)\bigr)\right|\right].
\end{split}\end{align}
All expectations are finite, including when $\Gamma$ has zero $G$-dimensional volume.
\end{proposition}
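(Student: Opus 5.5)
The plan is to reduce Proposition \ref{prop:fa-aumann} to Proposition \ref{prop:zonoid-vitale} with $a=\gamma$ and $K=G$. The main work is checking that $\gamma$ satisfies the standing hypotheses of Section \ref{subsec:zonoid-determinants} and that the affine map $v\mapsto T_0-v$ is compatible with the Aumann representation.

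\emph{Step 1 (integrability of $\gamma$).} Since $\ell$ is bounded, say $|\ell|\leq L$, the conditional losses satisfy $|\ell_g(j,x)|\leq L$. Hence $|\gamma_g(x)|\leq 2Lq_g(x)$. Because $\E_P[q_g(X)]=1$ by the group analogue of \eqref{eq:q-properties}, we get $\E_P[\|\gamma(X)\|]\leq 2L\sum_g\E_P[q_g(X)]=2LG<\infty$. Measurability of $\gamma$ follows from measurability of $q_g$ and of $\ell_g(j,\cdot)$, which are conditional expectations. The convention $\ell_g=0$ on $\{q_g=0\}$ does not affect these bounds. Finiteness of $T_0$ follows in the same way.

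\emph{Step 2 (the representation \eqref{eq:fa-aumann}).} By \eqref{eq:fa-binary-representation}, $T(\phi)=T_0-\E_P[\phi_2(X)\gamma(X)]$ for every allocation rule $\phi$. As $\phi$ ranges over all measurable rules $\mathcal X\to\Delta^1$, the vectors $\E_P[\phi_2(X)\gamma(X)]$ range exactly over $\E_A[[0,\gamma(X)]]$ by \eqref{eq:zonoid-aumann} in Lemma \ref{lem:zonoid-aumann}. Therefore $T_0-\Gamma=\{T_0-T(\phi)\}=\E_A[[0,\gamma(X)]]$. One point needs care: the linearity used in \eqref{eq:fa-binary-representation} requires the integrand $q_g\phi_j\ell_g(j,\cdot)$ to be $P$-integrable, and this follows from Step 1.

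\emph{Step 3 (volume).} The map $v\mapsto T_0-v$ is a translation composed with the reflection $v\mapsto -v$, which has determinant of absolute value one. Lebesgue volume is therefore invariant under this map, and $\Vol_G(\Gamma)=\Vol_G(T_0-\Gamma)=\Vol_G(\E_A[[0,\gamma(X)]])$. Applying \eqref{eq:zonoid-vitale} with $a=\gamma$ then gives the remaining two equalities in \eqref{eq:fa-segment-volume}. The mixed-volume expression is \eqref{eq:zonoid-expected-mixed}, and the determinant expression comes from \eqref{eq:zonoid-segment-mixed}. Finiteness follows from the bound \eqref{eq:zonoid-mixed-bound} together with $\E_P[\|\gamma(X)\|]<\infty$. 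This bound covers the degenerate case in which $\Gamma$ lies in a hyperplane, because then both sides vanish.

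The only step with real content is Step 1, where I must ensure that $\gamma$ is well defined and integrable even though $q_g$ may be unbounded. The mixture identity bounds $q_g$ by the inverse of the group's population share, so this should cause no trouble. Everything else follows directly from the earlier results.
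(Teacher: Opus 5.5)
Your proposal is correct and follows essentially the same route as the paper: rewrite $T_0-\Gamma$ via \eqref{eq:fa-binary-representation}, identify it with $\E_A[[0,\gamma(X)]]$ through Lemma~\ref{lem:zonoid-aumann}, use invariance of volume under translation and reflection, and apply Proposition~\ref{prop:zonoid-vitale} with $a=\gamma$. The only difference is minor. The paper gets integrability by noting that positive group shares make each $q_g$, and hence $\gamma$, bounded, whereas you use $\E_P[q_g(X)]=1$; that already suffices, since Proposition~\ref{prop:zonoid-vitale} only needs $\E_P[\|\gamma(X)\|]<\infty$.
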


\begin{proof}\ 
Equation \eqref{eq:fa-binary-representation} gives
\begin{align}\begin{split}\nonumber
T_0-\Gamma
=\left\{\E_P[\phi_2(X)\gamma(X)]:\phi\text{ is an allocation rule}\right\}.
\end{split}\end{align}
The density ratios $q_g$ are bounded because each group has positive population share, so boundedness of the loss implies boundedness of $\gamma$.
Lemma~\ref{lem:zonoid-aumann} identifies this set with $\E_A[[0,\gamma(X)]]$, proving \eqref{eq:fa-aumann}.
The translation and reflection operations preserve volume, so Proposition \ref{prop:zonoid-vitale}, with $a=\gamma$ in dimension $G$, gives \eqref{eq:fa-segment-volume}.
\end{proof}

\begin{example}[Two groups]
\label{ex:fa-two-group-area}
\leavevmode\par
For two groups, label them $r,b$ and write $\gamma=(\gamma_r,\gamma_b)^\top$.
Taking $G=2$ in \eqref{eq:fa-segment-volume} gives the area
\begin{align}\begin{split}
\Vol_2(\Gamma)
&=\frac12\E_{P^2}\!\left[
\left|\det\begin{pmatrix}
\gamma_r(X_1)&\gamma_r(X_2)\\
\gamma_b(X_1)&\gamma_b(X_2)
\end{pmatrix}\right|\right]\\
&=\frac12\E_{P^2}\!\left[
\left|\gamma_r(X_1)\gamma_b(X_2)-\gamma_b(X_1)\gamma_r(X_2)\right|\right].
\end{split}\end{align}
\end{example}

\subsection{Gini coefficients and Lorenz volumes}
\label{subsec:gini-application}

The Lorenz curve $L(p)$ records the share of total income held by the poorest proportion $p\in[0,1]$ of the population \citep{lorenz1905wealth}.
The Gini coefficient compares the area between this curve and the line of equality with the area below that line \citep{gini1921measurement}.
For income $w$, this gives
\begin{align}\begin{split}\label{eq:gini-lorenz-index}
G(w)
\coloneqq\frac{\int_0^1[p-L(p)]\dd p}{\int_0^1p\dd p}
=1-2\int_0^1L(p)\dd p.
\end{split}\end{align}
The denominator is the area $1/2$ of the triangle below the line of equality, so $G(w)$ is twice the area between that line and the Lorenz curve, as illustrated in Figure~\ref{fig:lorenz-gini-area}.

\begin{figure}[h]
\centering
\begin{tikzpicture}[x=1cm,y=1cm,>=stealth,font=\small]
\begin{scope}[x=5cm,y=5cm]
  \fill[black!7] (0,0)
    -- plot[domain=0:1,samples=81] (\x,{\x*\x})
    -- (1,0) -- cycle;
  \fill[blue!10] (0,0) -- (1,1)
    -- plot[domain=1:0,samples=81] (\x,{\x*\x}) -- cycle;
  \draw[black!25,densely dashed] (0,1) -- (1,1) -- (1,0);
  \draw[black!65,dashed,line width=0.8pt] (0,0) -- (1,1);
  \draw[blue!65!black,very thick]
    plot[domain=0:1,samples=81] (\x,{\x*\x});
  \draw[black!40,densely dashed] (0.60,0) -- (0.60,0.36) -- (0,0.36);
  \fill[blue!75!black] (0.60,0.36) circle (1.5pt);
  \draw[->,black!65] (-0.03,0) -- (1.08,0);
  \draw[->,black!65] (0,-0.03) -- (0,1.10);
  \node[below left] at (0,0) {$0$};
  \node[below] at (1,0) {$1$};
  \node[left] at (0,1) {$1$};
  \node[below] at (0.60,0) {$p$};
  \node[left,text=blue!65!black] at (0,0.36) {$L(p)$};
  \node at (0.52,-0.16) {Population share};
  \node[rotate=90] at (-0.22,0.55) {Income share};
  \node[text=black!65,font=\footnotesize] (equalitylabel) at (0.29,0.88) {Line of equality};
  \draw[->,black!65,thin] (equalitylabel.south) -- (0.60,0.60);
  \node[text=blue!65!black,font=\footnotesize,anchor=west] (lorenzlabel) at (0.75,0.28) {Lorenz curve};
  \draw[->,blue!65!black,thin] (lorenzlabel.north west) -- (0.80,0.64);
\end{scope}
\end{tikzpicture}
\makeatletter
\patchcmd{\@makecaption}{#1: #2}{#1. #2}{}{}
\makeatother
\caption{The Lorenz curve and the Gini coefficient}
\label{fig:lorenz-gini-area}
\begin{minipage}{0.94\linewidth}
\footnotesize
\textit{Note.} The illustrative Lorenz curve is $L(p)=p^2$.
The blue region lies between the curve and the line of equality, while the blue and gray regions together form the triangle below the line of equality.
This triangle has area $1/2$, so the Gini coefficient is twice the blue area.
The full attainable set $\Gamma$, defined below, also includes an equally sized region above the line of equality, and its area equals the Gini coefficient.
\end{minipage}
\end{figure}

\citet{koshevoy1996lorenz} extend this construction to multivariate welfare distributions by recording the population and welfare shares attainable by selecting a common subpopulation.  Other examples of multivariate extensions include \cite{gajdos2005multidimensional}, \cite{arnold2018analytic}, and \cite{grothe2022multivariate}, but none of them provide a general formula for the volume under the Lorenz surface in $K$ dimensions. 
We apply Lemma~\ref{lem:zonoid-aumann} and Proposition~\ref{prop:zonoid-vitale} to compute Lorenz volumes for $K$ welfare variables and recover the ordinary income Gini coefficient when $K=1$.

In this subsection, we derive the multivariate Gini coefficients under a setting closely related to the Inequality of Opportunity (IOp) measuring problem in \cite{escanciano2026debiased} and references therein. 
Retain $X\sim P$ from Section~\ref{subsec:allocation-roc-body}, now describing a randomly drawn individual.
Let $w_k(X)$, $k\in[K]$, be $K\geq1$ nonnegative measurable welfare variables with positive finite means, and write $w_{k,0}(x)\coloneqq w_k(x)/\E_P[w_k(X)]$ for their normalized values.
For example, $w_1(X)$ may denote income, $w_2(X)$ leisure, and $w_3(X)$ health. 
The $w_{0}(X) = (w_{1, 0}(X), \ldots, w_{K, 0}(X))$ can also be interpreted as the $X$ conditional expectation of a vector of welfare measures, i.e., there exists $Y = (Y_{1}, \ldots, Y_{K})$ such that 
\begin{align}\begin{split}\nonumber
w(X) = \left(\frac{\E\left[Y_{1} \vert X\right]}{\E Y_{1}}, \ldots, \frac{\E\left[Y_{K} \vert X\right]}{\E Y_{K}}\right).
\end{split}\end{align} 
Use a binary allocation rule $\phi=(\phi_1,\phi_2)$, where $\phi_2(X)$ is the probability that the individual is selected.
The selected population share and welfare shares are
\begin{align}\begin{split}
T_0(\phi)&\coloneqq\E_P[\phi_2(X)],\\
T_k(\phi)&\coloneqq\E_P[\phi_2(X)w_{k,0}(X)], \quad k\in[K].
\end{split}\end{align}
Collect these shares in $T(\phi)=(T_0(\phi),T_1(\phi),\ldots,T_K(\phi))^\top$.
For jointly feasible welfare shares $t=(t_1,\ldots,t_K)$, we maximize the selected population share subject to the requirement that the selected individuals hold a share $t_k$ of the total amount of $w_k$ for every $k\in[K]$:
\begin{align}\begin{split}
R(t)\coloneqq\max_{\phi}\quad &T_0(\phi)\\
\text{subject to}\quad &T_k(\phi)=t_k, \quad k\in[K],\\
&\phi\text{ is an allocation rule}.
\end{split}\end{align}
Intuitively, we ask how large a fraction of the population can be selected when the same individuals must collectively hold the prescribed shares of all welfare variables, so the answer depends on the joint welfare distribution.

\begin{example}[Joint welfare shares]
\label{ex:gini-joint-shares}
\leavevmode\par
When the two welfare variables are income and leisure, $R(0.20,0.30)$ is the largest fraction of the population that can be selected so that these same individuals hold both $20\%$ of total income and $30\%$ of total leisure, provided these shares are jointly feasible.
\end{example}

Following Definition 2.2 of \cite{koshevoy1996lorenz}, $R$ is the inverse Lorenz function and its graph is the \emph{Lorenz surface}.
The full set of attainable population and welfare shares is
\begin{align}\begin{split}\label{eq:gini-share-vector}
\Gamma\coloneqq\{T(\phi):\phi\text{ is an allocation rule}\}\subseteq[0,1]^{K+1}.
\end{split}\end{align}
As in Definition~\ref{def:true-positive-rate-region}, $\Gamma$ collects attainable vectors.
Fix feasible welfare shares $t$.
The upper boundary of $\Gamma$ records the largest population share that can hold these welfare shares, namely $R(t)$.
The lower boundary records the smallest population share that can hold the same welfare shares $t$.
If the selected individuals hold shares $t$, the remaining individuals hold shares $\mathbf 1-t$, where $\mathbf 1=(1,\ldots,1)^\top$.
By definition of $R$, the remaining individuals account for at most $R(\mathbf 1-t)$ of the population.
The selected individuals must therefore account for at least $1-R(\mathbf 1-t)$.
To attain this lower bound, take a population attaining $R(\mathbf 1-t)$ and select its complement by swapping $\phi_1$ and $\phi_2$.

\begin{example}[Complementary populations]
\label{ex:gini-complementary-populations}
\leavevmode\par
In the income-and-leisure setting of Example~\ref{ex:gini-joint-shares}, a selected population holding $20\%$ of total income and $30\%$ of total leisure leaves the remaining individuals with $80\%$ of total income and $70\%$ of total leisure.
Suppose $R(0.80,0.70)=0.90$, so the largest population share that can hold these remaining welfare shares is $90\%$.
Taking its complement gives a population share of $10\%$ holding the original welfare shares $(0.20,0.30)$.
No smaller population can hold these original shares, because its complement would exceed $90\%$ of the population while still holding welfare shares $(0.80,0.70)$, contradicting the stated maximum.
Thus the minimum population share at $(0.20,0.30)$ is $1-R(0.80,0.70)=0.10$.
\end{example}

Randomizing between the maximizing and minimizing allocation rules preserves welfare shares $t$ and yields every population share between the two extrema.
Writing $t_0$ for the population share, we therefore obtain
\begin{align}\begin{split}
\Gamma=\{(t_0,t):t\text{ is feasible},\ 1-R(\mathbf 1-t)\leq t_0\leq R(t)\}.
\end{split}\end{align}
The set $\Gamma$ is the Lorenz zonoid of the welfare distribution normalized by its coordinate means \citep[Definition 2.1]{koshevoy1996lorenz}.

To compute the volume of $\Gamma$, write $w_0(x)=(w_{1,0}(x),\ldots,w_{K,0}(x))^\top$ and $a(x)=(1,w_0(x)^\top)^\top$.
Then $T(\phi)=\E_P[\phi_2(X)a(X)]$.
Using \eqref{eq:gini-share-vector} and the representation \eqref{eq:zonoid-aumann} in Lemma~\ref{lem:zonoid-aumann} gives
\begin{align}\begin{split}\label{eq:gini-aumann}
\Gamma
&=\left\{\E_P[\phi_2(X)a(X)]:\phi\text{ is an allocation rule}\right\}\\
&=\int [0,a(x)]\dd P(x) 
=\E_A[[0,a(X)]].
\end{split}\end{align}
For normalization, we also consider the feasible welfare shares obtained by dropping the population coordinate from $\Gamma$.
This allows $T_0(\phi)$ to vary while retaining $(T_1(\phi),\ldots,T_K(\phi))^\top$.
Applying \eqref{eq:zonoid-aumann} to $w_0$ gives
\begin{align}\begin{split}\label{eq:gini-welfare-aumann}
&\left\{t\in[0,1]^K:(t_0,t)\in\Gamma\text{ for some }t_0\in[0,1]\right\}\\
& =\left\{\E_P[\phi_2(X)w_0(X)]:\phi\text{ is an allocation rule}\right\}\\
& =\int[0,w_0(x)]\dd P(x) 
=\E_A[[0,w_0(X)]].
\end{split}\end{align}
When this set has positive $K$-dimensional volume, define the multivariate Gini coefficient as the normalized Lorenz volume by
\begin{align}\begin{split}\label{eq:gini-normalized-volume}
G(w_1,\ldots,w_K)
\coloneqq\frac{\Vol_{K+1}(\Gamma)}{\Vol_K\bigl(\E_A[[0,w_0(X)]]\bigr)}.
\end{split}\end{align}
Proposition~\ref{prop:zonoid-vitale} evaluates both volumes, using ambient dimensions $K+1$ and $K$, respectively.

\begin{proposition}[Lorenz volumes]
\label{prop:gini-lorenz-volumes}
Under the assumptions above, 
\begin{align}\begin{split}\label{eq:gini-determinant-volumes}
\Vol_{K+1}(\Gamma)
=\frac{1}{(K+1)!}\E_{P^{K+1}}\left|
\det\bigl(a(X_1),\ldots,a(X_{K+1})\bigr)\right|.
\end{split}\end{align}
The feasible welfare-share set has volume
\begin{align}\begin{split}\label{eq:gini-welfare-volume}
\Vol_K\bigl(\E_A[[0,w_0(X)]]\bigr)
=\frac{1}{K!}\E_{P^K}\left|\det\bigl(w_0(X_1),\ldots,w_0(X_K)\bigr)\right|.
\end{split}\end{align}
When this volume is positive, the multivariate Gini coefficient in \eqref{eq:gini-normalized-volume} is
\begin{align}\begin{split}\label{eq:gini-determinant-ratio}
G(w_1,\ldots,w_K)
=\frac{\E_{P^{K+1}}\left|\det\bigl(a(X_1),\ldots,a(X_{K+1})\bigr)\right|}
{(K+1)\E_{P^K}\left|\det\bigl(w_0(X_1),\ldots,w_0(X_K)\bigr)\right|}.
\end{split}\end{align}
The draws $X_1,\ldots,X_{K+1}$ are independent, while variables within each individual may be dependent.
All expectations are finite, and \eqref{eq:gini-determinant-volumes} also holds when the volume is zero.
\end{proposition}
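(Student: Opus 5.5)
The plan is to apply Proposition~\ref{prop:zonoid-vitale} twice, once in ambient dimension $K+1$ and once in dimension $K$, and then take the ratio. The only hypothesis of that proposition is that the segment endpoint map is measurable and integrable, so the first task is to check this for both $a$ and $w_0$.

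Integrability is immediate. Each $w_k$ is nonnegative and measurable with a positive finite mean, so $w_{k,0}=w_k/\E_P[w_k(X)]$ is measurable and nonnegative with $\E_P[w_{k,0}(X)]=1$. Consequently
\begin{align}\begin{split}\nonumber
\E_P[\|w_0(X)\|]\leq\sum_{k=1}^K\E_P[w_{k,0}(X)]=K,
\quad
\E_P[\|a(X)\|]\leq1+K.
\end{split}\end{align}

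For \eqref{eq:gini-determinant-volumes}, I would start from \eqref{eq:gini-aumann}, which identifies $\Gamma$ with $\E_A[[0,a(X)]]$ through Lemma~\ref{lem:zonoid-aumann}. Applying Proposition~\ref{prop:zonoid-vitale} with dimension $K+1$, endpoint map $a$, and independent draws $X_1,\ldots,X_{K+1}$ gives the stated formula. That proposition also covers finiteness and the zero-volume case.

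For \eqref{eq:gini-welfare-volume}, I would first confirm the identification \eqref{eq:gini-welfare-aumann}. The projection onto the last $K$ coordinates of $\{\E_P[\phi_2(X)a(X)]\}$ is exactly $\{\E_P[\phi_2(X)w_0(X)]\}$, since both sets range over the same allocation rules $\phi$. This set equals $\E_A[[0,w_0(X)]]$ by Lemma~\ref{lem:zonoid-aumann} applied to $w_0$. Proposition~\ref{prop:zonoid-vitale} in dimension $K$ then yields the determinant formula.

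When the denominator is positive, \eqref{eq:gini-determinant-ratio} follows from dividing the two volumes. The ratio simplifies because $\frac{K!}{(K+1)!}=\frac{1}{K+1}$.

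The only point needing care, and it is minor, is that the independence structure is correct. The draws $X_r$ are i.i.d. across individuals, while the coordinates of $a(X_r)$ may be dependent within each individual. This causes no difficulty, because Proposition~\ref{prop:zonoid-vitale} places no restriction on the dependence among the coordinates of $a(X)$. No new obstacle is expected; the proof is a direct specialization.

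As a sanity check, take $K=1$. Then $|\det(a(X_1),a(X_2))|=|w_0(X_2)-w_0(X_1)|$, and the denominator is $(1+1)\,\E_P|w_0(X)|=2$. The formula therefore gives $\E|w_1(X_1)-w_1(X_2)|/(2\E w_1)$, which is the classical Gini mean-difference form.
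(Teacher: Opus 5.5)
Your proposal is correct and follows essentially the same route as the paper's proof. Both check integrability of $a$ and $w_0$ with the bounds $K+1$ and $K$, invoke Proposition~\ref{prop:zonoid-vitale} in ambient dimensions $K+1$ and $K$ through the Aumann identifications of $\Gamma$ and of the welfare-share set, and then divide using $K!/(K+1)!=1/(K+1)$. Your explicit projection argument and the $K=1$ check are extra details the paper leaves to \eqref{eq:gini-welfare-aumann} and Example~\ref{ex:gini-income}.
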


\begin{proof}\ 
Normalization gives $\E_P[\|w_0(X)\|]\leq K$ and $\E_P[\|a(X)\|]\leq K+1$, so both maps are integrable.
Lemma~\ref{lem:zonoid-aumann} therefore gives compactness, convexity, and \eqref{eq:gini-aumann}.
Proposition~\ref{prop:zonoid-vitale}, with ambient dimension $K+1$, yields \eqref{eq:gini-determinant-volumes}.

By \eqref{eq:gini-welfare-aumann}, the feasible welfare-share set is $\E_A[[0,w_0(X)]]$.
Applying the same proposition to $w_0$ in ambient dimension $K$ gives \eqref{eq:gini-welfare-volume}.
Dividing \eqref{eq:gini-determinant-volumes} by \eqref{eq:gini-welfare-volume} proves \eqref{eq:gini-determinant-ratio}.
\end{proof}

For fixed feasible welfare shares $t$, the largest and smallest attainable population shares differ by $R(t)+R(\mathbf 1-t)-1$.
Equation \eqref{eq:gini-determinant-ratio} averages this difference uniformly over the feasible welfare-share set with respect to $K$-dimensional volume.
Holding the same positive welfare shares with fewer individuals means higher average welfare in every coordinate.
The index therefore captures an aspect of inequality: the extent to which the same aggregate welfare shares can be held by populations of different sizes.
A larger index indicates a wider average gap between these population shares and hence greater inequality in this sense.
For income alone, this is precisely the ordinary Gini coefficient, as shown in Example~\ref{ex:gini-income}.
With multiple welfare variables, it measures this aspect of joint welfare concentration.

\begin{example}[Income inequality of opportunity]
\label{ex:gini-income}
For income alone, take $K=1$.
The determinant in \eqref{eq:gini-welfare-volume} is then simply $w_{1,0}(X)$.
Since $w_{1,0}(X)\geq0$ and $w_{1,0}(X)=w_1(X)/\E_P[w_1(X)]$, we obtain
\begin{align}\begin{split}\nonumber
\Vol_1\bigl(\E_A[[0,w_{1,0}(X)]]\bigr)
&=\E_P[|w_{1,0}(X)|]=\E_P[w_{1,0}(X)]=1.
\end{split}\end{align}
Expanding the determinant in \eqref{eq:gini-determinant-ratio} and substituting the normalized incomes gives
\begin{align}\begin{split}
G(w_1)=\Vol_2(\Gamma)
&=\frac12\E_{P^2}\left|\det\begin{pmatrix}
1&1\\
w_{1,0}(X_1)&w_{1,0}(X_2)
\end{pmatrix}\right|\\
&=\frac12\E_{P^2}\left|w_{1,0}(X_2)-w_{1,0}(X_1)\right|\\
&=\frac{\E_{P^2}|w_1(X_2)-w_1(X_1)|}{2\E_P[w_1(X)]}.
\end{split}\end{align}

To connect the expected income difference directly to the Lorenz curve, let $Q$ be the quantile function of normalized income $w_{1,0}(X)$.
The poorest population share $p$ holds income share $L(p)=\int_0^p Q(u)\dd u$, with $L(1)=1$.
For a random variable $U \sim U(0, 1)$, the uniform distribution on $(0, 1)$, we have that $Q(U)$ has the same distribution as $w_{1, 0}(X)$. 
Therefore, the independence of $X_{1}$ and $X_{2}$ gives 
\begin{align}\begin{split}\nonumber
\frac{\E_{P^2}|w_1(X_2)-w_1(X_1)|}{2\E_P[w_1(X)]}
&=\frac12\int_0^1\int_0^1|Q(v)-Q(u)|\dd u\dd v\\
&=\int_0^1\int_0^v[Q(v)-Q(u)]\dd u\dd v\\
&=\int_0^1[pQ(p)-L(p)]\dd p\\
&=\bigl[pL(p)\bigr]_0^1-2\int_0^1L(p)\dd p 
=1-2\int_0^1L(p)\dd p.
\end{split}\end{align}
The second line restricts the integral to $u\leq v$, where $Q(v)\geq Q(u)$, and cancels the factor $1/2$ by symmetry.
The third line evaluates the inner integral, and the fourth uses integration by parts with $L'(p)=Q(p)$ almost everywhere.
This argument also allows income distributions with atoms.

The same identity can also be read from the geometry of $\Gamma$.
Fix a population share $p$.
The smallest attainable income share is $L(p)$, while the largest is $1-L(1-p)$, obtained by taking the complement of the poorest fraction $1-p$.
Integrating the difference between these boundaries gives
\begin{align}\begin{split}\nonumber
G(w_1)=\Vol_2(\Gamma)
&=\int_0^1\bigl[1-L(1-p)-L(p)\bigr]\dd p\\
&=1-\int_0^1L(1-p)\dd p-\int_0^1L(p)\dd p 
=1-2\int_0^1L(p)\dd p.
\end{split}\end{align}
In the last line, the substitution $u=1-p$ gives $\int_0^1L(1-p)\dd p=\int_0^1L(u)\dd u$.
This is the ordinary Gini coefficient in \eqref{eq:gini-lorenz-index}.
\end{example}

\section{Conclusion}\label{conclusion section}

While the area under the curve (AUC) is one of the most prominent measures of binary prediction performance,
there is no widely accepted generalization regarding its multi-class correspondence. 
This paper provides a novel random polytope method to calculate the population volume under the surface (VUS). 
Our method builds on the theory of Minkowski mixed volumes in convex geometry, and can work for any underlying data generating process, including those with atoms or singular components. 
The population VUS takes a form of expectation of symmetric U-statistics kernel. 
We then propose a double/debiased machine learning U-statistic estimator of the VUS, derive its asymptotic properties, and develop an asymptotically valid inference procedure.
The volume calculation method can also be applied to other problems in which the sets of interest admit representations as projections of critical function sets.
Examples include the volume of the feasible error set across pre-defined groups and a generalized multivariate Gini coefficient.

\addcontentsline{toc}{section}{References}
\bibliography{overall}
\bibliographystyle{aer}

\end{document}